\newcommand\lvar{\mathcal{L}}
\newcommand{\mycomment}[1]{{\color{red}#1} }
\newtheorem{lemma}{Lemma}
\newtheorem{theorem}{Theorem}
\newtheorem{definition}{Definition}
\newcommand{\pskip}{\textmd{skip}}
\newcommand{\evo}[3]{\langle \dot{#1}=#2\& #3\rangle}
\newcommand{\pwait}{\textrm{wait}}
\newcommand{\exempt}[4]{#1 \unrhd \talloblong_{#2} (#3 \rightarrow #4)}
\newcommand{\external}[3]{\talloblong_{#1} (#2 \rightarrow #3)}
\newcommand{\fracN}[2]{\frac{\textstyle #1}{\textstyle #2}}
\newcommand{\chop}{\smallfrown}
\newcommand{\sm}[1]{{$#1$}}
\newcommand{\oomit}[1]{}
\newcommand{\seman}[1]{[\![#1 ]\!]}
\newcommand{\nseman}[3]{\seman{#1}_{#2, #3}^{\lvar}}
\newcommand{\nsemans}[4]{\seman{#1}_{#2, #3}^{#4}}
\newcommand{\rdy}{\mathit{rdy}}
\newcommand{\ommit}[1]{}
\newcommand{\IFE}[3]{\textrm{if}\ #1\ \textrm{then}\ #2\ \textrm{else}\ #3}
\newcommand{\spec}[3]{\{#1\}\ #2\ \{#3\}}
\newcommand{\joinop}{\,\textsf{@}\,}
\renewcommand{\wp}{\mathit{wlp}}
\newcommand{\var}{\mathit{var}}
\newcommand{\wvar}{\mathit{wvar}}
\newcommand{\freev}{\mathit{free}}
\renewcommand{\sp}{\mathit{sp}_{\|}}
\newcommand{\compat}{\operatorname{compat}}
\newcommand{\len}{\operatorname{len}}
\renewcommand{\evo}[3]{\langle \vec{\dot{#1}}=\vec{#2}\& #3\rangle}
\newcommand{\dL}{d$\mathcal{L}$}
\newcommand{\closeb}{\mathit{cl}}
\newcommand{\tracev}{\gamma}
\newcommand{\specs}{\mathit{specs}}
\newcommand{\xx}{\vec{x}}
\newcommand{\true}{\mathsf{true}}
\newcommand{\LieBound}{N_{p\mkern-2mu, \mkern-1mu\ff}}
\renewcommand{\vv}{\vec{v}}
\newcommand{\uu}{\vec{u}}
\newcommand{\sol}{\vec{\zeta}}
\newcommand{\define}{\ \widehat{=}\ }
\newcommand{\invt}{\Psi}
\newcommand{\ff}{\vec{f}}
\newcommand{\RR}{\mathbb{R}}
\newcommand{\NN}{\mathbb{N}}
\newcommand{\FODT}{{$\textrm{FOD}_{\Gamma}$}}
\begin{document}
%
\title{A Generalized Hybrid  Hoare Logic}

\author{Naijun Zhan}
\affiliation{
    \department{School of Computer Science}
    \institution{Peking University\&}
    \institution{Institute of Software, Chinese Academy of Sciences}
    \city{Beijing}
    \country{China}
}
\email{njzhan@pku.edu.cn}

\author{Xiangyu Jin}
\affiliation{
    \department{State Key Lab. of Computer Science}
    \institution{Institute of Software, Chinese Academy of Sciences\&}
    \institution{University of CAS}
    \city{Beijing}
    \country{China}
}
\email{jinxy@ios.ac.cn}

\author{Bohua Zhan}
\authornote{Corresponding authors}
\affiliation{
    \institution{Huawei Technologies Co. Ltd.}
    \city{Beijing}
    \country{China}
}
\email{bzhan@ios.ac.cn}

\author{Shuling Wang}
\authornotemark[1]
\affiliation{
    \department{National Key Laboratory of Space Integrated Information System}
    \institution{Institute of Software, Chinese Academy of Sciences\&}
    \institution{University of CAS}
    \city{Beijing}
    \country{China}
}
\email{wangsl@ios.ac.cn}

\author{Dimitar Guelev}
\affiliation{
    \institution{Institute of Mathematics and Informatics 
Bulgarian Academy of Sciences}
    \city{Sofia}
    \country{Bulgaria}
}
\email{gelevdp@math.bas.bg}

\begin{abstract}
Deductive verification of hybrid systems (HSs) increasingly attracts more attention in recent years because of its power and scalability, where a powerful specification logic for HSs is the cornerstone. 
Often, HSs are naturally modelled by concurrent processes that communicate with each other. However, existing specification logics cannot easily handle such models. In this paper, we present a specification logic and proof system for Hybrid Communicating Sequential Processes (HCSP), that extends CSP with ordinary differential equations (ODE) and interrupts to model interactions between continuous and discrete evolution. Because it includes a rich set of algebraic operators, complicated hybrid systems can be easily modelled in an algebra-like compositional way in HCSP. Our logic can be seen as 
a generalization and simplification of existing hybrid Hoare logics (HHL)  based on duration calculus (DC), as well as a conservative extension of existing Hoare logics for concurrent programs. Its assertion logic is the first-order theory of differential equations (FOD), together with 
assertions about traces recording communications, readiness, and continuous evolution.
We prove continuous relative completeness of the logic w.r.t. FOD, as well as discrete relative completeness in the sense  that continuous behaviour can be arbitrarily approximated by discretization. 
Finally,  we  
implement the above logic  
in Isabelle/HOL, and apply it to verify two case studies to illustrate the power and scalability of our logic.
\end{abstract}

\begin{CCSXML}
<ccs2012>
   <concept>
       <concept_id>10003752.10003790.10002990</concept_id>
       <concept_desc>Theory of computation~Logic and verification</concept_desc>
       <concept_significance>500</concept_significance>
       </concept>
   <concept>
       <concept_id>10003752.10003790.10011741</concept_id>
       <concept_desc>Theory of computation~Hoare logic</concept_desc>
       <concept_significance>500</concept_significance>
       </concept>
   <concept>
       <concept_id>10003752.10003753.10003765</concept_id>
       <concept_desc>Theory of computation~Timed and hybrid models</concept_desc>
       <concept_significance>500</concept_significance>
       </concept>
 </ccs2012>
\end{CCSXML}

\ccsdesc[500]{Theory of computation~Logic and verification}
\ccsdesc[500]{Theory of computation~Hoare logic}
\ccsdesc[500]{Theory of computation~Timed and hybrid models}

\keywords{Hybrid systems, hybrid Hoare logic, Hybrid CSP, proof system, relative completeness}

\maketitle

\section{Introduction}

{\em Hybrid systems} (HSs) exhibit combinations of discrete jumps and
continuous evolution.  Applications of HSs are everywhere in our daily life, e.g. in industrial automation, transportation, and so on. Many of these applications are \emph{safety-critical}.
How to design correct and reliable complex safety-critical HSs so that people can bet their life on them becomes a grand challenge in computer science and control theory \cite{Wing08}.

There have been a huge bulk of work on formal modeling and verification of HSs, e.g., \cite{Alur:1992,Manna93,Manna93b,Henzinger96,LSVW96,LPY02}, most of which are automata-based. In automata-based approaches,
HSs are modeled as {\em hybrid automata} (HA) ~\cite{Alur:1992,Manna93,Henzinger96},
and verified by computing reachable sets. Unfortunately, as shown in \cite{Henzinger96,HenzingerKPV98}, reachability for most of these systems is undecidable, except for some special 
linear \cite{AD94,LPY02} and non-linear \cite{Gan17} ones. Therefore, in practice, people mainly focus on how to over- and under-approximate reachable sets by using different geometric objects to represent abstract states like Ariadne~\cite{benvenuti2012ariadne} and CORA~\cite{althoff2015introduction}. \oomit{ e.g. polyhedral \cite{Frehse11}, zonotope \cite{Girard05,Girard08}, ellipsoid \cite{Kurzhanski00},  Taylor model \cite{Xin2014} etc., or by applying numeric computation based SMT solvers \cite{Eggers2012,dReach}, or by (sub) level-set together with optimization  \cite{Stipanovic03,Mitchell05,Mitchell07,xue2018,xue2020}, or by discretization together interval arithmetic analysis \cite{Dang10}, or by homemorphism \cite{Xue16}, and so on.} 
The advantages of automata-based approaches are twofold:  a HA describes the whole behavior of the system to be developed, and therefore it is very intuitive; and the verification is fully  automatic.  However, their disadvantages are also twofold: 
HA is analogous to state machines, with little support for structured description, and it is thus difficult to model complex systems; moreover,
 existing techniques for computing reachable sets are not scalable, particularly, most of them can only be used to compute reachable sets in bounded time or in unbounded time with constraints, for example with an invariant region in SpaceEx~\cite{frehse2011spaceex}.

Deductive verification presents an alternative way to ensure correctness of HSs. Several formalisms for reasoning about HSs have been proposed, including those based on differential dynamic logic (\dL)   \cite{Platzer2008,Platzer}, extended duration calculus~\cite{ChaochenRH92}, and hybrid Hoare logic (HHL)  ~\cite{LLQZ10,WZG12,GWZZ13}. For {\dL}, an initial version of the proof system~\cite{Platzer2008} is stated in terms of explicit solutions to ODEs, and is proved to be relatively complete with respect to first-order logic of differential equations (FOD), with the assumption that any valid statements involving ODEs can be proved. The ensuing work~\cite{Platzer12a} gives equivalent discrete versions of ODE rules using Euler approximation. Further work~\cite{Platzer10,Platzer12-cut,Platzer20} discusses additional rules, such as differential invariants, differential cut, and differential ghosts for reasoning about ODEs. {\dL} does not provide explicit operators for concurrency and communication, requiring these characteristics of HSs to be encoded within its sequential hybrid programs, meaning that a complicated HSs with communication and parallel composition cannot be specified and reasoned about in an explicit and  compositional way with {\dL}.

Process algebras such as Communicating Sequential Processes (CSP)~\cite{Hoare85} provide a natural compositional way to model systems with concurrency and communication. Extending classical Hoare logic \cite{Hoare69} to CSP has been studied by Apt \emph{et al.}~\cite{Apt80,Apt83} and by Levin and Gries~\cite{LevinG81}. In both works, Hoare triples for input and output statements are essentially arbitrary when reasoning about sequential processes. Then, for each pair of input and output commands in a parallel process, a cooperation test is introduced to relate the global state before and after communication. As with Owicki-Gries' method for shared-memory concurrency \cite{OG76a,OG76b,Owicki76}, these initial proof systems are not compositional, in the sense that there are proof obligations involving every pair of processes that communicate with each other. Moreover, auxiliary variables are usually needed to keep track of progress within each sequential process. The work by Soundararajan~\cite{Soundararajan84} proposes a compositional proof system for CSP. The main idea is to explicitly introduce a trace recording the history of communications, and allow assertions to also depend on traces. For stating the rule for parallel processes, a compatibility condition is defined, characterizing when the records of communications in different traces are consistent with each other.

For modelling HSs, CSP was extended to Hybrid CSP (HCSP) by introducing ordinary differential equations (ODE) to model continuous evolution and interruptions to model interactions between continuous and discrete evolution ~\cite{Jifeng:1994,Zhou:1996}. Because it has a rich set of algebraic operators, complicated HSs can be easily modelled  in an algebra-like compositional way in HCSP.
Like CSP, it is desired to invent a specification logic for HCSP in order to specify and reason about HSs with concurrency and communication in a compositional way. 
There are several attempts to extend Hoare logic to HCSP based on duration calculus (DC) \cite{ZHR91} in the literature.  \cite{LLQZ10} first extended Hoare logic to HCSP, in which postconditions and history formulas in terms of DC that specify invariant properties  are given separately; however it fails to define compositional rules for communications, parallelism and interruptions, let alone its logical foundations. 
Later, \cite{WZG12} 
proposed an assume-guarantee proof system for HCSP, still based on DC.  \cite{WZG12}'s proof system supports compositional reasoning,  but it 
cannot handle super-dense computation well, nor its logical foundations. Super-dense computation assumes that the
computers are much faster
than other physical devices and computation time of a control program  is therefore
negligible, although the temporal order of computations is still there. Super-dense computation provides a comfortable abstraction of HSs, and 
is thus commonly adopted in most models of HSs. To solve this problem,  \cite{GWZZ13} proposed another DC-based proof system 
for HCSP by introducing the notion of infinitesimal time to model computation cost of control events, that changes the semantics slightly in a counter-intuitive way.
In a word, the disadvantages of existing DC-based HHL include:
\begin{itemize}
    \item it cannot deal with all important features of HSs very well, such as compositionality and super-dense computation;
    \item DC-part complicates the verification very much, as it involves too much details of a system, lacks of abstraction. Our case studies demonstrate this point, particular, related to the implementation of theorem proving;
    \item it lacks of logical foundations, specifically completeness. 
\end{itemize}

\oomit{\mycomment{
Except for the respective limitation of each above work based on DC, all of them lack logical foundations of the proof systems such as relative completeness  proved in this paper; second, although  DC and its extension~\cite{ChaochenRH92} are able to express real-time and continuous  properties, the inference rules of the calculus are incomplete and how to prove the validity of DC formulas is a challenging work, especially in the implementation of the DC-based logic in theorem provers. }}

In this paper,  we re-investigate the proof theory for HCSP by providing a compositional proof system with continuous and discrete relative completeness. 
In order to deal with communication and parallelism in a compositional way, inspired by Soundararajan's work  \cite{Soundararajan84} and Hoare and He's work \cite{UTP1998}, we explicitly introduce the notion of  trace. Different from \cite{Soundararajan84} and \cite{UTP1998}, to deal with continuous evolution, in our setting traces record not only the history of communications and readiness of communication events, but also continuous behavior, which are uniformly called \emph{generalized events}.  So, 
unlike existing proof systems for HCSP based on DC, the assertion logic of  our proof system is first-order logic with assertions on traces. For expressing rules about parallel processes, we define a synchronization operator on traces.
Thus, our proof system can be seen as 
 a weakest liberal precondition calculus for sequential processes and a strongest postcondition calculus for parallel processes, together with rules for reasoning about synchronization on traces.

Clearly, our proof system  can be seen as 
a generalization and simplification of existing DC-based hybrid Hoare logics in the sense: 
\begin{itemize}
  \item first, discarding the DC part in the assertion logic simplifies the proof system in both theory and implementation; 
   \item second, the notions of \emph{generalized event}, \emph{trace} and \emph{trace synchronization} provide  the possibility that parallelism, typically,  
   communication synchronization and time synchronization, can be coped with uniformly in  a compositional way;
  \item finally,  super-dense computation is well naturally accounted by allowing that a trace can 
  contain many discrete events happening at the same instant, ordered by  their causal dependency. 
\end{itemize}
Our proof system is also essentially a conservative extension of Hoare logic for concurrent programs (including CSP) by allowing continuous events (wait events, the definition will be given in  Section~\ref{sec:OperationalSem}) and by introducing traces and trace synchronization so that 
non-interference in Owicki/Gries's logic \cite{OG76a,OG76b,Owicki76} and cooperativeness in Apt \emph{et al.}’s logic \cite{Apt80,Apt83} can be reasoned about explicitly. 

The completeness properties of the proof system are analogous to  continuous and discrete completeness results for differential dynamic logic {\dL} shown in~\cite{Platzer2008,Platzer12a}. However, the situation in HCSP is different in several ways. 
\begin{itemize}
    \item First,  for relative continuous completeness with respect to first-order logic of differential equations (FOD~\cite{Platzer12a}), we need to consider the encoding of the trace assertions involved with communications and continuous evolution.
    \item Second, for discrete completeness, the semantics for continuous evolution is different in HCSP compared to {\dL}, for termination, in HCSP only the state along the continuous evolution  at the boundary is considered, whereas 
all reachable states along it inside the boundary 
are considered in \dL. This gives rise to  extra difficulties for the detection of reaching the boundary, which we have to address in this paper.
\item Finally, for both continuous and discrete completeness, we need to consider the additional constructs in HCSP, e.g., interruptions and parallelism.
\end{itemize}

 
In summary, our main contributions are as follows:
\begin{itemize}
    \item  We present a generalized Hoare logic for HCSP using assertions about traces recording  communications, readiness, and continuous evolution.
    \item We show both continuous and discrete completeness of the proof system. 
    \item We implement the proof system, including the semantics of HCSP and the soundness of all proof rules, in Isabelle/HOL. We have applied the logic to verify two case studies: the simplified lunar lander control system, which involves ODE dynamics, interrupts and parallel composition; and a scheduler controlling  tasks executed in parallel, involving communications, interrupts, and complex control logics. 
\end{itemize}
All the Isabelle code can be found at https://github.com/AgHHL/gHHL.git, including the implementation of our prover and case studies.
 
\subsection{Related Work}
Inspired by the success of Floyd-Hoare logic \cite{Hoare69,Floyd67} in
the verification of sequential programs, several extensions of Floyd-Hoare logic to concurrent programs were proposed in the 1970s- 1980s.\oomit{including Hoare and Zhou's \cite{Hoare72,ZH81,Hoare81}, Owicki and Gries' \cite{OG76a,OG76b,Owicki76}, Apt \emph{et al.}'s \cite{Apt80,Apt83}, Lamport's \cite{Lamport77,Lamport80,Lamport84} and so on.}\oomit{In \cite{Hoare72}, Hoare first investigated how to specify and reason about partial correctness of concurrent programs with shared variables by extending Hoare logic.} Owicki and Gries\oomit{found Hoare's proof system is incomplete as it lacks an inference rule for auxiliary variables, thus they} established a complete proof system for concurrent programs with shared variables\oomit{(called \emph{resource regions})} in \cite{OG76b}, in which  an important notion called \emph{non-interference} was introduced in order to deal with parallelism. Owicki proved in \cite{Owicki76} the completeness of 
the proof system in the sense of Cook \cite{Cook78}. To this end, she introduced two types of auxiliary variables, \emph{traces} and \emph{clocks}, to show an interference-free property between any two component processes. In another direction, Zhou and Hoare \cite{ZH81,Hoare81}, and Apt \emph{et al.} \cite{Apt80,Apt83} studied proof systems for concurrent programs with message passing (i.e., CSP).  Particularly, the notion of \emph{cooperative}, similar to \emph{interference-free} in \cite{OG76a,OG76b}, was introduced in 
\cite{Apt80}. In \cite{Apt83}, using a similar technique to \cite{Owicki76}, Apt proved the completeness of the Hoare logic for CSP. Lamport and Schneider unified Hoare logics for sequential programs and different models of concurrency within a single paradigm, called \emph{generalized Hoare logic}~\cite{Lamport77,Lamport80,Lamport84}. Cousot and Cousot proved the relative completeness of generalized Hoare logic in \cite{CC89}. 

In \cite{Hooman94}, Hooman extended Hoare logic to timed CSP. Extension of Hoare logic to HCSP was first tried by Liu \emph{et al.} \cite{LLQZ10}. They established hybrid Hoare logic (HHL). In HHL, a hybrid Hoare assertion consists of four parts: \emph{pre-} and \emph{post-conditions}, a HCSP program, and a history formula  in terms of DC ~\cite{ZHR91} to specify invariants during continuous evolution. A compositional proof system of HHL using assume/guarantee is presented in \cite{WZG12,GWZZ13}, then \cite{GuelevWZ17} shows the relative completeness of the proof system w.r.t.  DC by exploiting the notion of negligible time to cope with super-dense computation. In \cite{WZZ15}, a theorem prover for HHL was implemented in Isabelle/HOL. However, reliance on 
DC complicates and prevents practical applications of these proof systems, as DC is not able to cope with general continuous behaviours of HSs.

In the literature, there are many other proof-theoretic approaches to verification of HSs, e.g., \dL~\cite{Platzer2008,Platzer18}, hybrid  action systems \cite{BackPP00}, and Hybrid Event-B \cite{Abrial2012}. As  mentioned, {\dL} extends dynamic logic \cite{Pratt76} to HSs by allowing modalities over hybrid programs, that extend classical sequential programs with ODEs to model continuous evolution. To deal with more complex behaviours of HSs, several variants of \dL~ were established, e.g., stochastic differential dynamic logic \cite{Platzer} and differential game logic \cite{Platzer15}. \cite{LP16} proposed differential refinement logic to cope with refinement among different levels of abstraction for a given HS;  \cite{Garlan14} investigated how to apply \dL~to define architecture of  CPSs. 
Recently, component-based verification methodologies developed in \dL~\cite{Muller2016,MullerMRSP18} introduced composition operators to split verification of systems into more manageable pieces.\oomit{\cite{Lunel19} extended them with parallel composition (true concurrency) using design patterns defined using the primitive constructs of \dL.} A temporal logic for \dL~based on trace semantics is proposed in~\cite{Platzer07-temporal}. 
However, as we argued, \dL\ cannot handle 
communication and concurrency in an explicitly compositional way, although there have been some attempts e.g. \cite{Lunel19}.

The rest of this paper is organized as follows: Sect.~\ref{sec:HCSP} recalls HCSP and Sect.~\ref{sec:OperationalSem} defines its operational semantics;  Sect.~\ref{sec:hoare} defines our notion of HHL, its continuous and discrete relative completeness are proved in Sect.~\ref{sec:completeness}\&\ref{sec:discretecomplete}; Sect.~\ref{sec:examples} provides implementation and the case studies, and Sect.~\ref{sec:Conclusion} concludes this paper.  

\section{HCSP}
\label{sec:HCSP}

Hybrid CSP (HCSP)~\cite{Jifeng:1994,Zhou:1996} is a formal language for describing HSs, which is an extension of CSP  by introducing  timing constructs, interrupts, and ODEs for modelling continuous evolution. Exchanging data among processes is described solely by communications, so no shared variable is allowed between different processes in parallel and each program variable is local to the respective sequential component.

The syntax for HCSP is given as follows:
\[  
\begin{array}{lll}  
	c  & ::= & 
	\pskip \mid
	x := e \mid 
	ch?x \mid 
	ch!e \mid 
	c_1 \sqcup c_2 \mid
	c_1; c_2 \mid c^* \mid\\
	& &
	\IFE{B}{c_1}{c_2} \mid 
	\evo{x}{e}{B} \mid  
  \exempt{\evo{x}{e}{B}}{i\in I}{ch_i*}{c_i} \\
	pc &::=& c \mid pc_1\|_{cs} pc_2 
\end{array} 
\]
where \sm{c} and \sm{c_i} are sequential processes, and \sm{pc} and 
 \sm{pc_i} are parallel processes; \sm{x} is a variable over reals in a process, \sm{\dot{x}} stands for its derivative w.r.t. time, $\vec{x}$ (resp. $\vec{e}$) is a vector of variables (expressions) and its $i$-th element is denoted by $x_i$ (resp. $e_i$); $ch$ is a channel name, $ch_i*$ is  either an input event $ch_i?x$ or output  event $ch_i!e$, and $I$ is a non-empty set of indices; $B$ and $e$ are Boolean and arithmetic expressions, respectively; $cs$ is a set of channel names. 

The meaning of $\pskip$, assignment, internal choice, sequential composition, and 
conditional statement are as usual. We explain the intuitive meaning of the additional constructs as follows:
\begin{itemize}
	\item $ch?x$ receives a value along the channel $ch$ and assigns it to variable $x$. It may block waiting for the corresponding output to be ready.
	\item $ch!e$ sends the value of $e$ along $ch$. It may block waiting for the corresponding input to be ready. 
	\item  The repetition \sm{c^*} executes $c$ for a nondeterministic finite number of times.
 
    \item \sm{\evo{x}{e}{B}} is a continuous evolution, which evolves continuously according to the differential equation \sm{\vec{\dot{x}}=\vec{e}} as long
	as the \emph{domain} $B$ holds, and terminates whenever $B$ becomes false. In order to guarantee the existence and uniqueness of the solution of any differential equation, we require as usual that 
the right side $\vec{e}$ satisfies the \emph{local Lipschitz condition} on the interval at
least up to the boundary of the ODE. This is necessary to guarantee that
the ODE has a unique solution before escaping the boundary.  
	\item 
	\sm{\exempt{\evo{x}{e}{B}}{i\in I}{ch_i*}{c_i}} behaves like \sm{\evo{x}{e}{B}}, except it is preempted as soon as one of the communication events \sm{ch_i*} takes place, and then is followed by the corresponding \sm{c_i}. Notice that, if the continuous evolution terminates (reaches the boundary of $B$) before a communication in \sm{\{ch_i*\}_{i\in I}} occurs, the process terminates immediately. 
 
	\item \sm{pc_1\|_{cs} pc_2} behaves as  $pc_1$ and $pc_2$ run independently except that all communications along the set of common channels $cs$ between $pc_1$ and $pc_2$ are synchronized. We assume $pc_1$ and $pc_2$ do not share any variables, nor does the same channel with the same direction (e.g. $ch!$) occur in $pc_1$ and $pc_2$.
\end{itemize}
When there is no confusion in the context, we will use $c$ to represent either sequential or parallel process below. 
The other constructs of HCSP in \cite{Jifeng:1994,Zhou:1996} are definable, e.g., $\pwait\ d$,  external choice, \emph{etc}. 

\section{Operational semantics} \label{sec:OperationalSem}
In this section, we define a big-step semantics for HCSP, and prove that it is equivalent to the existing small-step semantics~\cite{Zhan17}. Both semantics are defined using the new concept of \emph{generalized events}, in order to fit better with the trace-based development later.

We begin by defining some basic notions. A state for a sequential process is a mapping from variable names to real values. A state for a parallel process \sm{pc_1\|_{cs} pc_2} is a pair $(s_1,s_2)$, where $s_1$ is a state for $pc_1$ and $s_2$ is a state for $pc_2$. This enforces the requirement that $pc_1$ and $pc_2$ do not share variables.
A \emph{ready set} is a set of channel directions (of the form \sm{ch*}), indicating that these channel directions are waiting for communication. Two ready sets \sm{\rdy_1} and \sm{\rdy_2} are \emph{compatible}, denoted by \sm{\compat(\rdy_1, \rdy_2)}, if there does not exist a channel $ch$ such that \sm{ch?\in \rdy_1 \wedge ch!\in \rdy_2} or \sm{ch!\in \rdy_1 \wedge ch?\in \rdy_2}. 
 Intuitively, it means input and output along the same channel cannot be both waiting at the same time, that is, as soon as both channel ends are ready, a communication along the channel occurs immediately. This is consistent with the maximal synchronization semantics as in CSP \cite{Hoare85} and Calculus of Communicating Systems (CCS) \cite{Milner80}.

A \emph{generalized event}  defines an one step execution of observable behavior for a (sequential or parallel) HCSP process. There are two types of generalized events:
\begin{itemize}
	\item A \emph{communication event} is of the form $\langle ch\triangleright , v\rangle$, where $\triangleright$ is one of $?$, $!$, or nothing, indicating input, output, and synchronized input/output (IO) event, respectively, and $v$ is a real number indicating the transferred value. 
	\item A \emph{continuous event}, also called 
	a \emph{wait event},  is of the form $\langle d, \vec{p}, \rdy\rangle$ representing an evolution of time length $d>0$. Here $\vec{p}$ is a continuous function from $[0,d]$ to states, that is the unique  trajectory of the considered ODE starting with the given initial state,  and $\rdy$ is the set of channels that are waiting for communication during this period. We allow $d=\infty$ to indicate waiting for an infinite amount of time.
\end{itemize}
We will use \sm{s[\vec{x} \mapsto \vec{e}]} to stand for another state, which is the same as $s$ except for mapping each $x_i$ to the corresponding $e_i$ and \sm{s(e)} (resp. \sm{s(B)} for Boolean expression $B$) for the evaluated value of $e$ (resp. \sm{B}) under \sm{s}. Given a formula $\phi$ (resp. expression $e$), \sm{\phi[\vec{e}/\vec{x}]} (resp. \sm{e[\vec{e}/\vec{x}]}) denotes substituting all  $x_i$ occurring in  $\phi$ (resp. $e$) by  $e_i$ simultaneously. 

\subsection{Trace-based Big-step Semantics of HCSP}
\label{sec:tracesemantics}
A \emph{trace} is an ordered sequence of generalized events  as the result of executing a (sequential or parallel) HCSP process. We denote the empty trace by $\epsilon$, the trace for a deadlocked process by $\delta$, and use the operator $^\chop$ to denote concatenating  two traces.

\paragraph{Trace synchronization:}
Given two traces $tr_1, tr_2$ and a set of shared channels $cs$, we define the relation  to \emph{synchronize} $tr_1$ and $tr_2$  over $cs$ and result in a trace $\textit{tr}$, denoted by  $tr_1 \|_{cs} tr_2 \Downarrow \textit{tr}$. The derivation rules defining synchronization are given in Fig.~\ref{fig:rule-synchronization}. Rule SyncIO defines that when the two parallel traces start with the compatible input and output events along the same channel \sm{ch} (that belongs to the common channel set \sm{cs}) with  same value \sm{v}, then a synchronized event $\langle ch, v\rangle$ is produced,  followed by the synchronization of the remainders of the traces. Rule NoSyncIO defines the case when an external communication event occurs on one side. 
Rules SyncEmpty1-3 deal with the cases where one side terminates earlier than the other side. As in CSP and CCS, 
 a parallel process terminates only if all subprocesses in parallel terminate. 
Rules SyncWait1-2 define the cases when both sides are wait events, i.e. waiting for a communication or evolving w.r.t. an ODE, then the wait events of the same length will synchronize if they have compatible ready sets.  
\begin{figure*} \centering 
{
\begin{eqnarray*} 
 &  \prftree[r]{SyncIO}{ch\in cs}{tr_1\|_{cs} tr_2 \Downarrow \textit{tr}}
{\langle ch!,v\rangle^\chop tr_1 \|_{cs} \langle ch?,v\rangle ^\chop tr_2 \Downarrow \langle ch,v\rangle ^\chop \textit{tr}}
\quad
\prftree[r]{NoSyncIO}{ch\notin cs}{tr_1\|_{cs} tr_2 \Downarrow \textit{tr}}
{\langle ch\triangleright,v\rangle^\chop tr_1 \|_{cs} tr_2 \Downarrow \langle ch\triangleright,v\rangle^\chop \textit{tr}} &\\[1mm]
\oomit{\quad
\prftree[r]{SyncEmpty1}{ch\notin cs}{tr_1\|_{cs} \epsilon  \Downarrow \textit{tr}}
{\langle ch\triangleright,v\rangle^\chop tr_1 \|_{cs} \epsilon  \Downarrow \langle ch\triangleright,v\rangle^\chop \textit{tr}} &\\[1mm]}
 &\prftree[r]{SyncEmpty1}{ch\in cs}{ }
{\langle ch\triangleright,v\rangle^\chop tr_1 \|_{cs} \epsilon  \Downarrow \delta}
\quad
\prftree[r]{SyncEmpty3}{\epsilon \|_{\mathit{cs}} \epsilon \Downarrow \epsilon}  & \\[1mm]
& \prftree[r]{SyncEmpty2}{tr_1\|_{cs} \epsilon  \Downarrow \textit{tr}}
{\langle d,\vec{p}_1,\rdy_1\rangle^\chop tr_1 \|_{cs} \epsilon  \Downarrow \langle d,\vec{p}_1,\rdy_1\rangle^\chop \textit{tr}} & \\[1mm]
& \prftree[r]{SyncWait1 }
{tr_1\|_{cs} tr_2 \Downarrow \textit{tr}}
{\compat(\rdy_1,\rdy_2) \quad d >0 }
{\langle d,\vec{p}_1,\rdy_1\rangle^\chop tr_1 \|_{cs} \langle d,\vec{p}_2,\rdy_2\rangle^\chop tr_2 \Downarrow \langle d, \vec{p}_1\uplus \vec{p}_2,(\rdy_1\cup \rdy_2)-cs\rangle^\chop \textit{tr}} & \\[1mm] 
& \prftree[r]{SyncWait2 }
{\begin{array}{cc}
		d_1>d_2>0 \quad \langle d_1-d_2,\vec{p}_1(\cdot+d_2),\rdy_1 \rangle^\chop tr_1\|_{cs} tr_2\Downarrow \textit{tr} \quad
		\compat(\rdy_1,\rdy_2)
\end{array}}
{\langle d_1,\vec{p}_1,\rdy_1\rangle^\chop tr_1 \|_{cs} \langle d_2,\vec{p}_2,\rdy_2\rangle^\chop tr_2 \Downarrow \langle d_2,\vec{p}_1\uplus \vec{p}_2,(\rdy_1\cup \rdy_2)-cs \rangle^\chop \textit{tr}} & 
\end{eqnarray*} }
\caption{Trace synchronization rules}
\label{fig:rule-synchronization} \end{figure*}

\paragraph{Big-step semantics}
A big-step semantics for HCSP is presented in Fig.~\ref{fig:full-big-semantics}. This leads naturally to the trace-based Hoare logic in Sect.~\ref{sec:hoare}. For a sequential process $c$, its semantics is defined as a mapping, denoted by  \sm{(c,s)\Rightarrow (s',\textit{tr})}, which 
means that $c$ carries initial state $s$ to final state $s'$ with resulting trace $\textit{tr}$ \footnote{It can also be defined as a mapping from an initial state and trace to a final state and trace. The two  definitions are equivalent.}.

\begin{figure*}
{ 
\begin{eqnarray*}
& \prftree[r]{SkipB}{(\pskip,s) \Rightarrow (s,\epsilon)} 
	\quad \prftree[r]{AssignB}{(x:=e,s) \Rightarrow (s[x \mapsto e],\epsilon)} 
& \\[1mm]
 & \prftree[r]{OutB1}{(ch!e, s) \Rightarrow (s, \langle ch!, s(e)\rangle)}
\quad
\prftree[r]{OutB2}{(ch!e, s) \Rightarrow (s, \langle d, I_s, \{ch!\}\rangle ^\chop \langle ch!, s(e)\rangle)} & \\[1mm]
 & \prftree[r]{OutB3}{(ch!e, s) \Rightarrow (s, \langle \infty,I_s,\{ch!\}\rangle )} 
 \quad
\prftree[r]{InB1}{(ch?x, s) \Rightarrow (s[x \mapsto v], \langle ch?, v\rangle) } & \\[1mm]
 	& \prftree[r]{InB2}{(ch?x, s) \Rightarrow (s[x \mapsto v], \langle d, I_s, \{ch?\} \rangle^\chop \langle ch?, v\rangle)} 
	\quad\prftree[r]{InB3}{(ch?x,s)\Rightarrow (s, \langle \infty,I_s,\{ch?\})}
& \\[1mm]
&
\prftree[r]{RepB1}{ }
	{(c^*, s) \Rightarrow (s, \epsilon)}
 \quad
\prftree[r]{RepB2}{ (c, s)  \Rightarrow (s_1, \textit{tr}_1) \quad (c^*, s_1) \Rightarrow (s_2, \textit{tr}_2)  }
	{ (c^*, s) \Rightarrow (s_2, {\textit{tr}_1}^\chop \textit{tr}_2)} 
& \\[1mm]
&\prftree[r]{CondB1}{s_1(B)}{(c_1, s_1) \Rightarrow (s_2, \textit{tr})}
	{(\IFE{B}{c_1}{c_2}, s_1) \Rightarrow (s_2, \textit{tr})} 
 \quad
 \prftree[r]{CondB2}{\neg s_1(B)}{(c_2, s_1) \Rightarrow (s_2, \textit{tr})}
 {(\IFE{B}{c_1}{c_2}, s_1) \Rightarrow (s_2, \textit{tr})} 
& \\[1mm]
 & \prftree[r]{IChoiceB1}{(c_1, s_1) \Rightarrow (s_2, \textit{tr})}
	{(c_1 \sqcup c_2, s_1) \Rightarrow (s_2, \textit{tr})} 
	\quad\prftree[r]{IChoiceB2}{(c_2, s_1) \Rightarrow (s_2, \textit{tr})}
	{(c_1 \sqcup c_2, s_1) \Rightarrow (s_2, \textit{tr})} 
& \\[1mm]
& \prftree[r]{SeqB}{(c, s_1) \Rightarrow (s_2, tr_1)}{(c_2, s_2) \Rightarrow (s_3, tr_2)}
	{(c_1; c_2, s_1) \Rightarrow (s_3, {tr_1}^\chop tr_2)} 
 \quad
\prftree[r]{ContB1}{
		\neg B(s)}
	{(\evo{x}{e}{B}, s) \Rightarrow (s, \epsilon)}
& \\[1mm]
&\prftree[r]{ContB2}{
		\begin{array}{cc}
			\vec{p} \mbox{ is a solution of the ODE $\vec{\dot{x}}=\vec{e}$} \quad d>0\\
			\vec{p}(0)=s(\vec{x})\quad \forall t\in[0,d).\,s[\vec{x}\mapsto \vec{p}(t)](B) \quad \neg s[\vec{x}\mapsto \vec{p}(d)](B)
	\end{array}}
	{(\evo{x}{e}{B}, s) \Rightarrow (s[\vec{x}\mapsto \vec{p}(d)], \langle d, \vec{p}, \emptyset\rangle) }& \\[1mm]
 	& 
\prftree[r]{IntB1}{i\in I}{ch_i*=ch!e}{(c_i,s_1) \Rightarrow (s_2,\textit{tr})}
	{(\exempt{\evo{x}{e}{B}}{i\in I}{ch_i*}{c_i}, s_1) \Rightarrow
		(s_2, \langle ch!, s_1(e)\rangle^\chop \textit{tr})} 
		& \\[1mm]
 	& \prftree[r]{IntB2}{
		\begin{array}{cc}
			\vec{p} \mbox{ is a solution of the ODE $\vec{\dot{x}}=\vec{e}$} \quad \vec{p}(0)=s_1(\vec{x}) \quad d>0\\
			\forall t\in[0,d).\,s_1[\vec{x}\mapsto \vec{p}(t)](B) \quad i\in I\quad ch_i*=ch!e\quad (c_i,s_1[\vec{x}\mapsto \vec{p}(d)])\Rightarrow (s_2,\textit{tr})
	\end{array}}
	{(\exempt{\evo{x}{e}{B}}{i\in I}{ch_i*}{c_i}, s_1) \Rightarrow (s_2,
		\langle d, \vec{p}, \{\cup_{i\in I} ch_i*\}\rangle^\chop \langle ch!,s_1[\vec{x}\mapsto \vec{p}(d)](e)\rangle ^\chop \textit{tr})}
  & \\[1mm]
 	& 
		\prftree[r]{IntB3}{i\in I}{ch_i*=ch?y}{(c_i,s_1[y\mapsto v]) \Rightarrow (s_2,\textit{tr})}
	{(\exempt{\evo{x}{e}{B}}{i\in I}{ch_i*}{c_i}, s_1) \Rightarrow
		(s_2, \langle ch?, v\rangle^\chop \textit{tr})}& \\[1mm]
 	& \prftree[r]{IntB4}{
		\begin{array}{cc}
			\vec{p} \mbox{ is a solution of the ODE $\vec{\dot{x}}=\vec{e}$} \quad \vec{p}(0)=s_1(\vec{x}) \quad d>0\\
			\forall t\in[0,d).\,s_1[\vec{x}\mapsto \vec{p}(t)](B) \quad i\in I\quad ch_i*=ch?y\quad (c_i, s_1[\vec{x}\mapsto \vec{p}(d),y\mapsto v])\Rightarrow (s_2,\textit{tr})
	\end{array}}
	{(\exempt{\evo{x}{e}{B}}{i\in I}{ch_i*}{c_i}, s_1) \Rightarrow (s_2,
		\langle d, \vec{p}, \{\cup_{i\in I} ch_i*\}\rangle^\chop \langle ch?,v\rangle ^\chop \textit{tr})}& \\[1mm]
 	& \prftree[r]{IntB5}{\neg s_1(B)}{(\exempt{\evo{x}{e}{B}}{i\in I}{ch_i*}{c_i}, s_1) \Rightarrow (s_1, \epsilon)}
& \\[1mm]
 & \prftree[r]{IntB6}{
		\begin{array}{cc}
			\vec{p} \mbox{ is a solution of the ODE $\vec{\dot{x}}=\vec{e}$} \quad\vec{p}(0)=s_1(\vec{x})\quad d>0\\
			 \forall t\in[0,d).\,s_1[\vec{x}\mapsto \vec{p}(t)](B) \quad \neg s_1[\vec{x}\mapsto \vec{p}(d)](B) 
	\end{array}}
	{(\exempt{\evo{x}{e}{B}}{i\in I}{ch_i*}{c_i}, s_1) \Rightarrow (s_1[\vec{x}\mapsto \vec{p}(t)],\langle d,\vec{p},\{\cup_{i\in I} ch_i*\}\rangle)}
	& \\[1mm]
 	& 
	\oomit{
	\[ \prftree[r]{RepB1}{(c^*, s) \Rightarrow (s, \epsilon)} 
	\quad\prftree[r]{RepB2}{(c, s_1) \Rightarrow (s_2, tr_1)}{(c^*, s_2) \Rightarrow (s_3, tr_2)}
	{(c^*, s_1) \Rightarrow (s_3, {tr_1}^\chop tr_2)} \] }
	\prftree[r]{ParB}{(c_1, s_1) \Rightarrow (s_1', tr_1)}{(c_2, s_2) \Rightarrow (s_2', tr_2)}{tr_1\|_{cs}tr_2 \Downarrow \textit{tr}}
{(c_1 \|_{cs} c_2, s_1\uplus s_2) \Rightarrow (s_1'\uplus s_2', \textit{tr})} & 
\end{eqnarray*}
}
\caption{Big-step operational semantics}
\label{fig:full-big-semantics}
\end{figure*}

\begin{itemize}
	\item For output \sm{ch!e}, there are three cases depending on whether the communication occurs immediately, waits for some finite time, or waits indefinitely.  Input \sm{ch?x} is defined similarly.  \sm{I_s} represents a constant mapping from \sm{[0, d]} to the initial state \sm{s}.
	\item $c^*$ can be understood in a standard way. 
	\item The execution of \sm{\evo{x}{e}{B}} produces a trajectory of  \sm{\vec{\dot{x}}=\vec{e}}  with the given initial state, represented as a wait event. \sm{B} must become false at the end of the trajectory, while remaining true before that. During the evolution, the ready set is empty. 

	\item For interruption \sm{\exempt{\evo{x}{e}{B}}{i\in I}{ch_i*}{c_i}}, communications have a chance to interrupt up to and including the time at which the ODE reaches the boundary. 
	
\item The semantics of parallel composition is defined by the semantics of its components. Given \sm{s_1} for \sm{c_1} and \sm{s_2} for \sm{c_2},  \sm{s_1\uplus s_2} denotes the pair of states \sm{(s_1,s_2)} as a state for \sm{c_1\|_{cs}c_2}.   
\end{itemize}

\paragraph{Example}
A possible trace for \sm{(\pwait\,1; ch!3)} is \sm{tr_1 = \langle 1, I_{s_1}, \emptyset\rangle ^\chop
\langle ch!, 3\rangle} (here we use \sm{I_s} to denote a constant trajectory that maps each time point in the interval to state $s$). A possible trace for \sm{ch!3} is \sm{tr_2 = \langle 1, I_{s_2}, \{ch!\} \rangle ^\chop \langle ch!, 3\rangle}. A possible trace for \sm{ch?x} is \sm{tr_3 = \langle 1, I_{s_3}, \{ch?\} \rangle ^\chop
\langle ch?, 3\rangle}.
Traces \sm{tr_1} and \sm{tr_3} can synchronize with each other, and form the trace \sm{\langle 1, I_{s_1}\uplus I_{s_3}, \{ch?\} \rangle ^\chop \langle ch, 3\rangle}. \oomit{This corresponds to running \sm{(\pwait\,1; ch!3)} and \sm{ch?x} in parallel, resulting in waiting for one time unit and then communicating 3 along channel \sm{ch}.} However, \sm{tr_2} and \sm{tr_3}  \emph{cannot} synchronize with each other, as the ready sets \sm{\{ch!\}} and \sm{\{ch?\}} are not compatible. \oomit{This corresponds to the fact that running \sm{ch!3} and \sm{ch?x} in parallel cannot result in waiting for one time unit and then communicating.}

\subsection{Equivalence with Small-step Semantics}
We now rephrase the existing small-step semantics in \cite{Zhan17} using generalized events. Each transition in the small-step semantics is of the form $(c,s)\stackrel{e}{\to}(c',s')$, meaning that starting from process $c$
and state $s$, executing one step yields an event $e$ 
(either $\tau$ or a communication event or a wait event) and ends with process $c'$ and state $s'$. $\tau$ represents an internal discrete event.
The full semantics is in Fig.~\ref{fig:full-small-step}.

\begin{figure*}
{
\begin{eqnarray*} &\prftree[r]{AssignS}{(x:=e,s) \xrightarrow{\tau} (\pskip,s[x \mapsto e])}
\quad
\prftree[r]{SeqS1}{(c_1,s) \xrightarrow{e} (c_1',s')}{(c_1;c_2, s) \xrightarrow{e} (c_1';c_2,s')}
\quad
\prftree[r]{SeqS2}{(\pskip;c, s) \xrightarrow{\tau} (c,s)} 
& \\[0.3mm]
&
\prftree[r]{CondS1}{s(b)}{(\IFE{b}{c_1}{c_2}, s) \xrightarrow{\tau} (c,s)}
\quad
\prftree[r]{CondS2}{\neg s(b)}{(\IFE{b}{c_1}{c_2}, s) \xrightarrow{\tau} (c_2,s)}
& \\[0.3mm]
&
\prftree[r]{OutS1}{(ch!e, s) \xrightarrow{\langle ch!,s(e)\rangle} (\pskip, s)}
\quad
\prftree[r]{OutS2}{(ch!e, s) \xrightarrow{\langle d,I_s,\{ch!\}\rangle}  (ch!e, s)}
& \\[0.3mm]
& 
\prftree[r]{OutS3}{(ch!e, s) \xrightarrow{\langle \infty,I_s,\{ch!\}\rangle} (\pskip, s)} \quad
\prftree[r]{InS1}{(ch?x, s) \xrightarrow{\langle ch?, v\rangle } (\pskip, s[x\mapsto v])}
& \\[0.3mm]
&
\prftree[r]{InS2}{(ch?x, s) \xrightarrow{\langle d,I_s,\{ch?\}\rangle}(ch?x, s)}
\quad
\prftree[r]{InS3}{(ch?x, s) \xrightarrow{\langle \infty,I_s,\{ch?\}\rangle}  (\pskip, s)}
& \\[0.3mm]
& 
\prftree[r]{RepS1}
{ }
{(c^*, s) \xrightarrow{\tau} (\pskip, s)} 
\quad 
\prftree[r]{RepS2}
{ (c,s)\xrightarrow{e} (c',s')  }
{(c^*, s) \xrightarrow{e} (c';c^*, s')} 
 & \\[0.3mm]
&
\prftree[r]{IChoiceS1}{(c_1 \sqcup c_2, s) \xrightarrow{\tau} (c_1, s)}
\quad
\prftree[r]{IChoiceS2}{(c_1 \sqcup c_2, s) \xrightarrow{\tau} (c_2, s)}
\quad
\prftree[r]{ContS1}
{\neg s(B)}
{(\evo{x}{e}{B}, s) \xrightarrow{\tau} (\pskip, s)}
& \\[0.3mm]
&
\prftree[r]{ContS2}{\begin{array}{cc}
\vec{p} \mbox{ is a solution of the ODE $\vec{\dot{x}}=\vec{e}$} \quad
\vec{p}(0) = s(\vec{x}) \quad \forall t\in[0,d).\,s[\vec{x}\mapsto \vec{p}(t)](B)
\end{array}}
{(\evo{x}{e}{B}, s) \xrightarrow{\langle d,\vec{p},\emptyset\rangle}  (\evo{x}{e}{B}, s[\vec{x} \mapsto \vec{p}(d)])}
& \\[0.3mm]
&  
\prftree[r]{IntS1}
{\begin{array}{cc}
\vec{p} \mbox{ is a solution of the ODE $\vec{\dot{x}}=\vec{e}$} \quad
\vec{p}(0) = s(\vec{x}) \quad \forall t\in[0,d).\,s[\vec{x}\mapsto \vec{p}(t)](B)
\end{array}}
{(\exempt{\evo{x}{e}{B}}{i\in I}{ch_i*}{c_i}, s)\xrightarrow{\langle d,\vec{p},\{\cup_{i\in I} ch_i*\} \rangle} 
(\exempt{\evo{x}{e}{B}}{i\in I}{ch_i*}{c_i}, s[\vec{x} \mapsto \vec{p}(d)])}
& \\[0.3mm]
&  
\prftree[r]{IntS2}
{\neg s(B)}
{(\exempt{\evo{x}{e}{B}}{i\in I}{ch_i*}{c_i}, s) \xrightarrow{\tau}(\pskip, s)}
\ 
\prftree[r]{IntS3}
{i\in I}{ch_i* = ch!e}
{(\exempt{\evo{x}{e}{B}}{i\in I}{ch_i*}{c_i}, s)
\xrightarrow{\langle ch!,s(e)\rangle}  {(c_i, s)}}
& \\[0.3mm]
& 
\prftree[r]{IntS4}
{i\in I}{ch_i* = ch?x}
{(\exempt{\evo{x}{e}{B}}{i\in I}{ch_i*}{c_i}, s) \xrightarrow{\langle ch?,v\rangle} {(c_i, s[x\mapsto v])}}
\ 
\prftree[r]{ParTauS}
{(c_1,s_1) \xrightarrow{\tau} (c_1',s_1')}
{(c_1\|_{cs} c_2,s_1\uplus s_2) \xrightarrow{\tau} (c_1'\|_{cs} c_2, s_1'\uplus s_2)}
& \\[0.3mm]
& 
\prftree[r]{ParDelayS}
{\begin{array}{cc}
		\compat(\rdy_1, \rdy_2) 
        \quad
		(c_1,s_1) \xrightarrow{\langle d,\vec{p}_1,\rdy_1\rangle}  (c_1',s_1')
		\quad 
		(c_2,s_2) \xrightarrow{\langle d,\vec{p}_2,\rdy_2\rangle}  (c_2',s_2')
\end{array}}
{(c_1\|_{cs} c_2, s_1\uplus s_2)
\xrightarrow{\langle d,\vec{p}_1\uplus \vec{p}_2,\rdy_1\cup \rdy_2\rangle}  (c_1'\|_{cs} c_2', s_1'\uplus s_2')}
& \\[0.3mm]
	&
\prftree[r]{ParPairS1}
{ch \in cs} {(c_1,s_1) 
\xrightarrow{\langle ch!, v\rangle} (c_1',s_1')}
{(c_2,s_2) \xrightarrow{\langle ch?, v\rangle}  (c_2',s_2')}
{(c_1\|_{cs} c_2, s_1\uplus s_2) 
\xrightarrow{\langle ch,v\rangle} 
	(c_1'\|_{cs} c_2', s_1'\uplus s_2')}
& \\[0.3mm]
 &  
\prftree[r]{ParPairS2}
{ch \in cs} {(c_1,s_1) \xrightarrow{\langle ch?, v\rangle}  (c_1',s_1')}
{(c_2,s_2) \xrightarrow{\langle ch!, v\rangle}  (c_2',s_2')}
{(c_1\|_{cs} c_2, s_1\uplus s_2) \xrightarrow{\langle ch,v\rangle} 
	(c_1'\|_{cs} c_2', s_1'\uplus s_2')}
& \\[0.3mm]
	&
	\prftree[r]{ParUnpairS1}
{ch \notin cs} {(c_1,s_1) \xrightarrow{\langle ch\triangleright,v\rangle}  (c_1',s_1')}
{(c_1\|_{cs} c_2, s_1\uplus s_2) \xrightarrow{\langle ch\triangleright,v\rangle} 
	(c_1'\|_{cs} c_2, s_1'\uplus s_2)}
\ \prftree[r]{ParUnpairS2}
{ch \notin cs} {(c_2,s_2) \xrightarrow{\langle ch \triangleright,v\rangle}  (c_2',s_2')}
{(c_1\|_{cs} c_2, s_1\uplus s_2) \xrightarrow{\langle ch\triangleright,v\rangle}
	(c_1\|_{cs} c_2', s_1\uplus s_2')}
& 
\end{eqnarray*}
}
\caption{Small-step operational semantics for sequential and parallel processes}
\label{fig:full-small-step}
\end{figure*}

We use \sm{(c,s) \stackrel{\textit{\textit{tr}}}{\rightarrow^*}(c',s')} to indicate that starting from process \sm{c} and state \sm{s}, a sequence of small-step transitions results in process \sm{c'} and state \sm{s'}, and \sm{\textit{\textit{tr}}} collects the events that occurred in between, ignoring any \sm{\tau} events. Also, we use \sm{\textit{\textit{tr}} \leadsto_r \textit{\textit{tr}}'} to mean \sm{\textit{\textit{tr}}'} can be obtained from \sm{\textit{\textit{tr}}} by combining some of the neighboring events that can be joined together like connecting two continuous events as one.

The following theorem asserts the equivalence between big-step and small-step semantics. \begin{theorem} \label{thm:small-big-semantics}
    i) \sm{(c,s)\Rightarrow (s',\textit{\textit{tr}})} implies 
	\sm{(c,s) \stackrel{\textit{\textit{tr}}}{\rightarrow^*} (\pskip,s')}.
	ii)  \sm{(c,s)\stackrel{\textit{\textit{tr}}}{\rightarrow^*}(\pskip,s')} implies    \sm{\textit{\textit{tr}}\leadsto_r \textit{\textit{tr}}'$ and $(c,s)\Rightarrow (s',\textit{\textit{tr}}')} for some $\textit{\textit{tr}}'$.
\end{theorem}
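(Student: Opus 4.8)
The plan is to prove the two inclusions separately by induction, matching each semantics against the other rule by rule. Part (i) [big-step $\Rightarrow$ small-step] I would establish by structural induction on the derivation of $(c,s)\Rightarrow(s',tr)$, exhibiting for each big-step rule a concrete chain of small-step transitions whose non-$\tau$ events concatenate to $tr$. Part (ii) [small-step $\Rightarrow$ big-step] I would establish by induction on the \emph{length} of the transition sequence $(c,s)\stackrel{tr}{\rightarrow^*}(\pskip,s')$, the inductive step resting on a single-transition \emph{prepend lemma}; the reorganization relation $\leadsto_r$ enters precisely because small-step runs split continuous evolution into pieces that big-step keeps whole.

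For part (i) the purely sequential cases are direct. SkipB is the empty chain; AssignB is one $\tau$-step (AssignS), contributing nothing to the trace; OutB1--3 are realized by OutS1, by OutS2 then OutS1, and by OutS3, with InB* symmetric; SeqB and RepB2 thread the induction hypotheses through the context using SeqS1/RepS2 and collapse $\pskip;c_2$ via SeqS2; ContB2 is matched by one ContS2 step (same $d$ and $\vec{p}$) followed by ContS1 once $B$ fails, so the single big-step wait event $\langle d,\vec{p},\emptyset\rangle$ is produced directly and \emph{no} event-merging is needed in this direction. The one substantial case is ParB, for which I would prove a \emph{composition lemma}: if $(c_1,s_1)\stackrel{tr_1}{\rightarrow^*}(\pskip,s_1')$, $(c_2,s_2)\stackrel{tr_2}{\rightarrow^*}(\pskip,s_2')$, and $tr_1\|_{cs}tr_2 \Downarrow tr$, then $(c_1\|_{cs}c_2, s_1\uplus s_2)\stackrel{tr}{\rightarrow^*}(\pskip, s_1'\uplus s_2')$. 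This is proved by induction on the synchronization derivation: internal $\tau$-steps of either component are fired by ParTauS and leave the synchronization untouched; SyncIO/NoSyncIO correspond to ParPairS1/2 and ParUnpairS1/2; SyncWait1/2 correspond to ParDelayS, the SyncWait2 premise splitting the longer wait so that the two components delay by the same amount; and SyncEmpty1--3 cover the cases where one side has terminated (a $\delta$ outcome being exactly a deadlock, hence a non-terminating run outside the scope of the theorem). The subtlety here is that each component trace interleaves its observable events with $\tau$-steps, so I first factor each $\stackrel{tr_i}{\rightarrow^*}$ run as ``$\tau$-steps, then the leading observable event, then the rest'' before consulting the synchronization rule.

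For part (ii) the crux is the prepend lemma: if $(c,s)\xrightarrow{e}(c_1,s_1)$ and $(c_1,s_1)\Rightarrow(s',u)$, then $(c,s)\Rightarrow(s',w)$ for some $w$ with $e\chopp u \leadsto_r w$ (a $\tau$-labelled $e$ contributing nothing). I would prove it by induction on the derivation of the single step $(c,s)\xrightarrow{e}(c_1,s_1)$, inverting the big-step derivation of the continuation when required: SeqS1 inverts SeqB and recurses on the first component, RepS2 inverts RepB2, and AssignS, CondS1/2, IChoiceS1/2, OutS*/InS*, IntS1/3/4 are immediate. The interesting case is ContS2, where the prepended wait event must merge with the leading wait event of the continuation's big-step derivation (itself produced by ContB2, or terminated by ContB1); this merge is exactly one application of the combining captured by $\leadsto_r$, and is the sole reason (ii) holds only up to $\leadsto_r$. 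The parallel case needs the converse of the composition lemma, a \emph{decomposition lemma} that splits a small-step run of $c_1\|_{cs}c_2$ into two component runs and a synchronization derivation, obtained by projecting each combined transition (ParTauS, ParDelayS, ParPairS1/2, ParUnpairS1/2) onto the two sides. In the outer induction I then decompose $(c,s)\xrightarrow{e}(c_1,s_1)\stackrel{u}{\rightarrow^*}(\pskip,s')$, apply the induction hypothesis to get $u\leadsto_r u'$ and $(c_1,s_1)\Rightarrow(s',u')$, apply the prepend lemma to obtain $(c,s)\Rightarrow(s',w)$ with $e\chopp u'\leadsto_r w$, and conclude $tr=e\chopp u\leadsto_r w$ using that $\leadsto_r$ is transitive and a congruence with respect to prepending a fixed event.

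I expect the main obstacles to be twofold. First, the parallel cases in both directions, since the correspondence between the trace-level synchronization rules of Fig.~\ref{fig:rule-synchronization} and the configuration-level parallel rules of Fig.~\ref{fig:full-small-step} only becomes visible after factoring out interleaved $\tau$-steps and, in the SyncWait2 case, after splitting an unequal-length wait into matching halves; the decomposition lemma in particular requires care to recover a legal synchronization derivation from an arbitrary interleaving. Second, the wait-event merging in part (ii): getting the prepend lemma's statement exactly right so that the chain of $\langle d_k,\vec{p}_k,\emptyset\rangle$ events arising from repeated ContS2 steps collapses to the single big-step wait event as prescribed by $\leadsto_r$, and verifying that this merging interacts correctly with interruption and with ParDelayS, where wait events carry nonempty ready sets.
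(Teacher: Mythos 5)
Your proposal is correct and largely mirrors the paper's own proof. Part (i) is the paper's Theorem~\ref{thm:big2small}, proved exactly as you describe: induction on the big-step derivation, with the parallel case handled by your composition lemma in the form of an inner induction on the derivation of $tr_1\|_{cs}tr_2\Downarrow tr$, factoring each component run into ``$\tau$-steps, leading observable event, rest'' and splitting the longer wait in the SyncWait2 case via the paper's Lemma~\ref{lem:small-step-split}. Part (ii) is Theorem~\ref{thm:small2big}, whose inductive step is precisely your prepend lemma (the paper states it inline as a ``further induction'' on the derivation of the single small step), with ReduceMerge performing the wait-event merging you identify as the sole source of $\leadsto_r$. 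The one genuine organizational difference is the parallel case of (ii): you propose a global decomposition lemma that projects a combined run into two component runs plus a synchronization derivation and then recombines via ParB, whereas the paper inducts directly on the combined derivation, case-splitting on ParDelayS, ParPairS1/2 and ParUnpairS1/2. Both routes work, but either one hinges on the paper's Lemma~\ref{lem:reduce_sync} --- synchronization respects reduction: from $tr_1\leadsto_r tr_1'$, $tr_2\leadsto_r tr_2'$ and $tr_1\|_{cs}tr_2\Downarrow tr$ there exists $tr'$ with $tr\leadsto_r tr'$ and $tr_1'\|_{cs}tr_2'\Downarrow tr'$. In your plan this is indispensable at the recombination step, since the component big-step traces obtained from the inductive hypothesis are only $\leadsto_r$-reducts of the projected traces, so the decomposition's synchronization derivation does not apply to them directly; you gesture at this (``merging interacts correctly with ParDelayS'') but leave the lemma implicit, and it deserves to be isolated and proved, as it is exactly where compatibility of ready sets and the SyncWait1/2 side conditions must be re-checked after merges change wait-event durations.
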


Before stating the equivalence between big-step and small-step semantics, we need some preliminary concepts.
The transitive closure of small-step semantics is defined as follows. We use
\sm{(c,s) \stackrel{\textit{tr}}{\rightarrow^*}(c',s')} to indicate that starting from process \sm{c} and state \sm{s}, a sequence of small-step transitions  results in process \sm{c'} and state \sm{s'}, and \sm{\textit{tr}} collects the events that occurred in between, ignoring any \sm{\tau} events. The formal definition is given by the following set of rules. 
\[ \prftree{(c,s)\stackrel{\epsilon}{\rightarrow^*}(c,s)} 
\quad
\prftree{(c,s)\xrightarrow{\tau} (c',s')} {(c',s')\stackrel{\textit{tr}}{\rightarrow^*}(c'',s'')}
{(c,s)\stackrel{\textit{tr}}{\rightarrow^*}(c'',s'')}
\quad
\prftree{(c,s)\xrightarrow{e} (c',s')}
{(c',s')\stackrel{\textit{tr}}{\rightarrow^*}(c'',s'')}
{(c,s)\stackrel{e^\chop \textit{tr}}{\rightarrow^*}(c'',s'')} \]

The following lemma will be useful later.
\begin{lemma}\label{lem:small-step-split}
	If $(c,s) \xrightarrow{\langle d,\vec{p},\rdy\rangle}  (c',s')$, and $0<d'<d$, then there exists $c''$ and $s''$ such that
	$(c,s) \xrightarrow{\langle d',\vec{p},\rdy\rangle}  (c'',s'')$ and
	$(c'',s'') \xrightarrow{\langle d-d',\vec{p}(\cdot+d'),\rdy\rangle}  (c',s')$.
\end{lemma}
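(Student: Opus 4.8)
The plan is to prove the lemma by structural induction on the derivation of the transition $(c,s) \xrightarrow{\langle d,\vec{p},\rdy\rangle} (c',s')$, with a case analysis on the last rule applied. Only a handful of rules in Fig.~\ref{fig:full-small-step} can emit a wait event $\langle d,\vec p,\rdy\rangle$, namely the base rules OutS2, InS2, ContS2 and IntS1, together with the propagation rules SeqS1, RepS2 and the parallel rule ParDelayS; these are exactly the cases I would treat. Throughout I write $\vec p(\cdot+d')$ for the shifted trajectory $t\mapsto\vec p(t+d')$ on $[0,d-d']$.

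For the base cases OutS2 and InS2 the trajectory is the constant $I_s$ and the process is unchanged, so I would take $c''=c$ and $s''=s$; since a constant trajectory is invariant under shifting ($I_s(\cdot+d')=I_s$), applying the same rule twice (once for duration $d'$, once for $d-d'$) yields the two required transitions. For ContS2 — the heart of the base analysis — I would set $c''=\evo{x}{e}{B}$ and $s''=s[\vec x\mapsto\vec p(d')]$. The first transition reuses the same solution $\vec p$ on $[0,d']$, and its domain condition $\forall t\in[0,d').\,s[\vec x\mapsto\vec p(t)](B)$ follows from the original one since $[0,d')\subseteq[0,d)$. For the second transition I would use $\vec q:=\vec p(\cdot+d')$; here I would invoke the fact that the right-hand side $\vec e$ depends only on the state variables and not explicitly on time, so the ODE is autonomous and the time-shift $\vec q$ is again a solution, with $\vec q(0)=\vec p(d')=s''(\vec x)$ and domain condition holding because $t+d'\in[d',d)\subseteq[0,d)$. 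The final state matches since $s''[\vec x\mapsto\vec q(d-d')]=s[\vec x\mapsto\vec p(d)]=s'$. The interruption rule IntS1 is handled identically, the process and ready set being carried along unchanged.

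The propagation cases are routine applications of the induction hypothesis. For SeqS1, I would split the premise transition of $c_1$ by the IH into an intermediate $(c_1'',s'')$, and re-apply SeqS1 to both halves, taking $c''=c_1'';c_2$. For RepS2 I would likewise split the premise for $c$, apply RepS2 to the first half (yielding $c''=c'';c^*$) and SeqS1 to the second. The main case, and the one I expect to require the most care, is ParDelayS: here I would apply the IH separately to the two component transitions $(c_1,s_1)\xrightarrow{\langle d,\vec p_1,\rdy_1\rangle}(c_1',s_1')$ and $(c_2,s_2)\xrightarrow{\langle d,\vec p_2,\rdy_2\rangle}(c_2',s_2')$, crucially using the \emph{same} split point $d'$ for both, obtaining intermediates $(c_1'',s_1'')$ and $(c_2'',s_2'')$. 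Recombining with ParDelayS then gives the two desired transitions, provided I verify that the compatibility condition $\compat(\rdy_1,\rdy_2)$ is preserved (it is, since the IH leaves the ready sets unchanged) and that shifting commutes with pairing, $(\vec p_1\uplus\vec p_2)(\cdot+d')=\vec p_1(\cdot+d')\uplus\vec p_2(\cdot+d')$, so that the merged shifted trajectory on the second half is exactly $(\vec p_1\uplus\vec p_2)(\cdot+d')$; the intermediate configuration is $c''=c_1''\|_{cs}c_2''$ with $s''=s_1''\uplus s_2''$.

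Finally, I would remark that the unbounded cases OutS3 and InS3 (with $d=\infty$) are covered the same way: a finite prefix of duration $d'$ is produced by OutS2/InS2 and the remaining infinite wait by OutS3/InS3, again using $I_s(\cdot+d')=I_s$. The only genuinely non-formal ingredient is the time-invariance argument in the continuous/interrupt cases, which is why I would single out the ODE being autonomous as the key fact, and the bookkeeping of trajectory pairing in ParDelayS as the step most likely to demand explicit care.
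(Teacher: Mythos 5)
Your proof is correct and takes essentially the same approach as the paper, whose entire proof is the one-line remark ``by case analysis on the small-step rules used to derive $(c,s)\xrightarrow{\langle d,\vec{p},\rdy\rangle}(c',s')$.'' You simply supply the details that remark elides---shift-invariance of constant trajectories for OutS2/InS2, autonomy of the ODE for ContS2/IntS1, induction through SeqS1/RepS2, and the shared split point with preservation of $\compat(\rdy_1,\rdy_2)$ and distribution of shifting over $\uplus$ in ParDelayS---and these are exactly the right ingredients.
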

\begin{proof}
	By case analysis on the small-step rules used to derive $(c,s)\xrightarrow{\langle d,\vec{p},\rdy\rangle} (c',s')$.
\end{proof}

The theorem going from big-step to small-step semantics is then stated as follows. For the parallel case, we use a single $\pskip$ to stand for the parallel composition of $\pskip$ programs.
\begin{theorem}[Big-step to Small-step] \label{thm:big2small}
	For any big-step relation \sm{(c,s)\Rightarrow (s',\textit{tr})}, we have the small-step relation
	\sm{(c,s) \stackrel{\textit{tr}}{\rightarrow^*} (\pskip,s')}.
\end{theorem}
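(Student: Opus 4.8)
The plan is to proceed by induction on the derivation of the big-step relation $(c,s) \Rightarrow (s', \textit{tr})$, showing for each rule of Fig.~\ref{fig:full-big-semantics} that a matching sequence of small-steps can be assembled. Before the induction I would record three routine congruence lemmas for the transitive closure $\stackrel{\textit{tr}}{\rightarrow^*}$, each proved by a short induction on the length of the step sequence. (i) \emph{Composability}: $(c,s)\stackrel{tr_1}{\rightarrow^*}(c',s')$ and $(c',s')\stackrel{tr_2}{\rightarrow^*}(c'',s'')$ give $(c,s)\stackrel{{tr_1}^\chop tr_2}{\rightarrow^*}(c'',s'')$. (ii) \emph{Sequential context}: $(c_1,s)\stackrel{tr}{\rightarrow^*}(c_1',s')$ implies $(c_1;c_2,s)\stackrel{tr}{\rightarrow^*}(c_1';c_2,s')$ by applying SeqS1 to each step; combined with SeqS2 one obtains $(c_1;c_2,s)\stackrel{tr}{\rightarrow^*}(c_2,s')$ whenever $c_1$ terminates to $\pskip$. (iii) \emph{Repetition unfolding}: $(c,s)\stackrel{tr}{\rightarrow^*}(\pskip,s')$ implies $(c^*,s)\stackrel{tr}{\rightarrow^*}(c^*,s')$, obtained by firing RepS2 on the first step and then reusing (ii) together with SeqS2.

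With these in hand most inductive cases are immediate. The atomic cases (SkipB, AssignB, the three output and three input cases, ContB1) follow directly from the corresponding small-step rule, turning a single $\tau$- or event-transition into a one- or two-step closure. SeqB uses (i) and (ii); RepB1 uses RepS1 and RepB2 uses (iii) plus (i); the conditional and internal-choice cases use CondS1/CondS2 and IChoiceS1/IChoiceS2. For ContB2 I would apply ContS2 to produce the single wait event of length $d$ reaching $s[\vec x\mapsto \vec p(d)]$, and then, since $B$ is false there, apply ContS1 to terminate. The interruption cases IntB1--IntB6 are handled analogously: a leading wait event is produced by IntS1, a subsequent communication by IntS3 or IntS4 (or termination at the boundary by IntS2), after which the induction hypothesis on the triggered branch $c_i$ supplies the remaining sequence, glued on by (i).

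The genuinely hard case is ParB, which I would isolate as a separate lemma proved by induction on the synchronization derivation $tr_1 \|_{cs} tr_2 \Downarrow \textit{tr}$: assuming $(c_1,s_1)\stackrel{tr_1}{\rightarrow^*}(\pskip,s_1')$ and $(c_2,s_2)\stackrel{tr_2}{\rightarrow^*}(\pskip,s_2')$, build $(c_1\|_{cs}c_2,s_1\uplus s_2)\stackrel{tr}{\rightarrow^*}(\pskip,s_1'\uplus s_2')$. First I would absorb any leading internal ($\tau$) steps on each side independently via ParTauS, reducing to the situation where each side has either terminated to $\pskip$ or is poised to emit the visible event at the head of its trace. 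Then I would case on the synchronization rule: SyncIO is matched by ParPairS1/ParPairS2, NoSyncIO by ParUnpairS1/ParUnpairS2, SyncWait1 by ParDelayS, and the empty rules SyncEmpty1--3 by the fact that the corresponding component has reached $\pskip$ with empty residual trace.

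The one subtlety is SyncWait2, where one component waits strictly longer than the other and the two delays cannot be matched directly by ParDelayS. Here I would invoke Lemma~\ref{lem:small-step-split} to split the longer wait event into a segment of the shorter length followed by a remainder; ParDelayS then synchronizes the two equal-length segments, and the induction hypothesis on the shortened synchronization derivation (whose first trace now begins with the residual wait) finishes the case via (i). I expect the bookkeeping of interleaving $\tau$ steps and keeping the residual processes and states aligned across the two sequences to be the most delicate part of the argument, but the structural recursion on $\Downarrow$ together with the small-step splitting lemma makes it go through.
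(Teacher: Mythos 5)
Your proposal matches the paper's proof essentially step for step: induction on the big-step derivation for the sequential case, then a nested induction on the synchronization derivation $tr_1\|_{cs}tr_2\Downarrow tr$ for the parallel case, absorbing $\tau$-steps with ParTauS, matching SyncIO, NoSyncIO and SyncWait1 with ParPairS1/2, ParUnpairS1/2 and ParDelayS, and invoking Lemma~\ref{lem:small-step-split} exactly where the paper does to handle SyncWait2. The only difference is presentational: you spell out the composability, sequential-context and repetition-unfolding congruence lemmas that the paper subsumes under its ``standard'' cases.
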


\begin{proof}
	First prove the result where $c$ is a sequential program, by induction on the derivation of $(c,s)\Rightarrow (s',tr)$ using big-step semantics. We focus on the operations special to HCSP. The cases for skip, assign, sequence, conditional, internal choice, and repetition are standard.
	
	\begin{itemize}
		\item Input: there are three rules InB1, InB2 and InB3 in big-step semantics. InB1 corresponds to applying InS1, InB2 corresponds to applying InS2 followed by InS1, InB3 corresponds to applying InS3.
		
		\item Output: there are three rules OutB1, OutB2 and OutB3 in big-step semantics. OutB1 corresponds to applying OutS1, OutB2 corresponds to applying OutS2 followed by OutS1, OutB3 corresponds to applying OutS3.
		
		\item Repetition: there are two rules RepB1 and RepB2 in big-step semantics, corresponding to the cases that $c$ executes for zero or more than one times respectively.  
		
		\item Continuous: there are two rules ContB1 and ContB2 in big-step semantics. ContB1 corresponds to applying ContS1, ContB2 corresponds to applying ContS2 followed by ContS1.
		
		\item Interrupt: there are six rules in the big-step semantics. IntB1 corresponds to applying IntS3, IntB2 corresponds to applying IntS1 followed by IntS3, IntB3 corresponds to applying IntS4, IntB4 corresponds to applying IntS1 followed by IntS4. IntB5 corresponds to applying IntS2, IntB6 corresponds to applying IntS1 followed by IntS2.
	\end{itemize}
	
	Next, we prove the result when $c$ is a parallel program. Hence, we assume $(c_1,s_1)\Rightarrow (s_1',tr_1)$, $(c_2,s_2)\Rightarrow (s_2',tr_2)$ and $tr_1\|_{cs}tr_2 \Downarrow tr$. By induction, we have $(c_1,s_1) \stackrel{tr_1}{\rightarrow^*} (\pskip,s_1')$ and $(c_2,s_2) \stackrel{tr_2}{\rightarrow^*} (\pskip,s_2')$. We now induct on the derivation of $tr_1\|_{cs}tr_2 \Downarrow tr$. The cases correspond to the rules in Fig.~\ref{fig:rule-synchronization}.
	\begin{itemize}
		\item SyncIO: We have $ch\in cs$, $tr_1=\langle ch!,v\rangle^\chop tr_1'$, $tr_2=\langle ch?,v\rangle^\chop tr_2'$, $tr=\langle ch,v\rangle^\chop tr'$ and $tr_1'\|_{cs}tr_2'\Downarrow tr'$. From $(c_1,s_1)\stackrel{tr_1}{\rightarrow^*} (\pskip,s_1')$, we obtain $c_1', c_1'', s_1'', s_1'''$ such that
		\[ \begin{array}{l}
		(c_1,s_1)\stackrel{\epsilon}{\rightarrow^*}(c_1',s_1''), \quad
		(c_1',s_1'')\xrightarrow{\langle ch!,v\rangle} (c_1'',s_1''') \quad\mbox{ and}\quad
		(c_1'',s_1''')\stackrel{tr_1'}{\rightarrow^*}(\pskip, s_1').
		\end{array} \]
		 Likewise, from $(c_2,s_2)\stackrel{tr_2}{\rightarrow^*}(\pskip,s_2')$, we obtain $c_2', c_2'', s_2'', s_2'''$ such that
		\[ \begin{array}{l}
		(c_2,s_2)\stackrel{\epsilon}{\rightarrow^*}(c_2',s_2''), \quad
		(c_2',s_2'')\xrightarrow{\langle ch?,v\rangle}  (c_2'',s_2''') \quad\mbox{ and} \quad
		(c_2'',s_2''')\stackrel{tr_2'}{\rightarrow^*}(\pskip, s_2').
		\end{array} \]
		Now by applying rule ParTauS repeatedly, rule ParPairS1 and the inductive hypothesis, we obtain
		\[
		\begin{array}{l}
			(c_1\|_{cs}c_2, s_1\uplus s_2) \stackrel{\epsilon}{\rightarrow^*} (c_1'\|_{cs}c_2', s_1''\uplus s_2''), \\
			(c_1'\|_{cs}c_2', s_1''\uplus s_2'') \xrightarrow{\langle ch,v\rangle}  (c_1''\|_{cs}c_2'', s_1'''\uplus s_2'''), \\
			(c_1''\|_{cs}c_2'', s_1'''\uplus s_2''') \stackrel{tr'}{\rightarrow^*} (\pskip\|_{cs}\pskip, s_1'\uplus s_2').
		\end{array}
	 	\]
	 	They combine together to give $(c_1\|_{cs}c_2, s_1\uplus s_2) \stackrel{tr}{\rightarrow^*}(\pskip\|_{cs}\pskip, s_1'\uplus s_2')$, as desired. The other direction SyncIO' is similar, where the small-step rule ParPairS2 is used.
	 	
	 \item NoSyncIO: the proof is similar to the SyncIO case, except we only need to work on the left side. The corresponding small-step rule is ParUnpairS1. 
	 
	 \item SyncWait1: the proof is similar to the SyncIO case. The corresponding small-step rule is ParDelayS.
	 
	 \item SyncWait2: We have $d_1>d_2$, $tr_1=\langle d_1,\vec{p}_1,\rdy_1\rangle^\chop tr_1'$, $tr_2=\langle d_2,\vec{p}_2,\rdy_2\rangle^\chop tr_2'$, $tr=\langle d_2,\vec{p}_1\uplus \vec{p}_2,(\rdy_1\cup\rdy_2)-cs\rangle^\chop tr'$, $\compat(\rdy_1,\rdy_2)$ and
	 \[ \langle d_1-d_2, \vec{p}_1(\cdot+d_2),\rdy_1\rangle^\chop tr_1 \|_{cs} tr_2 \Downarrow tr \]
	 
	 From $(c,s_1) \stackrel{tr_1}{\rightarrow^*} (\pskip,s_1')$, we obtain $c',c'',s_1'',s_1'''$ such that
	 \[
	 \begin{array}{l}
	 (c,s_1) \stackrel{\epsilon}{\rightarrow^*} (c',s_1''), \quad
	 (c',s_1'') \xrightarrow{\langle d_1,\vec{p}_1,\rdy_1\rangle}  (c'',s_1''') \quad \mbox{and} \quad
	 (c'',s_1''') \stackrel{tr_1'}{\rightarrow^*} (\pskip,s_1').
	 \end{array} \]
	 By Lemma~\ref{lem:small-step-split}, we obtain $c''',s''''$ such that
	 \[ \begin{array}{l}
	 (c',s_1'') \xrightarrow{\langle d_2,\vec{p}_1,\rdy_1\rangle}  (c''',s'''') \quad\mbox{and} \quad
	 (c''',s'''') \xrightarrow{\langle d_1-d_2,\vec{p}_1(\cdot+d_2),\rdy_1\rangle}  (c'',s_1''').
	 \end{array} \]
	 The rest follows as before, by applying rule ParTauS repeatedly, rule ParDelayS, and the inductive hypothesis. 
	\end{itemize}
\end{proof}

To state the theorem going from small-step to big-step semantics, we need to define the concept of \emph{reduction} from one trace to another. We use \sm{\textit{tr} \leadsto_r \textit{tr}'} to mean that \sm{\textit{tr}'} can be obtained from \sm{\textit{tr}} by combining some of the neighboring blocks that can be joined with each other. Note reduction is not unique: there is no obligation to perform all possible joins. The formal definition is as follows:

\[ \prftree[r]{ReduceEmpty}{\epsilon \leadsto_r \epsilon}\quad
\prftree[r]{ReduceMerge}{h_1(d_1) = h_2(0)}
{
\begin{array}{ll}
     \langle d_1,\vec{p}_1,\rdy\rangle ^\chop \langle d_2,\vec{p}_2,\rdy\rangle ^\chop \textit{tr}
	\leadsto_r \\
	\qquad\langle d_1+d_2,\vec{p}_1 \cdot \vec{p}_2,\rdy \rangle ^\chop \textit{tr} 
\end{array} }\]
\[ \prftree[r]{ReduceCons}{tr_1 \leadsto_r tr_2}
{e^\chop tr_1 \leadsto_r e^\chop tr_2}
\quad
\prftree[r]{ReduceTrans}{tr_1 \leadsto_r tr_2}{tr_2 \leadsto_r tr_3}
{tr_1 \leadsto_r tr_3} \]

A key lemma states that the synchronization of traces respects the reduction relation. More precisely:

\begin{lemma}\label{lem:reduce_sync}
	Given \sm{tr_1\leadsto_r tr_1'}, \sm{tr_2\leadsto_r tr_2'} and \sm{tr_1\|_{cs} tr_2\Downarrow \textit{tr}}, then there exists \sm{\textit{tr}'} such that \sm{\textit{tr}\leadsto_r \textit{tr}'} and \sm{tr_1'\|_{cs} tr_2'\Downarrow \textit{tr}'}.
\end{lemma}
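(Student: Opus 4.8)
The plan is to reduce the two-sided statement to a one-sided one and then compose. Concretely, I would first establish the auxiliary claim (L1): \emph{if} $tr_1 \leadsto_r tr_1'$ \emph{and} $tr_1 \|_{cs} tr_2 \Downarrow \textit{tr}$, \emph{then} there is a $\textit{tr}'$ with $tr_1' \|_{cs} tr_2 \Downarrow \textit{tr}'$ and $\textit{tr} \leadsto_r \textit{tr}'$. Since both the synchronization relation and $\leadsto_r$ are symmetric in their two arguments (using the dual rules SyncIO$'$, NoSyncIO$'$, SyncEmpty$'$ and SyncWait2$'$ obtained by swapping the two sides, which hold by the evident symmetry of the definitions), the mirror claim (L2) that reduces $tr_2$ while keeping $tr_1$ fixed follows by the same argument. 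The lemma is then obtained by chaining: apply L1 to get $tr_1' \|_{cs} tr_2 \Downarrow \textit{tr}''$ with $\textit{tr} \leadsto_r \textit{tr}''$, then apply L2 to $tr_1' \|_{cs} tr_2 \Downarrow \textit{tr}''$ to get $tr_1' \|_{cs} tr_2' \Downarrow \textit{tr}'$ with $\textit{tr}'' \leadsto_r \textit{tr}'$, and conclude $\textit{tr} \leadsto_r \textit{tr}'$ by ReduceTrans.

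I would prove L1 by induction on the derivation of $tr_1 \leadsto_r tr_1'$. The ReduceEmpty case is immediate (take $\textit{tr}' = \textit{tr}$, using that $\leadsto_r$ is reflexive, which is derivable from ReduceEmpty and ReduceCons), and ReduceTrans follows by applying the induction hypothesis twice and chaining the two resulting output reductions with ReduceTrans. For ReduceCons, where $tr_1 = e ^\chop s_1$, $tr_1' = e ^\chop s_1'$ and $s_1 \leadsto_r s_1'$, I would peel off how the head event $e$ is consumed in the derivation of $tr_1 \|_{cs} tr_2 \Downarrow \textit{tr}$ by case analysis on its last rule (SyncIO/NoSyncIO when $e$ is a communication, SyncWait$*$ when $e$ is a wait event, SyncEmpty2 when $tr_2 = \epsilon$); in each non-deadlock case the same head (or synchronized) event is emitted and the derivation continues on $s_1$ against a suitable remainder of $tr_2$, so the induction hypothesis on $s_1 \leadsto_r s_1'$ applies and I reattach the emitted event to both outputs via ReduceCons. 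The deadlock cases (a shared communication meeting $\epsilon$, yielding $\delta$) are trivial since then $\textit{tr}' = \textit{tr} = \delta$.

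The heart of the argument is the ReduceMerge case, where $tr_1 = \langle d_1, \vec{p}_1, \rdy\rangle ^\chop \langle d_2, \vec{p}_2, \rdy\rangle ^\chop \beta$ with $\vec{p}_1(d_1) = \vec{p}_2(0)$ and $tr_1' = \langle d_1 + d_2, \vec{p}_1 \cdot \vec{p}_2, \rdy\rangle ^\chop \beta$. Here I would run a sub-induction on the derivation of $tr_1 \|_{cs} tr_2 \Downarrow \textit{tr}$. Since $tr_1$ begins with a wait event, the applicable rules are SyncEmpty2 ($tr_2 = \epsilon$, where the two wait events pass through and then merge), NoSyncIO$'$ ($tr_2$ begins with a non-shared communication, which is emitted and handled by the sub-induction with the merge still at the head of $tr_1$), and SyncWait1/SyncWait2/SyncWait2$'$ ($tr_2$ begins with a wait event of length $d'$ with $\compat(\rdy, \rdy')$). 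The decisive split is whether $tr_2$ forces a time cut at instant $d_1$. If $d' \le d_1$ (SyncWait1 or SyncWait2) the cut at $d_1$ is already present on the $tr_2$ side: recognizing $(\vec{p}_1 \cdot \vec{p}_2)|_{[0,d_1]} = \vec{p}_1$ and that $(\vec{p}_1 \cdot \vec{p}_2)(\cdot + d')$ is the corresponding shifted merge, the synchronized output is unchanged, so I take $\textit{tr}' = \textit{tr}$ (when $d' = d_1$) or recurse on a strictly smaller synchronization derivation (when $d' < d_1$, the merge precondition being preserved because $\vec{p}_1(d_1) = \vec{p}_2(0)$). If $d' > d_1$ (SyncWait2$'$) the single wait event of $tr_2$ straddles $d_1$, so in $\textit{tr}$ the cut at $d_1$ splits the output into two adjacent wait events carrying the same ready set $(\rdy \cup \rdy') - cs$ and trajectories that are continuous at $d_1$ precisely because $\vec{p}_1(d_1) = \vec{p}_2(0)$; these two output events satisfy the ReduceMerge side condition and merge into exactly the event produced by $tr_1' \|_{cs} tr_2$, yielding $\textit{tr} \leadsto_r \textit{tr}'$ via ReduceMerge (under a ReduceCons prefix).

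I expect the main obstacle to lie exactly in this last case analysis: keeping the bookkeeping of trajectory restrictions and time shifts coherent with the splitting performed by SyncWait2, and verifying that the ready sets and the continuity condition are preserved, so that a merge on the $tr_1$ side corresponds to either no change or a single merge in the output. The remaining ingredients — that restricting and shifting $\vec{p}_1 \cdot \vec{p}_2$ recovers $\vec{p}_1$ and $\vec{p}_2$, that $\compat(\rdy_1,\rdy_2)$ is insensitive to merging since the ready set is unchanged across the merged events, and that the dual synchronization rules hold by symmetry — are routine, but I would record them as small auxiliary facts to keep the two nested inductions clean.
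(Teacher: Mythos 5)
The paper states Lemma~\ref{lem:reduce_sync} without giving a proof (it is invoked in the proof of Theorem~\ref{thm:small2big} with the details omitted), so there is no official argument to compare against; judged on its own merits, your proposal is correct in substance and is the natural argument: factor the statement into the one-sided claim L1, obtain its mirror by the symmetry of $\Downarrow$ and of $\leadsto_r$, and chain via ReduceTrans. Your treatment of the crucial ReduceMerge case is accurate: the trichotomy on whether the right-hand wait event cuts before, at, or after $d_1$; the identities $(\vec{p}_1\cdot\vec{p}_2)|_{[0,d_1]}=\vec{p}_1$ and $(\vec{p}_1\cdot\vec{p}_2)(\cdot+d')=\vec{p}_2(\cdot+(d'-d_1))$ for $d'\ge d_1$; and, in the SyncWait2$'$ subcase, that the two output events both carry ready set $(\rdy_1\cup\rdy_2)-cs$ and are continuous at the cut exactly because $\vec{p}_1(d_1)=\vec{p}_2(0)$, so that a single application of ReduceMerge (noting $\min(d_1+d_2,d')=d_1+\min(d_2,d'-d_1)$) recovers the output of $tr_1'\|_{cs}tr_2$ with identical continuations.

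One organizational point would fail as literally stated and needs repair. In the ReduceCons case you assert that after the head event $e$ is consumed, ``the derivation continues on $s_1$ against a suitable remainder of $tr_2$.'' This is false in two subcases: under the mirrored NoSyncIO rule the right side emits and the left trace $e^\chop s_1$ is untouched; and when $e$ is a wait event consumed via SyncWait2 (right head shorter, duration $d'<d$), $e$ is split and its shifted remainder is re-prepended, so the continuation is $\langle d-d',\vec{p}(\cdot+d'),\rdy\rangle^\chop s_1\|_{cs}\,tr_2^{\mathit{rest}}$, not $s_1\|_{cs}\,tr_2^{\mathit{rest}}$. In both subcases the reduction derivation does not shrink --- only the synchronization derivation does --- so a plain induction on the derivation of $tr_1\leadsto_r tr_1'$, applying the hypothesis to $s_1\leadsto_r s_1'$, does not go through. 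The standard repair, consistent with the sub-induction you already deploy in the ReduceMerge case, is a lexicographic induction on the pair (reduction derivation, synchronization derivation), or equivalently: first prove L1 for a single head merge by induction on the synchronization derivation alone, then lift to general $\leadsto_r$ by an outer induction, re-forming the same-height reduction $\langle d-d',\ldots\rangle^\chop s_1\leadsto_r\langle d-d',\ldots\rangle^\chop s_1'$ and recursing on the strictly smaller synchronization derivation. A final cosmetic remark: your deadlock cases take $tr'=tr=\delta$, which presupposes $\delta\leadsto_r\delta$; the rules ReduceEmpty and ReduceCons yield reflexivity only for event sequences, so you should record $\delta$ as an atomic trace on which $\leadsto_r$ is reflexive by convention.
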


The theorem going from small-step to big-step semantics is as follows.

\begin{theorem}[Small-step to Big-step] \label{thm:small2big}
	For any small-step relation \sm{(c,s)\stackrel{\textit{tr}}{\rightarrow^*}(\pskip,s')}, there exists \sm{\textit{tr}'} such that \sm{\textit{tr}\leadsto_r \textit{tr}'} and \sm{(c,s)\Rightarrow (s',\textit{tr}')}.
\end{theorem}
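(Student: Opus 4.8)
The plan is to prove the theorem by induction on the length of the transitive-closure derivation $(c,s)\stackrel{tr}{\rightarrow^*}(\pskip,s')$, reducing it to two ``prepend'' lemmas that absorb the first small-step into an existing big-step derivation of the remainder. The base case is $c=\pskip$, $tr=\epsilon$, where SkipB gives $(\pskip,s)\Rightarrow(s,\epsilon)$ and ReduceEmpty gives $\epsilon\leadsto_r\epsilon$. In the inductive step the derivation begins with a single transition followed by a shorter derivation $(c_1,s_1)\stackrel{u}{\rightarrow^*}(\pskip,s')$, to which the induction hypothesis supplies $u'$ with $u\leadsto_r u'$ and $(c_1,s_1)\Rightarrow(s',u')$. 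I would establish: (A) if $(c,s)\xrightarrow{\tau}(c_1,s_1)$ then the first transition contributes nothing, and it suffices to show that $(c_1,s_1)\Rightarrow(s',u')$ implies $(c,s)\Rightarrow(s',u')$; and (B) if $(c,s)\xrightarrow{e}(c_1,s_1)$ with $e$ an observable (communication or wait) event, then $(c_1,s_1)\Rightarrow(s',u')$ implies $(c,s)\Rightarrow(s',tr')$ for some $tr'$ with $e^\chop u'\leadsto_r tr'$. The theorem's conclusion then follows: the full trace is $e^\chop u$ (or $u$ in the $\tau$ case), and ReduceCons together with ReduceTrans turn $u\leadsto_r u'$ and $e^\chop u'\leadsto_r tr'$ into $e^\chop u\leadsto_r tr'$.

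The two prepend lemmas are proved by case analysis on the rule used for the first transition, recursing on program structure for the compound constructs. For the atomic rules (AssignS, the OutS/InS families, IChoiceS, ContS1, IntS2--IntS4) the matching big-step rule applies directly with no reduction needed. For SeqS1/SeqS2, RepS1/RepS2 and CondS I would first decompose the given big-step derivation of the residual program (via SeqB, RepB2, etc.), apply the prepend lemma recursively to the active sub-derivation, and reassemble with the same big-step rule; this needs the auxiliary fact that $\leadsto_r$ is a congruence for right concatenation, i.e. $v_1\leadsto_r v_2$ implies $v_1^\chop w\leadsto_r v_2^\chop w$, an easy induction on the derivation of $v_1\leadsto_r v_2$. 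The genuinely continuous cases are ContS2 and IntS1: a single wait step $\langle d,\vec p,\emptyset\rangle$ is followed by a big-step that may itself open with a wait event $\langle d',\vec q,\emptyset\rangle$ (from ContB2, or one of IntB2/IntB4/IntB6). Since $\vec q(0)=\vec p(d)$, ReduceMerge fuses them into $\langle d+d',\vec p\cdot\vec q,\emptyset\rangle$, and the glued $\vec p\cdot\vec q$ is again the unique solution on $[0,d+d']$ by the local Lipschitz assumption, with $B$ holding on $[0,d+d')$ and failing at the endpoint; ContB2/IntB applied at the original state then rebuilds the big-step. This fusion is exactly the source of the reduction in the sequential setting.

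The main obstacle is the parallel case. A single interleaved execution $(c_1\|_{cs}c_2,\,s_1\uplus s_2)\stackrel{tr}{\rightarrow^*}(\pskip,\,s_1'\uplus s_2')$ has to be \emph{un-interleaved} into separate component runs. I would prove a projection lemma: from such a derivation one extracts $tr_1,tr_2$ with $(c_1,s_1)\stackrel{tr_1}{\rightarrow^*}(\pskip,s_1')$, $(c_2,s_2)\stackrel{tr_2}{\rightarrow^*}(\pskip,s_2')$ and $tr_1\|_{cs}tr_2\Downarrow tr$, by induction on the derivation and case analysis on the parallel rules (ParTauS, ParDelayS, ParPairS1/2, ParUnpairS1/2); the equal-length requirement of ParDelayS matches SyncWait1, while Lemma~\ref{lem:small-step-split} is what lets a component's long continuous step be cut so that both sides always step by the same delay. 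Having projected, I apply the theorem to each component (structural induction on $c$) to get big-step traces $tr_1',tr_2'$ with $tr_1\leadsto_r tr_1'$ and $tr_2\leadsto_r tr_2'$; Lemma~\ref{lem:reduce_sync} then yields $tr'$ with $tr\leadsto_r tr'$ and $tr_1'\|_{cs}tr_2'\Downarrow tr'$, and ParB assembles $(c_1\|_{cs}c_2,\,s_1\uplus s_2)\Rightarrow(s_1'\uplus s_2',tr')$, as required. I expect the projection lemma --- faithfully reconstructing the two component runs and their synchronization while accounting for the splitting of continuous evolution forced by the other side --- to be the hardest and most technical part; the component reductions (handled exactly as above) and their propagation through Lemma~\ref{lem:reduce_sync} then deliver the final $tr\leadsto_r tr'$.
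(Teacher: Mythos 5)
Your proof is correct, and its sequential half is essentially the paper's argument: induction on the derivation of $(c,s)\stackrel{tr}{\rightarrow^*}(\pskip,s')$, with the first transition absorbed into the big-step of the residual program via a case analysis on the small-step rule used, and ReduceMerge fusing an initial wait event with a wait block at the head of the remaining trace (your explicit right-concatenation congruence for $\leadsto_r$ and the ReduceCons/ReduceTrans bookkeeping are details the paper omits but clearly needs). Where you genuinely diverge is the parallel case. The paper stays inside the same induction: it case-analyzes the first parallel step (ParTauS, ParDelayS, ParPairS1/2, ParUnpairS1/2), inverts ParB on the inductive hypothesis for the residual parallel program, absorbs the step into the components, and invokes Lemma~\ref{lem:reduce_sync} at each such absorption to re-establish the synchronization relation. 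You instead factor the argument through a standalone projection (un-interleaving) lemma that extracts complete component runs $tr_1,tr_2$ together with $tr_1\|_{cs}tr_2\Downarrow tr$ from the interleaved derivation, then apply the theorem componentwise, use Lemma~\ref{lem:reduce_sync} once, and close with ParB. Your route is the exact mirror of the paper's Theorem~\ref{thm:big2small} and is arguably more modular --- the sync rules are confronted only once, in the projection lemma --- at the cost of stating and proving that lemma; the paper's stepwise absorption avoids it but pays with a ParB-inversion and a reduce\_sync application at every step. One small correction to your narrative: Lemma~\ref{lem:small-step-split} is not needed in this direction. ParDelayS already forces both components to step by the same delay, so the projected traces synchronize using only SyncWait1; the splitting lemma belongs to the big-to-small direction (Theorem~\ref{thm:big2small}), and in your direction the splitting shows up only as consecutive wait events in a component run, which your sequential ReduceMerge machinery and Lemma~\ref{lem:reduce_sync} already handle.
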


\begin{proof}
First, we prove the result when $c$ is a sequential program. Induction on the derivation of $(c,s)\stackrel{tr}{\rightarrow^*}(\pskip,s')$ gives three cases. The first case corresponds to $tr=\epsilon,c=\pskip$ and $s=s'$, and the result follows immediately. In the second case, we have $(c,s)\stackrel{\tau}{\to} (c',s')$ and $(c',s')\stackrel{tr}{\to^*} (\pskip,s'')$. From the inductive hypothesis, there exists $tr'$ such that $tr\leadsto_r tr'$ and $(c',s')\Rightarrow (s'',tr')$. It then suffices to show $(c,s)\Rightarrow (s'',tr')$. The proof is by a further induction on the derivation of $(c,s)\stackrel{\tau}{\to}(c',s')$. We omit the details here.

In the third case, we have $(c,s)\stackrel{e}{\to} (c',s')$ and $(c',s')\stackrel{tr}{\to^*} (\pskip,s'')$ for some event $e\neq\tau$. From the inductive hypothesis, there exists $tr'$ such that $tr\leadsto_r tr'$ and $(c',s')\Rightarrow (s'',tr')$, and we need to show there exists some $tr''$ such that $e^\chop tr\leadsto_r tr''$ and $(c,s)\Rightarrow (s'',tr'')$. As in the second case, the proof is by a further induction on the derivation of $(c,s)\stackrel{e}{\to}(c',s')$. In some of the cases where $e$ is a wait block, it is necessary to apply the ReduceMerge rule to merge $e$ with the initial block of $tr$.

Next, we prove the result when $c$ is a parallel program. Again, induction on the derivation of $(c,s) \stackrel{tr}{\rightarrow^*} (\pskip,s')$ results in three cases. In the third case where the first step generates an event $e\neq \tau$, we need to consider each of the small-step rules ParDelayS, ParPairS1, PairPairS2, ParUnpairS1 and ParUnpairS2, making use of Lemma~\ref{lem:reduce_sync}. The details are omitted.
\end{proof}

\begin{proof}[Proof for Theorem~\ref{thm:small-big-semantics}]
Theorem~\ref{thm:small-big-semantics} can be proved directly from 
Theorem~\ref{thm:big2small} and Theorem~\ref{thm:small2big}.
\end{proof}

\section{Hybrid Hoare logic}\label{sec:hoare}
In this section, we introduce our version of hybrid Hoare logic, still denoted by HHL, including the syntax, semantics and proof system. 

\subsection{Basic Notions}
Let $\mathbb{N}$, $\mathbb{N}^+$, $\mathbb{R}$, $\mathbb{R}^+$ and $\mathbb{R}^+_0$ be respectively the set of natural, positive natural, real, positive real and non-negative real numbers. For a vector $\vec{x} \in \mathbb{R}^n$, $x_i$ refers to its $i$-th component and $|| {\vec{x}} ||$ denotes the $\ell^2$-norm. 
Let $\mathbb{R}[\vec{x}]$ be the polynomial ring in $\xx$ over the field $\mathbb{R}$. A polynomial $h \in \mathbb{R}[\xx]$ is \emph{sum-of-squares} (SOS) iff there exist polynomials $g_1, \ldots, g_k \in \mathbb{R}[\xx]$ such that $h = \sum_{i=1}^{k} g_i^2$. We denote by $\Sigma[\xx] \subset \mathbb{R}[\xx]$ the set of SOS polynomials over $\xx$. 
%

\paragraph*{Differential Dynamical Systems.}
We consider a class of continuous dynamical systems modelled by ordinary differential equations of the autonomous type:
\begin{equation}
    \label{eqn:dynamics}
    \dot{\xx} = \ff(\xx)
\end{equation}%
where $\xx \in \RR^n$ is the \emph{state} vector, $\dot{\xx}$ denotes its temporal derivative ${\rm d}\xx/{\rm d}t$, with $t \in \RR^+_0$ modelling time, and $\ff\colon \RR^n \to \RR^n$ is a polynomial \emph{flow field} (or \emph{vector field}) that governs the evolution of the system, which is \emph{local Lipschitz}. A polynomial vector field is local Lipschitz, and hence for some $T \in \RR^+ \cup \{\infty\}$, there exists a unique \emph{solution} (or \emph{trajectory}) $\sol_{\xx_0}\colon [0,T) \to \RR^n$ originating from any initial state $\xx_0 \in \RR^n$ such that (1) $\sol_{\xx_0}(0) = \xx_0$, and (2) $\forall \tau \in [0,T)\colon \frac{{\rm d}\sol_{\xx_0}}{{\rm d}t}\big\vert_{t=\tau} \!= \ff(\sol_{\xx_0}(\tau))$. We assume in the sequel that $T$ is the maximal instant up to which $\sol_{\xx_0}$ exists for all $\xx_0$.

Differential equations is a very important branch of mathematics, particularly, ordinary differential equations is well-studied in mathematics with well-established 
theories, please refer to ~\cite{gronwall1919note,gracca2008computability,walter2013ordinary} for the details. Following Platzer \cite{Platzer08}, we call   the  first-order theories  of (ordinary) differential equations \emph{FOD} in this paper. 
Clearly, 
In order to axiomatize HCSP, it is inevitable to deal with continuous evolution as well the interaction between continuous evolution and discrete jumps. So, we will use FOD   as part of our assertion logic.

\begin{definition}[Lie derivative~\textnormal{\cite{kolar1993natural}}]
	Given a vector field $\ff\colon \RR^n \to \RR^n$ over $\xx$, the \emph{Lie derivative} of a polynomial function $p(\xx)$ along $\ff$ of order $k \in \NN$, 
 written as $\mathcal{L}_{\ff}^k p\colon \RR^n \to \RR$,  is inductively defined by 
	\begin{equation*}
		\mathcal{L}_{\ff}^k p(\xx) \define
		\left\{
		\begin{array}{ll}
			p(\xx), \quad k=0,\\
			\left\langle\frac{\partial}{\partial \xx} \mathcal{L}_{\ff}^{k-1} p(\xx), \ff(\xx)\right\rangle, \quad k>0
		\end{array}
		\right.
	\end{equation*}%
	where $\langle\cdot,\cdot\rangle$ is the inner product of vectors, i.e., $\langle \uu, \vv \rangle \define \sum_{i=1}^{n}u_i v_i$ for $\uu, \vv \in \RR^n$.
\end{definition}
The Lie derivative $\mathcal{L}_{\ff}^k p(\xx)$ is essentially the $k$-th temporal derivative of the (barrier) function $p(\xx)$, and thus captures the change of $p(\xx)$ over time. 
\oomit{\color{blue} Let $I = \langle \mathcal{L}_{\ff}^0 p(\xx), \mathcal{L}_{\ff}^1 p(\xx), \cdots, \mathcal{L}_{\ff}^{k-1} p(\xx)\rangle$ for $k>0$ denote the polynomial ideal generated by the Lie derivatives of $p(\xx)$ with orders less than $k$, if $\mathcal{L}_{\ff}^k p(\xx) \in I$, then we call $k$ the order of the Lie derivatives of $p(\xx)$ along $\ff$.}

An \emph{inductive invariant} $\invt \subseteq \mathbb{R}^n$ of a dynamical system is a set of states such that all trajectories starting from  $\invt$ never transverse it. Formally, 
\begin{definition}[Inductive invariant~\textnormal{\cite{Platzer08}}]\label{def:inv}
    Given a system \eqref{eqn:dynamics}, a set $\invt \subseteq \mathbb{R}^n$ is an \emph{inductive invariant} of system \eqref{eqn:dynamics} if and only if
    \begin{equation}
        \forall \xx_0 \in \invt \ldotp \forall t \in [0, T)\colon \sol_{\xx_0}(t) \in \invt.
    \end{equation}%
\end{definition}

In the sequel, we refer to inductive invariants simply as invariants. In~\cite{LiuEmsoft}, a sufficient and necessary condition on being a polynomial invariant is proposed:
\begin{theorem}[Invariant condition~\textnormal{\cite{LiuEmsoft}}]
    \label{thm:invariantCondition}
    Given a polynomial $p \in \mathbb{R}[\xx]$, its \emph{zero sub-level set} $\{ \xx \mid p(\xx) \leq 0 \}$ is an invariant 
    of system \eqref{eqn:dynamics} if and only if \footnote{In \eqref{eqn:invariantCondition}, $\bigwedge_{j = 0}^{i-1} \mathcal{L}_{\ff}^j p = 0$ is $\true$ for $i = 0$ by default. This applies in the sequel.}
    \begin{equation}
        \label{eqn:invariantCondition}
        p \leq 0 \implies 
        \bigvee\nolimits_{i = 0}^{\LieBound} \left(
        \left(\bigwedge\nolimits_{j = 0}^{i-1} \mathcal{L}_{\ff}^j p = 0\right) 
        \land \mathcal{L}_{\ff}^i p < 0\right)
        \lor \bigwedge\nolimits_{i = 0}^{\LieBound} \mathcal{L}_{\ff}^i p = 0
    \end{equation}%
    where $\LieBound \in \NN^+$ is a completeness threshold, i.e., a finite positive integer that bounds the order of Lie derivatives, 
    which can be computed using Gr\"{o}bner bases\footnote{$\LieBound$ is the minimal $i$ such that $\mathcal{L}_{\ff}^{i+1} p$ is in the polynomial ideal generated by $\mathcal{L}_{\ff}^0 p, \mathcal{L}_{\ff}^1 p, \ldots, \mathcal{L}_{\ff}^i p$. The ideal membership can be decided via Gr\"{o}bner basis. See \cite{LiuEmsoft} for the details.}.
\end{theorem}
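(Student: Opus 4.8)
The plan is to study the scalar function $g(t) \define p(\sol_{\xx_0}(t))$ obtained by restricting $p$ to a trajectory, and to translate the set-theoretic invariance of $\{\xx \mid p(\xx) \le 0\}$ into a sign condition on $g$ and its derivatives. Since $\ff$ is polynomial, each trajectory $\sol_{\xx_0}$ is \emph{analytic} in $t$, so $g$ is analytic, and a direct induction on the definition of the Lie derivative gives $g^{(k)}(t) = \mathcal{L}_{\ff}^k p(\sol_{\xx_0}(t))$ for all $k$. In particular $g^{(k)}(0) = \mathcal{L}_{\ff}^k p(\xx_0)$, so the pointwise disjunction in \eqref{eqn:invariantCondition} evaluated at $\xx_0$ exactly describes the behaviour of $g$ immediately to the right of $0$: either the first non-vanishing derivative $g^{(i)}(0)$ has index $i \le \LieBound$ and is strictly negative (so $g$ strictly decreases into $\{g<0\}$ by Taylor expansion), or $g^{(0)}(0)=\cdots=g^{(\LieBound)}(0)=0$.

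The key structural step I would isolate as a lemma concerns the completeness threshold $\LieBound$: because $\mathcal{L}_{\ff}^{\LieBound+1}p = \sum_{j=0}^{\LieBound} q_j \,\mathcal{L}_{\ff}^{j}p$ for some $q_j \in \RR[\xx]$, a short induction (applying $\mathcal{L}_{\ff}$ to this identity and using the product rule) shows that \emph{every} higher Lie derivative $\mathcal{L}_{\ff}^{m}p$ with $m>\LieBound$ also lies in the ideal $\langle \mathcal{L}_{\ff}^{0}p,\dots,\mathcal{L}_{\ff}^{\LieBound}p\rangle$. Consequently, setting $G(t) = (g(t), g'(t), \dots, g^{(\LieBound)}(t))$, the vector $G$ satisfies a \emph{linear time-varying} ODE $G'(t) = A(t)\,G(t)$ whose matrix $A(t)$ is the companion-type matrix built from the values $q_j(\sol_{\xx_0}(t))$. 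By uniqueness of solutions of linear ODEs, $G(t_0)=0$ forces $G\equiv 0$, hence $g\equiv 0$ on the whole interval of existence. In words: once all Lie derivatives up to order $\LieBound$ vanish at a point, the trajectory through that point stays inside $\{p=0\}$.

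With these two ingredients the two directions follow. For soundness ($\Leftarrow$), assuming \eqref{eqn:invariantCondition} and $p(\xx_0)\le 0$, I would argue by contradiction: if $g(t_1)>0$ for some $t_1$, let $t_0 = \sup\{t\le t_1 : g(t)\le 0\}$; continuity gives $g(t_0)=0$ and $g>0$ on $(t_0,t_1]$. Applying the disjunction at $\yy_0 = \sol_{\xx_0}(t_0)$, the strictly-negative-derivative case forces $g<0$ just after $t_0$ (Taylor), and the all-derivatives-vanish case forces $g\equiv 0$ just after $t_0$ (the lemma), either way contradicting $g>0$ on $(t_0,t_1]$. For necessity ($\Rightarrow$), assuming invariance and $p(\xx_0)\le 0$: if $p(\xx_0)<0$ the disjunct $i=0$ holds trivially; if $p(\xx_0)=0$ then $g(t)\le 0$ with $g(0)=0$, so the first non-vanishing derivative (if any) must be negative, since a positive one would push $g$ above $0$; and by the lemma the first non-vanishing index cannot exceed $\LieBound$ (otherwise all derivatives up to $\LieBound$ vanish and $g\equiv0$, i.e.\ the second disjunct holds).

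The main obstacle is the completeness-threshold lemma: proving that membership of $\mathcal{L}_{\ff}^{\LieBound+1}p$ in the ideal propagates to all higher orders, and then packaging this as a linear ODE so that the uniqueness theorem yields $g\equiv 0$. Existence of a finite $\LieBound$ itself relies on the Noetherianity of $\RR[\xx]$ (the ascending chain of ideals generated by successive Lie derivatives stabilizes); the computational content via Gr\"obner bases is orthogonal to correctness. The remaining analytic facts — analyticity of polynomial flows, the identity $g^{(k)} = \mathcal{L}_{\ff}^k p\circ\sol$, and the sign-of-first-derivative argument — are standard, and I would invoke them with brief justification.
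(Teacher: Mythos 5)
The paper itself offers no proof of this theorem: it is imported verbatim from \cite{LiuEmsoft}, so there is no internal argument to compare against. Your proof is correct and follows essentially the same strategy as the cited source: restrict $p$ to a trajectory to get the analytic function $g(t)=p(\sol_{\xx_0}(t))$ with $g^{(k)}=\mathcal{L}_{\ff}^k p\circ\sol_{\xx_0}$, read the disjunction in \eqref{eqn:invariantCondition} as a sign condition on the first non-vanishing derivative of $g$, and use the ideal stabilization at order $\LieBound$ to show that vanishing of all Lie derivatives up to $\LieBound$ at a point traps the trajectory in $\{p=0\}$. One small remark: your propagation induction (that \emph{every} $\mathcal{L}_{\ff}^{m}p$ with $m>\LieBound$ lies in the ideal) is unnecessary for your own argument --- the single identity $\mathcal{L}_{\ff}^{\LieBound+1}p=\sum_{j=0}^{\LieBound}q_j\,\mathcal{L}_{\ff}^{j}p$ already yields the companion system $G'(t)=A(t)G(t)$ for $G=(g,g',\dots,g^{(\LieBound)})$, and linear-ODE uniqueness then gives $G\equiv 0$; your linear-ODE route is in fact a mild simplification over invoking analyticity of $g$ (as \cite{LiuEmsoft} does) for that step, since it only needs $g\in C^{\LieBound+1}$ along the flow.
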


Theorem~\ref{thm:invariantCondition} can be extended to general semi-algebraic sets, please refer to \cite{LiuEmsoft} for the details.

\subsection{Assertion Logic and Hoare Triples}
In this subsection, we introduce our assertion logic, which implicitly contains 
FOD for dealing with ODEs. 
\subsubsection{Syntax}
We first present the syntax for terms. The language consists of terms of several types. 
{ \[ 
\begin{array}{rll}
	\textit{val} &:=& x_i ~|~ c ~|~ v + w ~|~ v \cdot w ~|~ \cdots \\
	\textit{time} &:=& d ~|~ \infty ~|~ d_1 + d_2 ~|~ d_1 - d_2 ~|~ \cdots \\
	\textit{vector} &:=& (x_1,\dots,x_n) ~|~ \vec{x} ~|~ \vec{p}(t) \\[1mm]
	\textit{state\_traj} &:=& I_{\vec{x}_0} ~|~ \vec{p}_{\vec{x}_0,\vec{e}} ~|~ \vec{p}(\cdot + d) ~|~ \vec{p}_1 \uplus \vec{p}_2 \\
	\textit{generalized\_event} &:=& \langle ch\triangleright,\mathit{val}\rangle ~|~
	\langle \mathit{time},\mathit{state\_traj},\mathit{rdy}\rangle \\
	\textit{trace} &:=& \epsilon \, | \, \mathit{generalized\_event} \,|\, \tracev \,|\,  \mathit{trace}_1 {^\chop} \textit{trace}_2 
\end{array} 
\] }
Here \sm{\mathit{val}} are terms evaluating to real numbers, including state variables \sm{x_i}, constants \sm{c}, as well as arithmetic operations. \sm{\mathit{time}} evaluates to time lengths, either a positive real number or \sm{\infty}. \sm{\mathit{vector}} evaluates to vectors. We use the special symbol \sm{\vec{x}} to denote the vector consisting of all variables in the state of a sequential process in a pre-determined order. Note that this is viewed as an abbreviation, so that substitution for a particular variable \sm{x_i} will replace the corresponding component in \sm{\vec{x}}. \sm{\mathit{state\_traj}} evaluates to solutions of ODEs, guaranteed by FOD. Here \sm{I_{\vec{x}_0}} denotes the constant state trajectory with value \sm{\vec{x}_0}, i.e., for any \sm{t}, \sm{I(t) = \vec{x}_0} (by convention, we use \sm{\vec{x}_0} for the initial values of all state variables). \sm{\vec{p}_{\vec{x}_0,\vec{e}}} denotes the trajectory of  \sm{\vec{\dot{x}}=\vec{e}} starting from \sm{\vec{x}_0} according to FOD. The second subscript \sm{\vec{e}} may be omitted if it is clear from context. \sm{\vec{p}(\cdot + d)} denotes a time shift by \sm{d} units, \sm{\vec{p}_1\uplus \vec{p}_2} denotes merging two state trajectories  for two sequential processes with disjoint sets of state variables, and \sm{\vec{p}(t)} denotes extracting 
the state at time \sm{t} from  the state trajectory $\vec{p}$. The syntax for generalized events and traces are as before. As in Hoare and He's Unifying Theories of Programming (UTP) \cite{UTP1998}, we introduce a system observational  variable, denoted by $\tracev$, to stand for the current trace of the considered process, which never occurs in any process syntactically. 

Our assertion logic is a first order logic of differential equations with generalized traces, that is an extension of FOD with predicates over generalized traces given above, denoted by \FODT. {\FODT} formulas are constructed 
from atomic formulas of the form  
$\theta_1 \unrhd \theta_2$ and atomic formulas of FOD with Boolean connectives and quantifications,  where \sm{ \unrhd \in \{=, \neq, >, \geq, <, \leq \}}. All assertions of HHL of interest are of the form $\spec{P}{c}{Q}$, still called \emph{Hoare triple}, where $P$ and $Q$ are {\FODT} formulas, and 
$c$ is a HCSP process.

Additionally, if we only allow expressions in {\FODT} polynomial, {\FODT} is an extension of Tarski algebra \cite{Tarski51} with trace predicates. If we allow 
more expressive expressions with Noetherian functions \cite{Krantz2002}, i.e., so called analytic terms, {\FODT} is called \emph{semianalytic} algebra together with trace predicates.
For the latter, it is unclear whether Theorem~\ref{thm:invariantCondition} holds.  
See \cite{Platzer20} for detailed discussions on  \emph{semianalytic} terms and formulas.

\subsubsection{Semantics} 
The terms and formulas are defined over a triple \sm{\langle s, h, \lvar \rangle}, 
where  \sm{s} is a state,
\sm{h} a trace, and \sm{\lvar} a valuation  assigning values to logical variables.
The evaluation of \sm{\textit{val}}, \sm{\textit{time}} and \textit{vector} is defined with respect to  \sm{s} and \sm{\lvar}, denoted by \sm{\seman{\textit{val}}_s^\lvar},  \sm{\seman{\textit{time}}_s^\lvar} and \sm{\seman{\textit{vector}}_s^\lvar} respectively. Their definitions are routine, so we omit them here. 
The evaluation of \sm{\textit{state\_traj}} with respect to \sm{s} and \sm{\lvar}, which returns a function mapping from time to state, is given as follows:
{\[
\begin{array}{rcl}
	\seman{I_{\vec{x}_0}}_s^\lvar(t)  &=& s[\vec{x} \mapsto \vec{x}_0] \\
	\seman{\vec{p}_{\vec{x}_0,e}}_s^\lvar(t)  &=&
	s[\vec{x} \mapsto \vec{p}(t)]  \\
	\multicolumn{3}{l}{\qquad\mbox{where $\vec{p}$ is the solution of \sm{\vec{\dot{x}}=\vec{e}} with initial state \sm{s(\vec{x}_0)}, i.e., $\sol_{\xx_0}$}}\\
	\seman{\vec{p}(\cdot + d)}_s^\lvar(t) &=& \seman{\vec{p}}_s^\lvar(t+d)\\
	\seman{\vec{p}_1 \uplus \vec{p}_2} _s^\lvar(t) &=&
	\seman{\vec{p}_1}_s^\lvar(t) \uplus \seman{\vec{p}_2} _s^\lvar(t)
\end{array}
\] }
Given a state \sm{s}, a trace \sm{h} and a valuation \sm{\lvar} of logical variables,
the semantics of trace expressions 
\sm{\nseman{e}{s}{h}}
is defined as follows:
{\[ 
\begin{array}{rcl}
	\nseman{\langle ch\triangleright, val\rangle}{s}{h}
	&=&  \langle ch\triangleright, \seman{\textit{val}}_{s}^{\lvar} \rangle  \\
	\nseman{\langle \textit{time}, \textit{state\_traj}, \textit{rdy} \rangle}{s}{h}	&=&
\langle \seman{\textit{time}}_{s}^{\lvar}, \seman{\textit{state\_traj}}_{s}^{\lvar}|_{[0, \seman{\textit{time}}_{s}^{\lvar}]}, \textit{rdy} \rangle  \\
	\nseman{\tracev}{s}{h} &=& h\\
	\nseman{\textit{trace}_1{^\chop} \mathit{trace}_2}{s}{h} &=& 
	\nseman{\textit{trace}_1}{s}{h}{^\chop}\nseman{\textit{trace}_2}{s}{h}
\end{array}
\] }
We can see that each trace expression is interpreted to a trace value defined in Sect.~\ref{sec:tracesemantics}. Especially, the state trajectory $\textit{state\_traj}$ in each continuous event 
is restricted to the time interval  $[0, \seman{\textit{time}}_{s}^{\lvar}]$, i.e. $\seman{\textit{state\_traj}}_{s}^{\lvar}|_{[0, \seman{\textit{time}}_{s}^{\lvar}]}$.  Based on the semantics of terms, the semantics of formulas can be defined as usual.

Given a (sequential or parallel) process \sm{c}, we say a Hoare triple is \emph{valid}, denoted by \sm{\vDash \spec{P}{c}{Q}}, 
if for all \sm{s_1, h_1} such that \sm{\nseman{P}{s_1}{h_1}} holds, and big-step relation \sm{(c,s_1) \Rightarrow (s_2,h_2)}, then \sm{\nseman{Q}{s_2}{{h_1}^\chop h_2}} holds.

\subsection{Proof System}
\oomit{\mycomment{Question by wsl: Should we change all occurrences of FOD in the following paragraph to first-order logic, i.e. FOL? I think there are some confusion between the use of FOL and FOD in this paper.  FOD is an abbreviation for first-order logic of differential equations, proposed by Platzer, to allow assertions in the form $\langle \dot{x} = e \rangle\ p$, where $p$ is a pure formula. It is only used for continuous completeness, so I added the explanation of FOD till that section on continuous completeness (In the reviews, two of them mentioned the lack of explaining FOD).}
\mycomment{ZNJ: It depends how to present inference rules for ODEs related constructs. If we use invariant-based rules, I agree with your point, otherwise, solution-based rules, it needs FOD. } }

A proof system for HHL is intended to derive all valid Hoare triples syntactically.
Our proof system of HHL consists of three parts: the proof system for FOD\footnote{When we consider the discrete relative completeness, {FOD} will be replaced by the first-order theory of real arithmetic.}, axioms and inference rules for timed traces and readiness, and axioms and inference rules for HCSP constructs. The first two parts form a proof system for {\FODT}.

As said above, FOD is a well-studied mathematical branch, we will not discuss the proof theory for FOD in this paper, please refer to ~\cite{gronwall1919note,gracca2008computability,walter2013ordinary} for the details. 

\subsubsection{Axioms and inference rules for traces and readiness} 
 Here we give a set of inference rules shown in Fig.~\ref{fig:elim-synchronization} for concluding properties of \sm{\textit{tr}}  from those of  \sm{tr_1} and 
 \sm{tr_2}, given a synchronization operation \sm{\mathit{tr_1}\|_{cs}\mathit{tr_2}\Downarrow \mathit{tr}}. 
   We omit obvious symmetric versions of rules. 

\begin{figure*}
{ \centering 
 \begin{eqnarray*}
&  \prftree[r]{SyncPairE}{\langle ch_1\triangleright_1,v_1\rangle^\chop tr_1 \|_{cs} \langle ch_2\triangleright_2,v_2\rangle^\chop tr_2 \Downarrow \textit{tr}}{ch_1\in cs}{ch_2\in cs}{\exists \textit{tr}'.\, ch_1=ch_2 \wedge v_1 = v_2 \wedge (\triangleright_1,\triangleright_2)\in \{(!,?),(?,!)\}\wedge \textit{tr} = \langle ch,v\rangle^\chop \textit{tr}' \wedge tr_1\|_{cs}tr_2 \Downarrow \textit{tr}'} & \\
& \prftree[r]{SyncUnpairE1}{\langle ch_1\triangleright_1,v_1\rangle^\chop tr_1 \|_{cs} \langle ch_2\triangleright_2,v_2\rangle^\chop tr_2 \Downarrow \textit{tr}}{ch_1\notin cs}{ch_2\in cs}
{\exists \textit{tr}'.\, \textit{tr}=\langle ch_1\triangleright_1,v_1\rangle^\chop \textit{tr}' \wedge tr_1\|_{cs} \langle ch_2\triangleright_2,v_2\rangle^\chop tr_2 \Downarrow \textit{tr}'} & \\
&  \prftree[r]{SyncUnpairE2}{\langle ch_1\triangleright_1,v_1\rangle^\chop tr_1 \|_{cs} \langle ch_2\triangleright_2,v_2\rangle^\chop tr_2 \Downarrow \textit{tr}}{ch_1\notin cs}{ch_2\notin cs}
{\begin{array}{ll}
		&(\exists \textit{tr}'.\, \textit{tr}=\langle ch_1\triangleright_1,v_1\rangle^\chop \textit{tr}' \wedge tr_1\|_{cs} \langle ch_2\triangleright_2,v_2\rangle^\chop tr_2 \Downarrow \textit{tr}')\,\vee \\
		&(\exists \textit{tr}'.\, \textit{tr}=\langle ch_2\triangleright_2,v_2\rangle^\chop \textit{tr}' \wedge \langle ch_1\triangleright_1,v_1\rangle^\chop tr_1\|_{cs} tr_2 \Downarrow \textit{tr}')
\end{array}}  & \\
& \prftree[r]{SyncUnpairE3}{ch\notin cs}{\langle ch\triangleright,v\rangle^\chop tr_1 \|_{cs} \langle d,\vec{p},\rdy \rangle^\chop tr_2 \Downarrow \textit{tr}}{\exists \textit{tr}'.\, \textit{tr} = \langle ch\triangleright,v\rangle^\chop \textit{tr}' \wedge tr_1\|_{cs} \langle d,\vec{p},\rdy\rangle^\chop tr_2 \Downarrow \textit{tr}'} 
& \\
&\prftree[r]{SyncUnPairE4}{ch\in cs}{\langle ch\triangleright,v\rangle^\chop tr_1 \|_{cs} \langle d,\vec{p},\rdy \rangle^\chop tr_2 \Downarrow \delta} & \\
& \prftree[r]{SyncWaitE1}{\neg \compat(\rdy_1,\rdy_2)}{\langle d_1,\vec{p}_1,\rdy_1\rangle^\chop tr_1 \|_{cs} \langle d_2,\vec{p}_2,\rdy_2\rangle^\chop tr_2 \Downarrow \delta}
& \\
&\prftree[r]{SyncWaitE2}{\langle d,\vec{p}_1,\rdy_1\rangle^\chop tr_1 \|_{cs} \langle d,\vec{p}_2,\rdy_2\rangle^\chop tr_2 \Downarrow \textit{tr}}{\compat(\rdy_1,\rdy_2)}
{\exists \textit{tr}'.\, \textit{tr} = \langle d,\vec{p}_1\uplus \vec{p}_2,(\rdy_1\cup \rdy_2)-cs\rangle ^\chop \textit{tr}' \wedge tr_1\|_{cs} tr_2 \Downarrow \textit{tr}'} & \\
& \prftree[r]{SyncWaitE3}{d_1<d_2}{\langle d_1,\vec{p}_1,\rdy_1\rangle^\chop tr_1 \|_{cs} \langle d_2,\vec{p}_2,\rdy_2\rangle^\chop tr_2 \Downarrow \textit{tr}}{\compat(\rdy_1,\rdy_2)}
{\exists \textit{tr}'.\, \textit{tr} = \langle d_1,\vec{p}_1\uplus \vec{p}_2,(\rdy_1\cup \rdy_2)-cs\rangle ^\chop \textit{tr}' \wedge tr_1\|_{cs} \langle d_2-d_1,\vec{p}_2(\cdot+d_1),\rdy_2\rangle^\chop tr_2 \Downarrow \textit{tr}'} & \\
&  
\prftree[r]{SyncEmptyE1}{\epsilon \|_{cs} \epsilon \Downarrow tr}{tr = \epsilon} \qquad
\prftree[r]{SyncEmptyE2}{\langle d,\vec{p},\rdy\rangle^\chop tr_1 \|_{cs} \epsilon \not\Downarrow tr} 
& \\
&\prftree[r]{SyncEmptyE3}{\langle ch*,v\rangle^\chop tr_1 \|_{cs} \epsilon \not\Downarrow tr}{ch\notin cs\wedge \exists tr'.\, tr=\langle ch*,v\rangle^\chop tr' \wedge tr_1 \|_{cs} \epsilon \Downarrow tr'} 
&
\end{eqnarray*} }
\caption{Inference rules for timed traces and readiness}
\label{fig:elim-synchronization}
\end{figure*}

This set of rules can be categorized by the types of initial events on the two sides. For each combination of types of initial events, there is exactly one rule that is applicable, which either produces a synchronization operation where at least one of \sm{tr_1} and \sm{tr_2} is reduced by one event, or produces a deadlock. The initial event has three cases: communication event  where the channel lies or does not lie in \sm{cs}, and continuous event. We only explain some cases because of space limit. If both sides are communication events, where the channel lies in \sm{cs}, then the two events must synchronize with each other, and they have the same channel and value (rule SyncPairE). If both sides are continuous events, then the two ready sets must be compatible (rule SyncWaitE1). Moreover,  if the two durations are equal, they can be synchronized with each other (rule SyncWaitE2); otherwise, the shorter one synchronizes with the initial part of the longer one first (rule SyncWaitE3).

Note that these rules above are essentially the same as the ones in Fig.~\ref{fig:rule-synchronization}, except that
the rules in Fig.~\ref{fig:rule-synchronization} 
 compose a synchronized trace for a parallel process from the traces of its component processes, while these  rules above decompose the trace of a parallel process  into the traces for its component processes in order to split a complicated proof obligation into several smaller ones.

\subsubsection{Axioms and inference rules for HCSP constructs}

The axioms and rules for the constructs of HCSP are presented in Fig.~\ref{fig:hoarelogic}. We explain them in sequence below. 

\begin{figure*}
{
\begin{eqnarray*}
  &   \prftree[r]{ Skip }{\spec{Q}{\pskip}{Q}}   \qquad
	\prftree[r]{ Assign }{\spec{Q[e/x]}{x:=e}{Q}} 
	& \\
 	 &\prftree[r]{ Output }{\left\{
	 	\begin{array}{ll}
	 		Q[\tracev^\chop \langle ch!, e\rangle/\tracev]\,\wedge \\
	 		\forall d>0.\,Q[\tracev^\chop \langle d, I_{\vec{x}_0}, \{ch!\}\rangle^\chop \langle ch!, e\rangle/\tracev]\,\wedge \\
	 		Q[\tracev^\chop \langle \infty, I_{\vec{x}_0}, \{ch!\}\rangle/\tracev]
	 	\end{array}
	 	\right\}
	 	\ ch!e\ \{Q\}} & \\
 	 & \prftree[r]{ Input }{
	 	\left\{
	 	\begin{array}{ll}
	 		\forall v.\, Q[v/x, \tracev^\chop \langle ch?, v\rangle/\tracev]\,\wedge \\
	 		\forall d>0.\,\forall v.\, Q[v/x, \tracev^\chop \langle d, I_{\vec{x}_0},\{ch?\}\rangle ^\chop \langle ch?,v\rangle /\tracev]\,\wedge \\
	 		Q[\tracev^\chop\langle \infty, I_{\vec{x}_0}, \{ch?\}\rangle/\tracev]
	 	\end{array}
	 	\right\}
	 	\ ch?x\ \{Q\}} & \\
 	 &\prftree[r]{ Cont }{
 		\left\{
 		\begin{array}{ll}
 			(\neg B \rightarrow Q)\,\wedge 	\forall d>0.\\
 		\,(\forall t\in[0,d).\,B[\vec{p}_{\vec{x}_0}(t)/\vec{x}]\wedge \neg B[\vec{p}_{\vec{x}_0}(d)/\vec{x}] \\
 			 \to Q[\vec{p}_{\vec{x}_0}(d)/\vec{x}, \tracev^\chop \langle d, \vec{p}_{\vec{x}_0}, \emptyset\rangle/\tracev])
 		\end{array}
 		\right\}\ \evo{x}{e}{B}\ \{Q\}} & \\[1mm] 
 	& \prftree[r]{ Int }{
 		\begin{array}{l}
 			\forall i\in I.\,\textrm{if } ch_i*=ch!e \textrm{ then} ~ \spec{Q_i}{c_i}{R} \textrm{ and }  
 		    P \rightarrow Q_i[\tracev^\chop \langle ch!,e\rangle/\tracev]\,\wedge \\
 			\qquad\qquad P \rightarrow \forall d>0.\, (\forall t\in[0,d).\, B[\vec{p}_{\vec{x}_0}(t)/\vec{x}]) \to \\
            \qquad\qquad\qquad
            Q_i[\vec{p}_{\vec{x}_0}(d)/\vec{x}, \tracev^\chop \langle d,\vec{p}_{\vec{x}_0},\{\cup_{i\in I} ch_i*\}\rangle^\chop \langle ch!,e[\vec{p}_{\vec{x}_0}(d)/\vec{x}]\rangle/\tracev] \\
 			\qquad\quad \textrm{elif } ch_i*=ch?x \textrm{ then} ~  \spec{Q_i}{c_i}{R} \textrm{ and }   P \rightarrow \forall v.\, Q_i[v/x, \tracev^\chop \langle ch?,v\rangle/\tracev]\,\wedge \\
 			\qquad\qquad P \rightarrow \forall d>0.\, \forall v.\, (\forall t\in[0,d).\, B[\vec{p}_{\vec{x}_0}(t)/\vec{x}]  \to \\
            \qquad\qquad\qquad
 			Q_i[\vec{p}_{\vec{x}_0}(d)/\vec{x},v/x,\tracev^\chop \langle d,\vec{p}_{\vec{x}_0},\{\cup_{i\in I} ch_i*\}\rangle^\chop \langle ch?,v\rangle/\tracev] \\
 			P\rightarrow(\neg B \rightarrow R)\\
 			P\rightarrow (\forall d>0.\, (\forall t\in[0,d).\, B[\vec{p}_{\vec{x}_0}(t)/\vec{x}]  \wedge \neg B[\vec{p}_{\vec{x}_0}(t)/\vec{x}] ) \to   R[\vec{p}_{\vec{x}_0}(d)/\vec{x}, \tracev^\chop \langle d, \vec{p}_{\vec{x}_0}, \{\cup_{i\in I} ch_i*\}\rangle/\tracev]) 
 		\end{array}
 	}
 	{ \spec{P}{\exempt{\evo{x}{e}{B}}{i\in I}{ch_i*}{c_i}}{R} } & \\[1mm]
 	&  \prftree[r]{ Par }
 	{\spec{P_1}{c_1}{Q_1}}{\spec{P_2}{c_2}{Q_2}}
 	{\spec{P_1[\epsilon/\tracev] \wedge P_2[\epsilon/\tracev]}
 		{c_1 \|_{cs} c_2}
 		{\exists tr_1,\, tr_2.\, Q_1[tr_1/\tracev] \wedge Q_2[tr_2/\tracev] \wedge tr_1\|_{cs}tr_2 \Downarrow \tracev}} & \\[1mm]
   &  
	\prftree[r]{Seq}{\spec{P}{c_1}{Q}}{\spec{Q}{c_2}{R}}
 	    {\spec{P}{c_1; c_2}{R}}	\quad 
 	    \prftree[r]{Cond}{\spec{P\wedge b}{c_1}{Q}}{\spec{P \wedge \neg b}{c_2}{Q}}
 	{\spec{P}{\IFE{b}{c_1}{c_2}}{Q}} 
  & \\[1mm]
   &  
 	\prftree[r]{IChoice}{\spec{P_1}{c_1}{Q}}{\spec{P_2}{c_2}{Q}}
 	{\spec{P_1 \wedge P_2}{c_1 \sqcup c_2}{Q}}
  \quad
 \prftree[r]{Rep}{\spec{P}{c}{P}}{\spec{P}{c^*}{P}}
 & \\[1mm]	 
 	& \prftree[r]{Conj}{\spec{P}{ c}{Q} }{\spec{P}{ c}{R} }
 	{\spec{P}{ c}{Q \wedge R}}
 	\quad\prftree[r]{Inv}{\freev(P) \cap (\var(c) \cup \{\tracev\})= \emptyset}{\spec{P}{c}{P}}
 	\quad
 	\prftree[r]{Conseq}{P'\rightarrow P}{Q\rightarrow Q'}{\spec{P}{c}{Q}}{\spec{P'}{c}{Q'}} &   
 	\end{eqnarray*}}  
	\caption{Axioms and inference rules for HCSP constructs}
	\label{fig:hoarelogic}  
\end{figure*}

\begin{itemize}
	 \item
The axioms   for $\pskip$ and assignment, and rules for  sequential composition, conditional  statement and internal choice are as usual. 

\item 
For communication events, we need to consider  when a communication event can happen, as it may need to wait for its dual from the environment for 
synchronization, which could be one of  three 
possibilities,  
see axioms Output and Input. These axioms also provide a way 
to compute the weakest precondition 
w.r.t. a given postcondition. 

\item The axiom for ODE \sm{\evo{x}{e}{B}} considers two cases: when the domain \sm{B} is initially false (and the process terminates immediately), or when the ODE evolves for some positive amount of time \sm{d} (axiom Cont).
In Cont, \sm{\vec{p}_{\vec{x}_0}} is the unique solution of \sm{\vec{\dot{x}}=\vec{e}} starting from \sm{\vec{x}_0}. \oomit{This rule can be directly applied if there is an explicit solution of the ODE. For ODEs without explicit solutions, We will derive more convenient rules based on differential invariants later on.}

\item 
For interrupt (rule Int),  precondition \sm{P} should imply the weakest precondition derived from each of the possibilities. Here \sm{Q_i} is a family of predicates indexed by \sm{i\in I}.

\item  
The rule for repetition is defined with the help of a loop invariant.

\item Moreover, for completeness, several  general rules including invariance, conjunction and consequence are added (rules Inv, Conj and Conseq). 
\end{itemize}

We now turn to the rule for parallel processes (rule Par). Any state \sm{s} of \sm{c_1\|_{cs}c_2} can be written in the form \sm{s_1\uplus s_2}, where \sm{s_1} and \sm{s_2} are states of \sm{c_1} and \sm{c_2}, respectively. Here \sm{P_1} and \sm{Q_1} are predicates on the state for \sm{c_1}, and \sm{P_2} and \sm{Q_2} are predicates on the state for \sm{c_2}. In the postcondition, we require that the trace of the parallel program is a synchronization of the traces of \sm{c_1} and \sm{c_2}.  

For any HCSP process \sm{c}, if \sm{\spec{P}{c}{Q}} is derived by the above inference rules, we write \sm{\vdash \spec{P}{c}{Q}}. The following theorem indicates that the proof systems given in Fig.~\ref{fig:elim-synchronization} and  Fig.~\ref{fig:hoarelogic} are sound.
\begin{theorem}[Soundness] \label{thm:soundness} 
	If \sm{\vdash \spec{P}{c}{Q}}, then \sm{\vDash \spec{P}{c}{Q}}. Furthermore, reasoning about 
	\sm{tr_1\|_{cs}tr_2 \Downarrow \textit{tr}} according to the rules of Fig.~\ref{fig:elim-synchronization} is valid according to the rules in Fig.~\ref{fig:rule-synchronization}. 
\end{theorem}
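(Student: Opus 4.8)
The plan is to prove soundness by rule induction on the derivation of $\vdash \spec{P}{c}{Q}$, establishing that each axiom and inference rule in Fig.~\ref{fig:hoarelogic} preserves validity as defined by $\vDash \spec{P}{c}{Q}$. Recall validity requires that for all $s_1, h_1$ with $\nseman{P}{s_1}{h_1}$ and every big-step relation $(c,s_1) \Rightarrow (s_2, h_2)$, we have $\nseman{Q}{s_2}{{h_1}^\chop h_2}$. The structure of the proof mirrors the structure of the proof system: one case per rule, where for axioms (Skip, Assign, Output, Input, Cont) I verify validity directly by unfolding the big-step semantics, and for inference rules (Seq, Cond, IChoice, Rep, Par, Int, and the structural rules Conj, Inv, Conseq) I assume validity of the premises and derive validity of the conclusion.

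First I would dispatch the routine cases. For Skip and Assign, the single applicable big-step rule (SkipB, AssignB) directly matches the substitution in the precondition, since the semantics of $\tracev$ is the current trace $h_1$ and concatenation with $\epsilon$ is identity. For Seq, the key is that the big-step rule SeqB factors any execution of $c_1;c_2$ through an intermediate state $s_2$ and trace $tr_1$, so chaining the two premises with the associativity of $^\chop$ gives the result; here one must carefully track that the intermediate postcondition $Q$ is evaluated at trace ${h_1}^\chop tr_1$, which is precisely the precondition context for $c_2$. Cond, IChoice follow by case analysis on which big-step rule fired, matching the guard. For Rep, I would induct on the number of loop iterations (equivalently on the derivation of $(c^*,s)\Rightarrow(s',tr)$ via RepB1/RepB2), using the invariant premise $\spec{P}{c}{P}$ at each step. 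The communication axioms Output and Input, and the continuous axiom Cont, require matching each disjunct/conjunct of the weakest-precondition formula against the corresponding big-step rule (OutB1--3, InB1--3, ContB1--2), which is mechanical once the trace-expression semantics is unfolded.

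\textbf{The main obstacle is the parallel case (rule Par), together with the second claim about the trace-elimination rules.} For Par, given $(c_1\|_{cs}c_2, s_1\uplus s_2)\Rightarrow(s_1'\uplus s_2', tr)$, the big-step rule ParB guarantees component executions $(c_1,s_1)\Rightarrow(s_1',tr_1)$ and $(c_2,s_2)\Rightarrow(s_2',tr_2)$ with $tr_1\|_{cs}tr_2\Downarrow tr$. The premises give $\nseman{Q_1}{s_1'}{{h_1}^\chop tr_1}$ and $\nseman{Q_2}{s_2'}{{h_2}^\chop tr_2}$, but the precondition $P_1[\epsilon/\tracev]\wedge P_2[\epsilon/\tracev]$ forces the starting traces of the components to be empty, so the witnesses $tr_1, tr_2$ in the existential postcondition are exactly the component traces and the synchronization relation is the one supplied by ParB. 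The care needed is in the bookkeeping: ensuring $P_i$ holding at $(s_i, h_i)$ with $P_i$ trace-reset still yields the component precondition, and confirming that the state $s_1'\uplus s_2'$ makes the merged-trace postcondition true. The Int rule is similarly involved, since it must simultaneously cover all six big-step rules IntB1--6 (communication-interrupts the ODE in two directions, plus boundary termination), each matched against the corresponding conjunct of the rule's hypothesis.

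For the second statement---that reasoning with the elimination rules of Fig.~\ref{fig:elim-synchronization} is valid with respect to Fig.~\ref{fig:rule-synchronization}---I would argue that each elimination rule is a \emph{sound consequence extraction} from the synchronization relation: assuming $tr_1\|_{cs}tr_2\Downarrow tr$ holds and a given shape of the leading events, the rule's conclusion correctly characterizes $tr$ and the residual synchronization. This is established by inspecting the (finitely many) synchronization rules that could have produced $tr_1\|_{cs}tr_2\Downarrow tr$ for that combination of leading events---essentially an inversion lemma on Fig.~\ref{fig:rule-synchronization}---and checking that the disjunction or equation in the elimination rule exactly enumerates the possibilities. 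The compatibility side-conditions ($\compat(\rdy_1,\rdy_2)$) and channel-membership conditions ($ch\in cs$ versus $ch\notin cs$) ensure exactly one elimination rule applies per case, so correctness is a rule-by-rule correspondence with the construction rules, with the deadlock cases (SyncUnPairE4, SyncWaitE1, SyncEmptyE2) matching the $\delta$-producing or stuck configurations.
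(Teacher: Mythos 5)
Your proposal is correct and takes essentially the same approach as the paper: induction over the proof rules (the paper phrases it as induction on the structure of $c$), matching each Hoare rule against the corresponding big-step rules---including the Par case resolved via ParB with the $\epsilon$-trace precondition supplying the existential witnesses, and Rep via an inner induction on iteration count---with the second claim established, exactly as you suggest, by rule-by-rule inversion of the synchronization rules of Fig.~\ref{fig:rule-synchronization}. Your rule-induction framing even accounts for the structural rules (Conj, Inv, Conseq) slightly more explicitly than the paper, which dismisses them as routine, but this is a cosmetic difference rather than a distinct method.
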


\begin{proof}[Proof for Theorem~\ref{thm:soundness}]	
	Suppose \sm{(c,s) \Rightarrow (s',h')} and \sm{\nseman{P}{s}{h}}, we need to prove \sm{\nseman{Q}{s'}{h^\chop h'}}. We show this by induction on the structure of program \sm{c}.
	
	\begin{itemize}
		\item Assign: From \sm{\nseman{Q[e/x]}{s}{h}}, it follows \sm{\nseman{Q}{s[x \mapsto e]}{h}}. It is also clear that \sm{Q[e/x]} is the weakest precondition.
		
		\item Output: Assume
		\[\begin{array}{ll}
			Q[\tracev^\chop \langle ch!, e\rangle/\tracev]\,\wedge \\
			\forall d>0.\,Q[\tracev^\chop \langle d, I_{\vec{x}_0}, \{ch!\}\rangle^\chop \langle ch!, e\rangle/\tracev]\,\wedge 
			Q[\tracev^\chop \langle \infty, I_{\vec{x}_0},\{ch!\}\rangle/\tracev]
		\end{array}\]
		holds in \sm{(s, h)}. The three parts of the conjunction correspond to the three big-step rules for output. Since the first part holds, we get \sm{\nseman{Q}{s}{h^\chop \langle ch!, s(e) \rangle}}, showing \sm{Q} holds after following the semantic rule \sm{\textrm{OutB1}}. Since the second part holds, we get \sm{\nsemans{Q}{s}{h^\chop h'}{\lvar}}, where \sm{h'= \langle d, I_s, \{ch!\}\rangle^\chop \langle ch!, s(e)\rangle}  for any \sm{d>0}, showing \sm{Q} holds after following the second semantic rule \sm{\textrm{OutB2}}. Since the third part holds, we get \sm{\nseman{Q}{s}{h^\chop h'}}, where \sm{h' = \langle \infty,I_s,\{ch!\}\rangle}, showing \sm{Q} holds after following the third semantic rule \sm{\textrm{OutB3}}. The above analysis also shows that the precondition is in fact the weakest liberal precondition.
		
		\item Input: Assume
		\[
		\begin{array}{ll}
			\forall v.\, Q[v/x, \tracev^\chop \langle ch?, v\rangle/\tracev]\,\wedge \\
			\forall d>0.\,\forall v.\, Q[v/x, \tracev^\chop \langle d, I_{\vec{x}_0},\{ch?\}\rangle ^\chop \langle ch?,v\rangle/\tracev]\,\wedge 
			Q[\tracev^\chop\langle \infty, I_{\vec{x}_0}, \{ch?\}\rangle/\tracev]
		\end{array}
		\]
		holds in \sm{(s, h)}. The three parts of the conjunction correspond to the three big-step rules for input. Since the first part holds, we get \sm{\nsemans{Q}{s[x\mapsto v]}{h^\chop \langle ch?, v \rangle}{\lvar}} for any \sm{v}, showing \sm{Q} holds after following the semantic rule \sm{\textrm{InB1}}. Since the second part holds, we get \sm{\nsemans{Q}{s[x\mapsto v]}{h^\chop h' }{\lvar}}, where \sm{h' =\langle d, I_s, \{ch?\}\rangle^\chop \langle ch?, v \rangle}  for any \sm{d>0} and \sm{v}, showing \sm{Q} holds after following the semantic rule \sm{\textrm{InB2}}. Since the third part holds, we get \sm{\nseman{Q}{s}{h^\chop h'}}, where \sm{h' = \langle \infty,I_s,\{ch?\}\rangle}, showing \sm{Q} holds after following the semantic rule \sm{\textrm{InB3}}. The above analysis also shows that the precondition is in fact the weakest liberal precondition.
		
		\item Sequence: By induction, we have \sm{\vDash\spec{Q}{c_2}{R}}. According to the semantics rule \sm{\textrm{SeqB}}, there must exist \sm{(s_1, h_1)} such that \sm{(c_1,s) \Rightarrow (s_1,h_1)} and \sm{(c_2,s_1) \Rightarrow (s',h_2)} and \sm{h'=h_1 {^\chop} h_2}. The Hoare triple for \sm{c_1} gives \sm{\nseman{Q}{s_1}{h^\chop h_1}}, then the Hoare triple for \sm{c_2} gives
		\sm{\nseman{R}{s'}{(h^\chop h_1)^\chop h_2}}, which is equal to \sm{\nseman{R}{s'}{h^\chop h'}}.
		
		\item The proofs for conditional rule and internal choice are as usual.
		
		\item Repetition: By induction, we have \sm{\vDash \spec{P}{c}{P}}. According to the operational semantics, there are two cases. The first case is \sm{h'=\epsilon} (rule \sm{\textrm{RepB1}}). Then \sm{\nseman{P}{s}{h}} holds directly. In the second case, there exist \sm{s_1, h_1} and \sm{h_2} such that
		\[(c, s) \Rightarrow (s_1, h_1), (c^*, s_1) \Rightarrow (s', h_2) \] 
		and \sm{h'=h_1 {^\chop} h_2} (rule \sm{\textrm{RepB2}}). From the Hoare triple for \sm{c}, we have \sm{\nseman{P}{s_1}{h^\chop h_1}}; then by induction on the number of iterations, we get the \sm{\nseman{P}{s'}{(h^\chop h_1)^\chop h_2}}, which is 
		equal to \sm{\nseman{P}{s'}{h^\chop h'}}.
	
	\oomit{	\item  Recursion: The proof is standard. Assume $P$ holds in $(s, h)$. For recursion, if it terminates, there must be a finite recursive depth. The fact holds for depth 0 trivially.  Suppose the fact holds for depth n, then $\vDash \spec{P}{\mu X. c}{Q}$ holds for depth n. We need to prove the fact holds for depth $n+1$. As $\vDash \spec{P}{\mu X. c}{Q}$, then
		by assumption,  \sm{\vDash \spec{P}{c[\mu X. c/X]}{Q}}  is true, which corresponds to exactly  the $n+1$-th call. According to the semantic rule RecB,  \sm{(c[\mu X. c/X], s) \Rightarrow (s',h')}, then \sm{\nseman{Q}{s'}{h^\chop h'}} is proved by induction. Similarly, we can prove that the weakest liberal precondition defined by unfolding the body is also sound. }
		
		\item Continuous: Assume
		\[  
		\begin{array}{ll}
			(\neg B \rightarrow Q)\,\wedge \\
			\forall d>0.\,(\forall t\in[0,d).\,B[\vec{p}_{\vec{x}_0}(t)/\vec{x}])\wedge \neg B[\vec{p}_{\vec{x}_0}(d)/\vec{x}] \to 
			Q[\vec{p}_{\vec{x}_0}(d)/\vec{x}, \tracev^\chop \langle d, \vec{p}_{\vec{x}_0}, \emptyset\rangle/\tracev]
		\end{array}
		\]
		holds in \sm{(s, h)}. There are two parts of the conjunction, corresponding to the evaluation of \sm{B} on \sm{s}.
		If \sm{B} is false in \sm{s}, according to the operational semantics (rule \sm{\textrm{ContB1}}), \sm{s' = s} and \sm{h'} is \sm{\varepsilon}. From the first part of the conjunction, we get \sm{\nseman{Q}{s}{h}}, which is equal to \sm{\nseman{Q}{s'}{h^\chop h'}}, as desired.
		
		Now suppose \sm{B} is true in \sm{s}, according to the operational semantics (rule \sm{\textrm{ContB2}}), suppose \sm{d>0} and the solution \sm{\vec{p}} of the ODE starting from \sm{s} satisfies \sm{\forall t\in[0,d).\,s[\vec{x}\mapsto \vec{p}_{\vec{x}_0}(t)](B)} and \sm{s[\vec{x}\mapsto \vec{p}_{\vec{x}_0}(d)](\neg B)}, then the final state and trace are \sm{s'=s[\vec{x} \mapsto \vec{p}_{\vec{x}_0}(d)]}  and \sm{h'= \langle d, \vec{p}, \emptyset\rangle}. From the second part of the conjunction, with \sm{\seman{\vec{p}_{\vec{x}_0}}_{s,h}=\vec{p}}, we get \sm{\nsemans{Q}{s'}{h^\chop \langle d, \vec{p}, \emptyset\rangle}{\lvar}}, as desired. The above analysis also shows that the precondition is in fact the weakest liberal precondition.
		
		\item Interrupt: There are four parts in the precondition of the rule \sm{\textrm{Int}}. The first two parts are for interrupt by output and input communication, respectively. The third part is for the case that violation of \sm{B} holds initially, and the fourth part is for the case that after some time \sm{d>0}, \sm{B} violates. Here we give the proof for the case of interrupt by output communication. The input case is also similar.
		
		Assume the semantic rule \sm{\textrm{IntB2}} is applied, suppose \sm{d>0} and the solution \sm{\vec{p}} of the ODE starting from \sm{s} satisfies \sm{\forall t\in[0,d).\,s[\vec{x}\mapsto \vec{p}(t)](B)}, and there exists \sm{ch!e \in ch_i*} and \sm{(c_i,s[\vec{x} \mapsto \vec{p}(d)])\Rightarrow (s_2,h_2)}. Then the final state and trace of the interrupt command is
		\[ (s_2, \langle d, \vec{p}, \{\cup_{i\in I} ch_i*\}\rangle^\chop \langle ch!,s[\vec{x} \mapsto \vec{p}(d)](e)\rangle ^\chop h_2). \]
		From the assumption 
		{
		\[
		\begin{array}{ll}
			P \rightarrow \forall d>0.\, (\forall t\in[0,d).\, B[\vec{p}_{\vec{x}_0}(t)/\vec{x}]) \to \\
			\quad Q_i[\vec{p}_{\vec{x}_0}(d)/\vec{x}, \tracev^\chop \langle d,\vec{p}_{\vec{x}_0},\{\cup_{i\in I} ch_i*\}\rangle^\chop \langle ch!,e[\vec{p}_{\vec{x}_0}(d)/\vec{x}]\rangle/\tracev] 
		\end{array}
		\]}
		
		\noindent plus that \sm{P} holds for \sm{(s, h)}, we have the right side of the entailment holds for \sm{(s, h)}. Then the assumptions on \sm{d} and \sm{\vec{p}} gives
		{
		\[Q_i[\vec{p}_{\vec{x}_0}(d)/\vec{x}, \tracev^\chop \langle d,\vec{p}_{\vec{x}_0},\{\cup_{i\in I} ch_i*\}\rangle^\chop \langle ch!,e[\vec{p}_{\vec{x}_0}(d)/\vec{x}]\rangle/\tracev] \]
		}
		
		\noindent must hold for \sm{(s, h)}. Thus
		\sm{Q_i} must hold for \sm{s[\vec{x} \mapsto \vec{p}_{\vec{x}_0}(d)], h_1)} with 
		{
		\[ h_1 = h ^\chop \langle d_0,\vec{p},\{\cup_{i\in I} ch_i*\}\rangle^\chop \langle ch!,s[\vec{x} \mapsto \vec{p}(d)](e))\rangle\]
    	}
    	
		\noindent By the inductive assumption on \sm{c_i}, we have \sm{\nseman{R}{s_2}{h_1 {^\chop} h_2}}, which is 
		\[ \nseman{R}{s_2}{h ^\chop \langle d_0,\vec{p},\{\cup_{i\in I} ch_i*\}\rangle^\chop \langle ch!,s[\vec{x} \mapsto \vec{p}(d)](e)\rangle^\chop h_2} \]
		The above proof shows the case where the output interrupt occurs after time \sm{d>0}. There is another simpler case without waiting time, we omit the details here.
		
		
		\item Parallel composition: Assume \sm{P_1[\epsilon/\tracev] \wedge P_2[\epsilon/\tracev]} holds in \sm{(s_1\uplus s_2, h)}, 
		and \sm{(c_1 \|_{cs} c_2, s_1\uplus s_2) \Rightarrow (s_1'\uplus s_2', h')}, so that there exist $h_1$ and $h_2$ such that \sm{(c_1, s_1) \Rightarrow (s_1', h_1)$, $(c_2, s_2) \Rightarrow (s_2', h_2)} and \sm{h_1\|_{cs}h_2 \Downarrow h'} hold, according to the semantic rule \sm{\textrm{ParB}}. We need to prove 
		\[ \exists tr_1\, tr_2.\, Q_1[tr_1/\tracev] \wedge Q_2[tr_2/\tracev] \wedge tr_1\|_{cs}tr_2 \Downarrow \tracev \]
		holds for \sm{(s_1'\uplus s_2', h^\chop h')}. From the assumption that \sm{P_1[\epsilon/\tracev] \wedge P_2[\epsilon/\tracev]} holds in \sm{(s_1\uplus s_2, h)}, we have \sm{h=\varepsilon}, \sm{\nseman{P_1}{s_1}{h}} and \sm{\nseman{P_2}{s_2}{h}}. By induction on \sm{c_1} and \sm{c_2}, we get \sm{\nseman{Q_1}{s'_1}{h_1}} and \sm{\nseman{Q_2}{s'_2}{h_2}}. On the other hand, we also have \sm{ \seman{tr_1}_{s'_1}^\lvar = h_1} and
		\sm{\seman{tr_2}_{s'_2}^\lvar = h_2}. 
		
		\ommit{
		We need to prove \sm{\nseman{tr_1\|_{cs}tr_2 \Downarrow \tracev}{s_1'\uplus s_2'}{h'}}, on the premises that \sm{h_1\|_{cs}h_2 \Downarrow h} defined according to Fig.~\ref{fig:rule-synchronization} and \sm{tr_1\|_{cs}tr_2 \Downarrow \tracev} defined according to Fig.~\ref{fig:elim-synchronization}.}
		
		\oomit{
		This can be proved by structural induction on \sm{tr_1} and \sm{tr_2}. We prove two cases here for illustration. If \sm{tr_1} and \sm{tr_2} are beginning with two compatible events, the only rule that can be applied is rule SyncPairE of Fig.~\ref{fig:elim-synchronization}. Then according to rule  SyncPairE, \sm{tr} is \sm{\langle ch,v\rangle^\chop \textit{tr}'}, with \sm{tr'_1\|_{cs}tr'_2 \Downarrow \textit{tr}'} and \sm{ch_1=ch_2=ch \wedge v_1 = v_2 =v \wedge (*_1,*_2)\in \{(!,?),(?,!)\}} for some \sm{ch} and \sm{v}. From the assumption \sm{\seman{tr_1}_{s'_1}^\lvar = h_1} and \sm{\seman{tr_2}_{s'_2}^\lvar = h_2}, \sm{h_1} and \sm{h_2} will be \sm{\langle ch!,v\rangle^\chop h'_1 } and \sm{\langle ch?,v\rangle ^\chop h'_2} for which \sm{h'_1 = \seman{tr'_1}_{s'_1}^\lvar} and \sm{h'_2 = \seman{tr'_2}_{s'_2}^\lvar}. Thus, according to the definition of \sm{\textit{tr}} and \sm{h_1\|_{cs}h_2 \Downarrow h}, the fact  \sm{\nseman{tr_1\|_{cs}tr_2 \Downarrow \textit{tr}}{s_1'\uplus s_2'}{h'}} is proved by induction. If both \sm{tr_1} and \sm{tr_2} are beginning with two external events, the only rule that can be applied is rule SyncUnpairE2 of Fig.~\ref{fig:elim-synchronization}. There are two possibilities for \sm{tr}.\sm{h_1} and \sm{h_2} are also beginning with two corresponding external events, if \sm{h} is produced by taking the left event first. Then it is guaranteed by the reduction of the left event in rule SyncUnpairE2, i.e. 
		\sm{tr = \langle ch_1\triangleright_1, v_1 \rangle ^\chop tr'} for some \sm{tr'} and \sm{tr_1\|_{cs} \langle ch_2\triangleright_2,v_2\rangle^\chop tr_2 \Downarrow \textit{tr}'}.  Thus \sm{\nseman{tr_1\|_{cs}tr_2 \Downarrow \textit{tr}}{s_1'\uplus s_2'}{h'}} is proved by induction. The other case of \sm{h} can be proved also. }
		
		Thus the existence condition holds for \sm{(s_1'\uplus s_2', h')} by taking \sm{tr_1} and \sm{tr_2} to be \sm{h_1, h_2}, respectively. This analysis also shows that the postcondition is in fact the strongest postcondition.
		
		\item The proof for the soundness of the rest rules are  as usual.
		
	\end{itemize}
	
We now prove that each rule in Fig.~\ref{fig:elim-synchronization} is valid according to the rules of Fig.~\ref{fig:rule-synchronization}. Selection of the proofs of the rules are given due to similarity. 
\begin{itemize}
\item Rule SyncPairE: Since \sm{ch_1\in cs} and \sm{ch_2\in cs}, the assumption can only be derived from rule SyncIO, and the result follows.

\
item Rule SyncUnpairE1: Since \sm{ch_1\notin cs}, the assumption cannot be derived from rule SyncIO, so only NoSyncIO can be used, and the result follows. 
Derivation for its symmetric counterpart is similar.

\item Rule SyncUnpairE2: the assumption can be derived using  NoSyncIO or its symmetric case. These two cases correspond to the two cases of the disjunction, respectively.

\item Rule SyncUnpairE3: only rule SyncUnpairE1 can be used to derive the assumption, and the result follows. Derivation for its symmetric counterpart   is similar.

\item Rule SyncUnpairE4: the negation of the conclusion cannot be derived using any rule. Note SyncUnpairE1 cannot be used since \sm{ch \in cs}.

\item Rule SyncWaitE1: this rule states that two processes cannot be waiting for two sides of the same communication at the same time. The negation of the conclusion can be derived using only one of SyncWait1, SyncWait2 and its symmetric case. However, all these rules require the condition \sm{\compat(\rdy_1,\rdy_2)}.

\item Rules SyncWaitE2, SyncWaitE3 and its symmetric case: the assumptions of the three rules only be derived from SyncWait1, SyncWait2 and its symmetric case, respectively, so the result follows.

\item Rule SyncEmpE1: the assumption of the rule can be derived only using SyncEmpty3.

\item Rule SyncEmpE2: there is no introduction rule that can derive the negation of the conclusion.

\item Rule SyncEmpE3: the only introduction rule that can derive the assumption is NoSyncIO. This rule requires that \sm{ch \notin cs}, so the result follows. Derivation of its symmetric counterpart   is similar.
\end{itemize}
\end{proof}

\subsection{Incompleteness and Undecidability}
Obviously, \sm{\models \{\top\} ~ {\textbf{while} ~B ~ \textbf{ do } ~S} ~ \{\bot\}} 
if and only if
\sm{\textbf{while} ~B ~ \textbf{do} ~S} does not terminate. As argued in \cite{Cook78}, in order to 
specify termination, an assertion logic should be at least as expressive as Peano arithmetic, which is not 
complete according to G\"{o}del's Incompleteness Theorem \cite{godel1931}. So, the proof system of HHL is 
not complete. Moreover, its validity is not decidable either, even not semi-decidable, as \emph{multiple-path polynomial programs} (MPP), whose termination problem is even not semi-decidable \cite{Manna05},  can be easily modelled with HCSP.

\begin{theorem}[Incompleteness and Undecidability]
The proof system of HHL is incomplete, and the validity of HHL is undecidable, even not semi-decidable. 
\end{theorem}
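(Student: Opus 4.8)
The plan is to prove the two assertions separately, both resting on the single observation flagged before the statement: for every process $c$, $\models \spec{\top}{c}{\bot}$ holds iff $c$ has no terminating big-step derivation from any initial state, because validity then quantifies universally over executions whose postcondition $\bot$ is unsatisfiable. Thus validity of such triples coincides with (universal) non-termination of $c$, and I can route both the Gödel-style incompleteness argument and the recursion-theoretic undecidability argument through it.

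For \emph{incompleteness} I would make precise that the gap between $\models$ and $\vdash$ is exactly the non-recursiveness of truth in the assertion logic. The key fact is that \FODT{} subsumes first-order arithmetic: following Platzer's characterization of FOD~\cite{Platzer08}, the naturals and their operations are definable, so any arithmetical sentence $\phi$ coerces into a closed \FODT{} formula, and, via the Skip axiom, $\models \spec{\top}{\pskip}{\phi}$ holds iff $\phi$ is true in the standard model. Now observe that $\vdash$ is a recursively enumerable relation (finitely many axiom/rule schemata over a recursive syntax) and, by soundness (Theorem~\ref{thm:soundness}), every derivable triple is valid. Hence, if the system were complete, the derivable triples of the form $\spec{\top}{\pskip}{\phi}$ would be \emph{exactly} the valid ones, yielding through the computable map $\phi \mapsto \spec{\top}{\pskip}{\phi}$ a recursive enumeration of all true arithmetical sentences, contradicting Gödel's incompleteness theorem~\cite{godel1931}. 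I would emphasize that this concerns \emph{absolute} (effective) completeness and is fully consistent with the relative completeness of Sections~\ref{sec:completeness}--\ref{sec:discretecomplete}, which presuppose an oracle for valid \FODT{} formulas; the incompleteness is precisely the non-recursiveness of that oracle.

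For \emph{undecidability and non-semi-decidability} I would reduce from multiple-path polynomial programs (MPP). An MPP is a loop with polynomial guards and finitely many polynomial-assignment branches, each directly expressible with assignment, conditional and repetition, so every MPP $P$ admits a faithful HCSP encoding $\widehat{P}$. Here I would use $\textbf{while}\,B\,\textbf{do}\,S \pdef (\IFE{B}{S}{\pskip})^{*};\,\IFE{B}{\evo{z}{1}{\true}}{\pskip}$, where the trailing never-exiting ODE has \emph{no} terminating big-step derivation (neither ContB1 nor ContB2 applies when the domain is $\true$), so that a terminating run of $\widehat{P}$ exists iff the loop reaches $\neg B$, matching exactly the halting runs of $P$. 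By Manna \emph{et al.}~\cite{Manna05} the MPP termination problem is not semi-decidable. Using $P \mapsto \spec{\top}{\widehat{P}}{\bot}$, whose value is a valid triple exactly when $P$ never terminates, I obtain a many-one reduction of a properly non-r.e. set to HHL validity; therefore the set of valid triples is not recursively enumerable, HHL validity is not semi-decidable, and (since decidable sets are r.e.) a fortiori undecidable.

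The main obstacle is making the two reductions \emph{faithful and tight} rather than merely plausible. Concretely I must verify, against the big-step rules of Fig.~\ref{fig:full-big-semantics}, that $(\widehat{P},s)\Rightarrow(s',tr)$ exists iff the modelled program halts, taking care that the zero-iteration clause $(c^*,s)\Rightarrow(s,\epsilon)$ and the free trace component of the semantics introduce no spurious terminating runs, and that overrunning the loop (extra iterations once $\neg B$ holds) leaves the final state unchanged. For the undecidability half the delicate point is recursion-theoretic bookkeeping: I must match the $\spec{\top}{\widehat{P}}{\bot}$ image to the non-semi-decidable side of the Manna \emph{et al.} classification, relating \emph{universal} non-termination (quantified over initial states) to their problem and confirming that the relevant set, and not merely its complement, is non-r.e. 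For the incompleteness half the analogous care is the \emph{uniform} definability of arithmetic inside \FODT{}, which I would import from~\cite{Platzer08} rather than reconstruct.
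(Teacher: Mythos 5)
Your proposal is correct and follows essentially the same route as the paper, whose proof is just the sketch preceding the theorem: incompleteness via the Cook/G\"odel argument that the assertion logic must subsume Peano arithmetic (you transfer arithmetic truth through $\spec{\top}{\pskip}{\phi}$ rather than through the non-termination triple $\spec{\top}{\textbf{while}~B~\textbf{do}~S}{\bot}$, a cosmetic difference), and non-semi-decidability via encoding MPPs in HCSP and invoking the Manna~\emph{et al.} result. Your elaboration is in fact more careful than the paper's --- the explicit while-loop encoding $(\IFE{B}{S}{\pskip})^{*};\IFE{B}{\evo{z}{1}{\true}}{\pskip}$ exploiting that a domain-$\true$ ODE has no terminating big-step derivation, the r.e.-enumerability of $\vdash$, and the flagged recursion-theoretic bookkeeping about which side of the MPP termination problem is non-r.e.\ are all details the paper glosses over.
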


Additionally, as we pointed out before, in order to deal with communication and concurrency, we introduce \emph{generalized events}, 
\emph{traces} and \emph{trace synchronization}. Thus, the execution of 
a process may start with some history trace, but the following 
theorem indicates that our proof system can guarantee any execution of a process itself is indeed independent of any history trace, which is in accordance with  the healthiness condition given in UTP \cite{UTP1998}.
\begin{theorem} \label{thm:trace-independence}
If $\vdash \spec{P\wedge \tracev=\epsilon}{c}{Q\wedge \tracev=h} $, then for any trace $h'$ with $\freev(h') \cap \wvar(c) = \emptyset$, we have 
 $\vdash \spec{P\wedge \tracev=h'}{c}{Q\wedge \tracev=h'^\chop h} $, where $\wvar(c)$ stands for the variables that are updated  by $c$. 
\end{theorem}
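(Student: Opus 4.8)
The key semantic observation behind the statement is that the big-step relation $(c,s)\Rightarrow(s',\mathit{tr})$ is \emph{history-independent}: it constrains only the initial state $s$, the final state $s'$, and the trace fragment $\mathit{tr}$ that $c$ itself \emph{produces}, never referring to any trace that may already precede it. Moreover $c$ alters only the variables in $\wvar(c)$, so a trace expression $h'$ with $\freev(h')\cap\wvar(c)=\emptyset$ evaluates identically before and after running $c$, i.e. $\seman{h'}_{s}^{\lvar}=\seman{h'}_{s'}^{\lvar}$ whenever $(c,s)\Rightarrow(s',\mathit{tr})$. These two facts are exactly what makes prepending a fixed history $h'$ harmless, and the $\wvar$ side condition is precisely what keeps $h'$ constant across the run.

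The plan is to derive the theorem from a single form-agnostic \emph{frame lemma}, proved by induction on the derivation of the premise. For a trace expression $h'$ and a fresh trace-valued logical variable $w$, write $\phi^{\downarrow h'}\pdef\exists w.\,\tracev=h'^\chop w\wedge\phi[w/\tracev]$; semantically $\nseman{\phi^{\downarrow h'}}{s}{g}$ holds iff $g$ has prefix $\seman{h'}_s^{\lvar}$ and $\phi$ holds of the corresponding suffix. The frame lemma then reads: if $\vdash\spec{P_0}{c}{Q_0}$ and $\freev(h')\cap\wvar(c)=\emptyset$, then $\vdash\spec{P_0^{\downarrow h'}}{c}{Q_0^{\downarrow h'}}$. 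Granting it, the theorem follows by instantiation: using associativity of $^\chop$ and $h'^\chop\epsilon=h'$ one checks $(P\wedge\tracev=\epsilon)^{\downarrow h'}\equiv P\wedge\tracev=h'$ and $(Q\wedge\tracev=h)^{\downarrow h'}\equiv Q\wedge\tracev=h'^\chop h$, where we use that $P$ and $Q$ impose no constraint on $\tracev$ beyond the displayed conjunct (the trace being tracked explicitly by that conjunct, as is standard for these triples).

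In the induction over the last rule, the structural rules are light: $^{\downarrow h'}$ is monotone and commutes with $\wedge$, so Conseq and Conj pass through directly, and the compositional rules Seq, Cond, IChoice and Rep simply thread the frame uniformly (the intermediate assertion of Seq being framed the same way on both sides). The real computation is in the wlp axioms Assign, Output, Input, Cont and Int: for each one must show that applying the axiom to the framed postcondition $Q^{\downarrow h'}$ yields a weakest precondition equivalent to (the frame of the original weakest precondition). This is a trace-algebra calculation inside {\FODT}: these axioms only ever extend $\tracev$ on the \emph{right} (by events such as $\langle ch!,e\rangle$ or $\langle d,\vec{p}_{\vec{x}_0},\emptyset\rangle$), so the fixed left prefix $h'$ commutes past them by associativity of $^\chop$, while the state substitutions $[e/x]$, $[v/x]$, $[\vec{p}_{\vec{x}_0}(d)/\vec{x}]$ leave $h'$ untouched precisely because $\freev(h')\cap\wvar(c)=\emptyset$. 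One technical wrinkle is the Inv rule: since it is a leaf whose conclusion $\spec{P}{c}{P}$ carries $\tracev\notin\freev(P)$, the framed assertion $P^{\downarrow h'}\equiv P\wedge(\exists w.\,\tracev=h'^\chop w)$ now mentions $\tracev$ and cannot be handled by Inv again. I would dispatch this with a small standalone lemma---that every $c$ with $\freev(h')\cap\wvar(c)=\emptyset$ preserves ``$\tracev$ begins with $h'$'', proved by a routine structural induction on $c$ since every construct only appends on the right---and then combine it with $\spec{P}{c}{P}$ through Conseq and Conj.

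The main obstacle I expect is the parallel rule Par. Its conclusion \emph{resets} the component traces to $\epsilon$ (via $P_i[\epsilon/\tracev]$) and identifies the overall trace with the synchronization through $tr_1\|_{cs}tr_2\Downarrow\tracev$. Here the history $h'$ must be prepended to the \emph{combined} trace as a whole rather than distributed into the two components, so the frame cannot simply be pushed into the premises by the induction hypothesis. The plan is to keep the component derivations unframed, re-apply Par to obtain $tr_1\|_{cs}tr_2\Downarrow w$ for the produced part $w$, and then attach the spectator prefix as $\tracev=h'^\chop w$, matching the shape of $E^{\downarrow h'}$; the $\wvar$ condition once more guarantees $h'$ stays constant across the parallel run, legitimising its treatment as a fixed spectator that plays no role in the synchronization rules of Fig.~\ref{fig:rule-synchronization}. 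Reconciling this trace reset with a non-empty history is the step that will require the most care.
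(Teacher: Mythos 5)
There is a genuine gap, and it sits exactly where you yourself predicted: the parallel case. Your frame lemma, with $\phi^{\downarrow h'}\pdef\exists w.\,\tracev=h'^\chop w\wedge\phi[w/\tracev]$, is strictly stronger than the theorem, and it is \emph{false} at the rule Par. The framed precondition $(P_1[\epsilon/\tracev]\wedge P_2[\epsilon/\tracev])^{\downarrow h'}$ only requires the initial trace to \emph{begin} with $h'$, so it is satisfied at any initial trace $h'^\chop w_0$ with junk $w_0\neq\epsilon$ (the substituted conjuncts mention no $\tracev$ at all). After running $c_1\|_{cs}c_2$ the produced part $g'$ is a synchronized trace, but the framed postcondition demands that the suffix after $h'$, namely $w_0^\chop g'$, itself satisfy $\exists tr_1,tr_2.\,Q_1[tr_1/\tracev]\wedge Q_2[tr_2/\tracev]\wedge tr_1\|_{cs}tr_2\Downarrow\tracev$ --- which fails for incompatible $w_0$ (take, e.g., $c_1=ch!1$, $c_2=ch?x$ with $Q_1,Q_2$ their strongest postconditions, and $w_0$ a mismatched event). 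So the framed triple is semantically invalid, hence by soundness (Theorem~\ref{thm:soundness}) underivable, and no induction can establish it. Your proposed repair --- derive the unframed Par conclusion and then ``attach the spectator prefix $\tracev=h'^\chop w$'' --- is not an inference the system supports: Conseq only moves along implications, Inv forbids $\tracev$, and no rule prepends a history to a derived triple; indeed that prepending step \emph{is} the theorem being proved, so the plan is circular at precisely the case you flagged. Your auxiliary lemma for Inv (every $c$ preserves ``$\tracev$ begins with $h'$'') breaks at Par for the same reason: the Par postcondition asserts $\tracev$ is itself a synchronization and entails no $h'$-prefix.

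The fix is to drop the existential slack and induct on the \emph{pinned} statement with $h'$ universally quantified in the induction hypothesis --- that is, on the theorem's own form, which forces the suffix $w_0$ to be $\epsilon$. This is what the paper does: structural induction on $c$, re-applying each rule to the shifted trace expressions (your observation that the side condition $\freev(h')\cap\wvar(c)=\emptyset$ keeps $h'$ inert under the substitutions $[e/x]$, $[\vec{p}_{\vec{x}_0}(d)/\vec{x}]$, etc.\ is exactly the paper's), with Seq instantiating the hypothesis at the extended history $h'^\chop h_m$ for $c_2$ --- precisely the threading your prefix form was designed to avoid, and the price of avoiding it is falsity at Par. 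For parallel composition the paper leans on the fact that the Par rule resets both component traces to $\epsilon$, so no frame ever enters the components and the history is handled only in the global trace conjunct. Your sequential cases (the wlp-axiom trace algebra, commutation of the frame with Conseq and Conj, and the harmless assumption that $P$, $Q$ carry their trace content in the displayed conjuncts) are sound as far as they go, but the parallel case needs the pinned formulation, not a patch of the prefix one.
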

\begin{proof}[Proof for Theorem~\ref{thm:trace-independence}]
We give a proof sketch for this theorem by structural induction on $c$. For all the cases, suppose $\vdash \spec{P\wedge \tracev=\epsilon}{c}{Q\wedge \tracev=h}$ holds, we need to prove $\vdash \spec{P\wedge \tracev=h'}{c}{Q\wedge \tracev=h'^\chop h} $ holds when $\freev(h') \cap \wvar(c) = \emptyset$. Next we use $\equiv$ to represent that two assertions are equivalent. 

\begin{itemize}
    \item Rule Skip: we have $P \equiv Q$ and $h = \epsilon$, the fact holds trivially. 
    
    \item Rule Assign: we have $(P\wedge \tracev=\epsilon) \equiv ((Q\wedge \tracev=h)[e/x])$, then $P\equiv Q[e/x]$ and $h = \epsilon$. By applying rule Assign, $\spec{P\wedge \tracev=h'}{c}{Q\wedge \tracev=h'^\chop h}$  holds when $x$ does not occur in $h'$. This is guaranteed by the restriction $\freev(h') \cap \wvar(c) = \emptyset$. 
    
    \item Rule Output: According to the rule, $P\wedge \tracev=\epsilon$ is equivalent to 
    \[\begin{array}{ll}
			(Q\wedge \tracev=h)[\tracev^\chop \langle ch!, e\rangle/\tracev]\,\wedge \\
			\forall d>0.\,(Q\wedge \tracev=h)[\tracev^\chop \langle d, I_{\vec{x}_0}, \{ch!\}\rangle^\chop \langle ch!, e\rangle/\tracev]\,\wedge 
			(Q\wedge \tracev=h)[\tracev^\chop \langle \infty, I_{\vec{x}_0},\{ch!\}\rangle/\tracev]
		\end{array}\]
		Then by replacing $h$ by $h'^\chop h$ in the above formula, and denoting the resulting formula as $Pre'$, then we need to prove that $Pre'$ is equivalent to $P \wedge \tracev=h'$. The proof is given below. In fact, there are three cases for the conjunction depending on whether and when communication occurs. If the first case occurs, we have $(P\wedge \tracev=\epsilon) \equiv ((Q\wedge \tracev=h)[\tracev^\chop\langle ch!, e\rangle/\tracev])$, then $P \equiv Q$ and $h \equiv \langle ch!, e\rangle$. Then by applying the same rule for postcondition $Q\wedge \tracev=h'^\chop h$, we get the precondition $P \equiv \tracev=h'$, the fact is proved.
    If the second case occurs, we have  $(P\wedge \tracev=\epsilon) \equiv \forall d>0. ((Q\wedge \tracev=h)[\tracev^\chop \langle d, I_{\vec{x}_0}, \{ch!\}\rangle^\chop \langle ch!, e\rangle/\tracev])$, then  $P \equiv Q$ and $h \equiv \langle d, I_{\vec{x}_0}, \{ch!\}\rangle^\chop \langle ch!, e\rangle$ for any $d>0$. Then by applying the same rule for postcondition $Q\wedge \tracev=h'^\chop h$, we get the precondition $P \equiv \tracev=h'$, the fact is proved.
    The third case can be proved similarly and we omit it. 
    
    \item Rule Input: The proof can be given by combining the proofs for output and assignment. We omit the details here.
    
    \item Rule Cont: According to the rule, $P \wedge \tracev = \epsilon$ is equivalent to 
    \[
    \begin{array}{ll}
 			(\neg B \rightarrow (Q\wedge \tracev=h))\,\wedge 
 			\forall d>0.\,(\forall t\in[0,d).\,B[\vec{p}_{\vec{x}_0}(t)/\vec{x}])\wedge \neg B[\vec{p}_{\vec{x}_0}(d)/\vec{x}] \to \\
 			\qquad (Q\wedge \tracev=h)[\vec{p}_{\vec{x}_0}(d)/\vec{x}, \tracev^\chop \langle d, \vec{p}_{\vec{x}_0}, \emptyset\rangle/\tracev]
 		\end{array}
\]
There are two cases depending on whether $B$ holds or not initially. If $B$ does not hold, we have $h=\epsilon$. By replacing $h$ by $h'^\chop h$, it is equivalent to $P \wedge \tracev=h'$, which completes the proof. Otherwise if $B$ holds, then $h=\langle d, \vec{p}_{\vec{x}_0}, \emptyset\rangle$. By replacing 
$h$ by $h'^\chop h$, it is equivalent to $P \wedge \tracev=h'$, as $\vec{x}$ do not occur in $h'$.   

\item Rule Int: There are six cases for the interrupt. Consider the two cases for output interrupt, some $c_i$ executes after the communication occurs. Then by induction, the trace history independence holds for $c_i$: if we have $\spec{P_c \wedge \tracev = h_c}{c_i}{Q\wedge \tracev=h}$  for some $h_c$, then for any $h'$ that the variables of $c_i$ do not occur in, there must be $\spec{P_c \wedge \tracev = h'^\chop h_c}{c_i}{Q\wedge \tracev=h'^\chop h}$. Continuing the proof by considering the communication, we can obtain 
$\spec{P \wedge \tracev = h'}{c}{Q\wedge \tracev=h'^\chop h}$, as $\vec{x}$ does not occur in $h'$.

\item Rule Par: For parallel composition $c_1\|c_2$, the initial traces for both $c_1$ and $c_2$ are always $\epsilon$, so the fact holds trivially. 

\item Rule Seq: By induction, for $c_1$ and $c_2$, we can get the two facts:
$\spec{P\wedge \tracev=h'}{c_1}{Q_m \wedge \tracev=h'^\chop h_m}$ and 
$\spec{Q_m \wedge \tracev=h'^\chop h_m}{c_2}{Q \wedge \tracev=h'^\chop h_m ^\chop h_n}$ such that $h =h_m ^\chop h_n$. The fact holds by applying Rule Seq.

\item  Rules Cond, IChoice, Repetition, Conj, Inv, Conseq can be proved easily by induction.  

\end{itemize}
    
\end{proof}

\subsection{Differential Invariant Rules}
\label{subsec:dinvariant}
 Axioms and rules for ODEs and communication interruptions are based on  explicit solutions of ODEs in the previous subsection. 
 According to FOD, 
 explicit solutions to many ODEs do not exist, even exist, it is not easy to manipulate as they are too complicated. So, in the literature, it is common to use \emph{(differential) invariant} to specify and reason about continuous evolutions \cite{LLQZ10,LiuEmsoft,Platzer20}. 
 So,  we also provide  a set of differential invariant rules 
 as alternatives  below in order to provide a practical way to cope with continuous evolution, which are similar to \cite{LiuEmsoft,Platzer20}. 

To the end,
we introduce the following notation first. 

{ \[ 
\textsf{trInv}(d,\textit{Inv},\rdy) \define \exists\vec{p}.\, \tracev = \langle d,\vec{p},\rdy\rangle \wedge \forall \tau\in[0,d].\, \textit{Inv}[\vec{p}(\tau)/\vec{x}]
\] }

\noindent Here \textit{Inv} is a Boolean formula on states, and the assertion states that \textit{Inv} is satisfied along the entire trajectory, we omit $\rdy$ if it is an empty set.

\cite{Platzer20} proposes a complete version of differential invariant rule in terms of higher-order Lie derivatives, which is quite similar to 
Theorem~\ref{thm:invariantCondition}, adapted to the case of HCSP as follows (below $\closeb(B)$ stands for the closure of $B$ including $B$ and its boundary):
\oomit{The \emph{Lie derivative} of some term $p$ along a given ODE $\vec{\dot{x}} = \vec{e}$, written as $\mathcal{L}^{1}_{\vec{e}}(\textit{p})$,  is defined as $\sum_{i=1}^{i=n} \frac{\partial{p}}{\partial{x_i}} \cdot e_i$ where $n$ is the dimension of $\vec{x}$,  and recursively, the $(i+1)$-th higher Lie derivative of $p$ along the ODE, written as $\mathcal{L}^{i+1}_{\vec{e}}(\textit{p})$,  is defined as $\mathcal{L}^{1}_{\vec{e}}(\mathcal{L}^{i}_{\vec{e}}(p))$. 
In order to guarantee completeness, \cite{Platzer20} proposes the extended term conditions:  $C^{\infty}$ smoothness (i.e. have partial derivatives of any order), existence of  syntactic partial derivative representation, and the most critical one, 
which is, for any term $p$ and an ODE $\dot{x} = e$, there must exist $N>0$ and $N$ terms $g_i$ such that
the higher Lie derivatives of $p$ along the ODE satisfy $\mathcal{L}^{N}_{\vec{e}}(\textit{p})= \sum_{i=0}^{i=N-1} g_i \cdot \mathcal{L}^{i}_{\vec{e}}(\textit{p})$. As a result of the last condition, all the Lie derivatives  higher than $N$
 can also be written as sums over the Lie derivatives lower than $N$ with appropriate cofactors.  Polynomial terms obviously satisfy the extended term conditions. We adapt the complete invariant rule of \cite{Platzer20} to our case and present it below. }
 
 {
 \[ 
\prftree[r]{}{B \wedge \textit{Inv} \wedge \mathcal{L}^{*}_{\vec{e}}(B) \rightarrow \mathcal{L}^{*}_{\vec{e}}(\textit{Inv}) \quad B \wedge \neg \textit{Inv} \wedge \mathcal{L}^{*}_{-\vec{e}}(B) \rightarrow \mathcal{L}^{*}_{-\vec{e}}(\neg \textit{Inv})}{\spec{\textit{Inv}\wedge P}{\evo{x}{e}{B}}{\closeb(B) \wedge \closeb(\neg B) \wedge \textit{Inv}\wedge P\joinop \textsf{trInv}(d,\textit{Inv})}} \]
}

\noindent where $B\define \bigvee_i (\bigwedge_{j}p_{ij} \geq 0)$ and $\textit{Inv} \define \bigvee_i (\bigwedge_{j}q_{ij} > 0 )$ are both semi-analytic formulas,
with $p_{ij}$ and $q_{ij}$ being  the analytic  terms  with orders of Lie derivatives less than upper bound $N$, 
and 
{ \[
\begin{array}{rcl}
\mathcal{L}^{*}_{\vec{e}}(B) & \define  & \bigvee_i (\bigwedge_{j}\mathcal{L}^{*}_{\vec{e}}(p_{ij}) \geq 0 )\\
 \mathcal{L}^{*}_{\vec{e}}(\textit{Inv}) & \define  & \bigvee_i ( \bigwedge_{j}\mathcal{L}^{*}_{\vec{e}}(q_{ij}) > 0 )\\
  \mathcal{L}^{*}_{\vec{e}}(q) > 0 & \define & \left(\begin{array}{l}
   (q \geq 0) \wedge (q = 0 \rightarrow \mathcal{L}^{}_{\vec{e}}(q) \geq 0) \wedge \\
    (q = 0 \wedge \mathcal{L}^{}_{\vec{e}}(q) = 0
  \ \rightarrow \mathcal{L}^{2}_{\vec{e}}(q) \geq 0) \wedge \\
  \cdots  \wedge (q = 0 \wedge \cdots \wedge \mathcal{L}^{N-2}_{\vec{e}}(q) = 0 \rightarrow \mathcal{L}^{N-1}_{\vec{e}}(q) > 0) 
  \end{array} \right) \\
  \mathcal{L}^{*}_{\vec{e}}(q) \geq  0  & \define & (\mathcal{L}^{*}_{\vec{e}}(q) > 0) \vee (\bigwedge_{i=0}^{N-1}\mathcal{L}^{i}_{\vec{e}}(q)=0)\\
 \end{array}
\] }

The above rule is complete for semi-analytic invariant,  in the sense that $\textit{Inv}$  is a semi-analytic invariant for the ODE, iff the arithmetic premise holds. The premise states that, for  the positive case  of $\textit{Inv}$ (i.e. $q_{ij}$), there must exist some $i$ such that $i < N$ and the $i$-th Lie derivative   is greater than 0 and the lower Lie derivatives than $i$ are 0, and for the non-negative case (i.e. $p_{ij}$), it is weaker that all the Lie derivatives less than $N$ can be 0. Moreover,  the Lie derivatives of $\neg \textit{Inv}$ with respect to the backward ODE $\vec{\dot{x}} = -\vec{e}$ have the similar constraint.  At termination, the escaping point must belong to the boundary of domain $B$ (defined by the conjunction of the closures of $B$ and $\neg B$) and also satisfies $\textit{Inv}$.

\begin{theorem} 
\label{theorem:completeODE}
For semi-analytic formulas $\textit{Inv}$ and $Q$, $\spec{\textit{Inv}\wedge P}{\evo{x}{e}{B}}{Q \wedge P\joinop \textsf{trInv}(d,\textit{Inv})}$ can be derived from the proof system of HHL, iff the premise conditions on $\textit{Inv}$ as shown in the above differential invariant rule holds, and $\closeb(B) \wedge \closeb(\neg B) \wedge \textit{Inv} \rightarrow Q$ holds. 
\end{theorem}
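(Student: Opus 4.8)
The plan is to prove the two directions of the biconditional separately, using as a black box the completeness of the differential-invariant characterization itself: the two arithmetic premises of the rule hold \emph{iff} $\textit{Inv}$ is a semi-analytic invariant of $\evo{x}{e}{B}$ in the sense of Definition~\ref{def:inv}. This is exactly the adaptation of Platzer's result stated immediately before the theorem, and it is the semi-analytic analogue of Theorem~\ref{thm:invariantCondition}. Write $\Phi$ for the canonical postcondition $\closeb(B) \wedge \closeb(\neg B) \wedge \textit{Inv} \wedge (P \joinop \textsf{trInv}(d,\textit{Inv}))$ appearing in the differential-invariant rule, and observe that the target postcondition $Q \wedge (P \joinop \textsf{trInv}(d,\textit{Inv}))$ differs from $\Phi$ only in replacing the state part $\closeb(B) \wedge \closeb(\neg B) \wedge \textit{Inv}$ by $Q$, while sharing the trace part $P \joinop \textsf{trInv}(d,\textit{Inv})$.

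For the direction ($\Leftarrow$), I would assume the premises hold and that $\closeb(B) \wedge \closeb(\neg B) \wedge \textit{Inv} \rightarrow Q$. Since the premises hold, the differential-invariant rule is applicable and yields $\spec{\textit{Inv} \wedge P}{\evo{x}{e}{B}}{\Phi}$. Because $\closeb(B) \wedge \closeb(\neg B) \wedge \textit{Inv} \rightarrow Q$ and the trace conjunct is untouched, we have $\Phi \rightarrow Q \wedge (P \joinop \textsf{trInv}(d,\textit{Inv}))$, so a single application of the rule Conseq of Fig.~\ref{fig:hoarelogic} weakens the postcondition to the desired one. This direction is routine.

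For ($\Rightarrow$), I would assume $\spec{\textit{Inv} \wedge P}{\evo{x}{e}{B}}{Q \wedge (P \joinop \textsf{trInv}(d,\textit{Inv}))}$ is derivable; by Theorem~\ref{thm:soundness} it is then valid, and I would extract the two required facts from validity. First, the $\textsf{trInv}(d,\textit{Inv})$ conjunct forces, for every big-step execution $(\evo{x}{e}{B}, s) \Rightarrow (s', \langle d, \vec{p}, \emptyset \rangle)$ with $s$ satisfying $\textit{Inv} \wedge P$, that $\textit{Inv}$ holds along the entire trajectory $\vec{p}$ on $[0,d]$; this says precisely that $\textit{Inv}$ is preserved inside the domain up to the escape point, i.e. $\textit{Inv}$ is a semi-analytic invariant, whence by the completeness lemma the two Lie-derivative premises hold. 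Second, to obtain $\closeb(B) \wedge \closeb(\neg B) \wedge \textit{Inv} \rightarrow Q$, I would fix an arbitrary state $s_0$ satisfying the left-hand side and exhibit a valid execution of the ODE whose terminal (boundary) state is $s_0$ and whose initial state satisfies $\textit{Inv} \wedge P$; validity then forces $Q(s_0)$. Concretely, one runs the flow backwards along $-\vec{e}$ from $s_0$: since $s_0 \in \closeb(B)$ is a boundary point and $\textit{Inv}$ is a backward invariant of $-\vec{e}$ (the second premise), a short backward trajectory stays in $\textit{Inv}$ inside $B$, giving an interior start state from which the forward flow re-reaches $s_0$ exactly when $B$ first fails, matching rule ContB2 with $d > 0$.

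The main obstacle is this last construction in the ($\Rightarrow$) direction, which is precisely the boundary-detection difficulty flagged in the introduction: one must guarantee that \emph{every} state in $\closeb(B) \wedge \closeb(\neg B) \wedge \textit{Inv}$ is realizable as the escape state of a genuine execution governed by ContB2, including the degenerate cases where $\neg B$ already holds at $s_0$ (immediate termination with empty trace) or where the backward flow cannot be extended for positive time. Handling the frame assertion $P$ is the second delicate point: the necessity of the \emph{unconditional} implication $\closeb(B) \wedge \closeb(\neg B) \wedge \textit{Inv} \rightarrow Q$ relies on $P$ not further constraining the ODE variables $\vec{x}$, so that the constructed interior start state can always be taken to satisfy $P$. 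I would make this precise by treating $P$ as a condition on the initial trace and on variables disjoint from $\vec{x}$, consistent with its occurrence as the first component of the chop $P \joinop \textsf{trInv}(d,\textit{Inv})$ in the postcondition.
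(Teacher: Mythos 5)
Your ($\Leftarrow$) direction is correct and coincides with the paper's: the premises license an application of the differential-invariant rule, the HCSP semantics forces the exit state onto the boundary so that $\closeb(B)\wedge\closeb(\neg B)\wedge\textit{Inv}$ holds there, and $\closeb(B)\wedge\closeb(\neg B)\wedge\textit{Inv}\rightarrow Q$ plus Conseq yields the stated postcondition.

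The genuine gap is in your ($\Rightarrow$) direction, at exactly the two points you flag as delicate. First, the backward-flow construction fails: in HCSP, ContB2 terminates an evolution only at a time $d$ with $B$ true on $[0,d)$ and $\neg B$ true at $\vec{p}(d)$, so a state $s_0$ in $\closeb(B)\wedge\closeb(\neg B)\wedge\textit{Inv}$ need not be realizable as an exit state at all. Take $B\equiv x\le 1$ with $\dot{x}=-1$: the boundary point $x=1$ satisfies $B$, the flow moves inward, no execution ever terminates there (indeed no execution from within $B$ terminates at all), so validity is vacuous and cannot force $Q$ at $x=1$; tangential and entry boundary points cause the same failure. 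Moreover, the second arithmetic premise (propagation of $\neg\textit{Inv}$ along $-\vec{e}$) is a non-crossing condition and does not supply the existence of a backward trajectory that stays inside $B$ from a given boundary point, which your construction needs. Second, your extraction of the premises from validity is also incomplete: the big-step semantics generates transitions only for terminating evolutions, so validity of the triple says nothing about trajectories that remain in $B$ forever, whereas the premises are equivalent (via the characterization of \cite{LiuEmsoft,Platzer20}) to invariance along \emph{all} trajectories. The paper sidesteps both problems by making no semantic reachability argument whatsoever: for the invariant part it inherits Platzer's completeness proof wholesale, checking only that the underlying lemmas (continuous existence, uniqueness, differential adjoints) transfer under the local Lipschitz assumption, so the "only if" content lives entirely at the level of that imported axiomatic result; for the $Q$-part it argues only the sufficiency direction (boundary membership plus the assumed implication). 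Your concern about the frame assertion $P$ is legitimate, but the paper leaves it implicit as well, and it is moot once the necessity argument is delegated rather than reconstructed semantically.
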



\begin{proof}The specification defines that $\textit{Inv}$ holds initially, and during the execution of ODE, $\textit{Inv}$ is maintained as an invariant, indicated by $\textsf{trInv}(d,\textit{Inv})$; moreover, the postcondition $Q$ holds for the final state if $\closeb(B) \wedge \closeb(\neg B) \wedge \textit{Inv} \rightarrow Q$ holds. For the first part, it is equivalent to the corresponding specification of~\cite{Platzer20}, and we can directly inherit the proof of ~\cite{Platzer20} here.  The proof of ~\cite{Platzer20} is given based on  the lemmas on continuous existence, uniqueness, and differential adjoints etc. All these lemmas also hold in our case as we require that all ODEs satisfy the local Lipschitz condition. Thus the proof of ~\cite{Platzer20} still holds for our case. We can get the fact that $Inv$ holds for the whole evolution, including the final state at termination. 

For the second part, according to the semantics of $\evo{x}{e}{B}$, when it terminates, the final state must be at the boundary of $B$, thus both $\closeb{(B)}$ and $\closeb{(\neg B)}$ hold for the final state. Plus the fact that $Inv$ holds for the final state, from $\closeb(B) \wedge \closeb(\neg B) \wedge \textit{Inv} \rightarrow Q$, $Q$ holds for the final state. 
\end{proof}

 For continuous interrupt, the ODE part with explicit solutions can be replaced by differential invariants similarly. 


{
 \[  
\prftree[r]{}
{
\begin{array}{c}
     B \wedge \textit{Inv} \wedge \mathcal{L}^{*}_{\vec{e}}(B) \rightarrow \mathcal{L}^{*}_{\vec{e}}(\textit{Inv}) \quad B \wedge \neg \textit{Inv} \wedge \mathcal{L}^{*}_{-\vec{e}}(B) \rightarrow \mathcal{L}^{*}_{-\vec{e}}(\neg \textit{Inv})  \\
     \closeb(B) \wedge \closeb(\neg B) \wedge \textit{Inv}\wedge P\joinop \textsf{trInv}(d,\textit{Inv},\{\cup_{i\in I} ch_i*\})\rightarrow R \\
     ch_i*=ch_i!e\rightarrow\spec{\closeb(B) \wedge \textit{Inv}\wedge P\joinop \textsf{trInvOut}(d,\textit{Inv},\{\cup_{i\in I} ch_i*\},ch_i*,e)}{c_i}{R}\\
     ch_i*=ch_i?x\rightarrow\spec{\exists\,v.\,(\closeb(B) \wedge \textit{Inv})[v/x]\wedge P\joinop \textsf{trInvIn}(d,\textit{Inv},\{\cup_{i\in I} ch_i*\},ch_i*,v)}{c_i}{R}
\end{array}
}
{\spec{\textit{Inv}\wedge P}{\exempt{\evo{x}{e}{B}}{i\in I}{ch_i*}{c_i}}{R}} \]
}

where $\textsf{trInvOut}$ and $\textsf{trInvIn}$ are defined as:
{
\begin{align*}
    \textsf{trInvOut}
    &(d,\textit{Inv},\rdy,ch,e) \define \exists\vec{p}.\, \tracev = \langle d,\vec{p},\rdy\rangle^\chop\langle ch!,e[\vec{p}(\tau)/\vec{x}]\rangle\wedge \forall \tau\in[0,d].\, \textit{Inv}[\vec{p}(\tau)/\vec{x}]\\
    &\textsf{trInvIn}(d,\textit{Inv},\rdy,ch,v) \define \exists\vec{p}.\, \tracev = \langle d,\vec{p},\rdy\rangle^\chop\langle ch?,v\rangle\wedge \forall \tau\in[0,d].\, \textit{Inv}[\vec{p}(\tau)/\vec{x}]
\end{align*}
}
\paragraph{Some Derived Rules}
In the following, we discuss some derived differential rules, which could be applied to prove some complicated properties on continuous evolution more efficiently, as 
they provide sufficient conditions for being an invariant of an ODE, but not necessary.

The following differential invariant rule  says that whenever the Lie derivative of an expression \textit{q} w.r.t. the ODE within domain $B$ is zero, then $\textit{q}=c$ keeps invariant. 
\[ 
\prftree[r]{DiffInv}{B \rightarrow \mathcal{L}_{\vec{e}}(\textit{q})=0}
{\spec{\textit{q}=c\wedge P}{\evo{x}{e}{B}}{\textit{q}=c\wedge P\joinop \textsf{trInv}(d,\textit{q}=c)}}
\]
here $\mathcal{L}_{\vec{e}}(\textit{q})$ denotes the Lie derivative of \textit{q} w.r.t. the vector field $\vec{e}$. 
This rule could be useful when the differential equation cannot be solved exactly, and all we need is to prove some invariant property.
Moreover, we prove both the positive and negative cases for the above invariant rules, by changing the premise to be $\mathcal{L}_{\vec{e}}(\textit{q}) \geq 0$, and  the conclusion to be $q \geq c$ or $q > c$; symmetrically, the premise $\mathcal{L}_{\vec{e}}(\textit{q}) \leq 0$, with the conclusion  $\textit{q} \leq c$ or $\textit{q} < c$.

Likewise, we can prove a version of Darboux equality rule~\cite{Platzer20}:
\[ 
\prftree[r]{Dbx}{B \rightarrow \mathcal{L}_{\vec{e}}(\textit{q})=g\cdot \textit{q}}{\spec{\textit{q}=0\wedge P}{\evo{x}{e}{B}}{\textit{q}=0\wedge P\joinop \textsf{trInv}(d,\textit{q}=0)}}
\]
where $g$ is a continuous function. The intuition is that, when the first Lie derivative of $\textit{q}$ is a product between a continuous cofactor $g$ and $\textit{q}$, then its all higher Lie derivatives can also be written as a product between some cofactor and itself. Thus, when $\textit{q}$ is 0 initially, all its derivatives are 0, $\textit{q} = 0$ will stay invariant along the evolution. Similarly, we have also proved the positive and negative cases for Darboux inequalities and here will not list them. 

We can also prove the invariant property of ODE with the idea of barrier certificate \cite{DBLP:conf/hybrid/PrajnaJ04}:
{ \[ \hspace*{-1mm} 
\prftree[r]{Dbarrier}{B \rightarrow \textit{q}=0 \rightarrow \mathcal{L}_{\vec{e}}(\textit{q})<0}
{\spec{\textit{q}\le 0\wedge P}{\evo{x}{e}{B}}{\textit{q}\le 0\wedge P\joinop \textsf{trInv}(d,\textit{q}\le 0)}}
\] }
Whenever $\textit{q}$ reaches $0$ along the trajectory, the negative 
Lie derivatives push $\textit{q}$ to decrease, thus it can never exceed $0$ within $B$. The case with $\textit{q} \ge 0 $  can be proved in a similar way.

\subsection{Discussion on Partial Correctness, Total Correctness, 
Deadlock, Livelock and Invariants}
\emph{Partial correctness vs total correctness.} 

If we focus on partial correctness, our proof system needs to add  the following inference rule for deadlock:
\begin{eqnarray*}
& \prftree[r]{Deadlock}{\spec{P}{c}{\tracev=\delta}}{Q \mbox{ is any formula of \FODT}}{}{\spec{P}{c}{Q}} & 
\end{eqnarray*}
Rule Deadlock says that  a deadlocked behaviour can imply any property, as it never terminates. 
If we investigate total correctness, we need a rule for variant  as in classical Hoare logic. We will address this issue together with \emph{livelock} in future work. 

\noindent 
\emph{Deadlock and livelock.} \emph{Deadlock} can be handled in our proof system with the rules on trace synchronization and the above rule, but {\em livelock} is not considered. {\em Livelock} could be handled by proving/disproving its existence/absence  like program termination analysis, or 
by allowing recording internal action in traces and checking whether 
there are infinitely many internal actions in a finite time horizon. 
As said above, we will address this issue in future work. 

\noindent 
\emph{Loop invariant and differential invariant.}
Reasoning about repetition needs invariants. Just as discussed in Sect.~\ref{subsec:dinvariant}, although continuous evolution can be reasoned about by explicitly using  its solution as indicated in rule Cont, differential invariants can 
ease the reasoning very much as obtaining a solution of an ODE is mathematically  difficult. 
As in classical Hoare logic, invariant generation plays a central role in deductive verification of HSs. 
But in HSs, one has to consider to synthesize \emph{global invariants} (for loops and recursions) and 
local (differential) invariants (for ODEs, as shown in Sect.~\ref{subsec:dinvariant}) simultaneously.  As discussed in \cite{PlatzerClarke08}, synthesizing global invariants can be achieved by combining invariant generation techniques for discrete programs and differential invariant generation techniques for ODEs. As we will see in the case study section, we verify the second case study using both the notions of global invariants and local differential invariants. 
In the literature, there are various works on differential invariant synthesis for dynamical systems. 
\cite{LiuEmsoft} gave a necessary and sufficient condition for a semi-algebraic set to be a differential invariant of a polynomial system. Based on which, \cite{WangCAV2021} proposed an efficient approach for synthesizing semi-algebraic invariants for polynomial dynamical systems by exploiting 
difference of convex programming.   \cite{Platzer20} presented  a complete axiomatic system for reasoning about differential invariants based on a similar condition to \cite{LiuEmsoft}.

Alternatively, reasoning about continuous evolution  can be conducted  by discretization, e.g., \cite{TOSEM20} presented  a set of refinement rules to discretize HCSP, further refined discretized HCSP  to SystemC. 
\cite{LP16} presented a refinement logic 
which investigates the inverse direction to reduce verification of discrete systems to verification of hybrid systems.

\section{ \hspace*{-4mm} Continuous relative completeness}\label{sec:completeness}

In this section, we show continuous relative completeness of HHL w.r.t. FOD. 
This is done in two steps. First, we show that the proof system is complete if all weakest liberal preconditions/strongest postconditions can be expressed in FOD and all valid entailments between predicates can be proved. Second, we 
show that all required predicates can be expressed as formulas of FOD. Following the form of the proof system given above, weakest liberal preconditions are used in the sequential case, and strongest postconditions are used in the parallel case.

\subsection{Weakest Liberal Preconditions and Strongest Postconditions}

For the sequential case, given a process \sm{c} and postcondition \sm{Q}, the weakest liberal precondition \sm{\wp(c,Q)} is a predicate on state and trace pairs, defined as:
{ \[\wp(c,Q) = \{(s, tr)\,|\,\forall s', tr'. (c, s) \Rightarrow (s', tr') \rightarrow Q(s', tr^\chop tr')\}\]}
Thus, the computation of weakest liberal preconditions is straightforward from the definition of big-step semantics. Most of them correspond directly to the Hoare rules in Fig.~\ref{fig:hoarelogic}. 
The only $\wp$ rule that does not allow direct computation is that for repetition. Instead it satisfies the following recursive equation:
\[\wp(c^*, Q) = \wp(c; c^*, Q)\]
The recursive  equation is not solvable in general, but it provides a way to approximate $\wp(c^*,Q)$ according to provided invariants for $c^*$. So,  as in the verification of programs with classical Hoare logic, invariant generation plays a central role in the verification of HSs with HHL. The computation of weakest liberal precondition is given in Fig.~\ref{fig:wp}. Justification of this computation is given as part of the soundness proof in previous section.
\begin{figure*}
{
    \[ \wp(\pskip, Q) = Q \quad \wp(x := e, Q) = Q[e/x] \]	
	\[ \begin{array}{ll} \wp(ch!e, Q) =
		&Q[\tracev^\chop \langle ch!, e\rangle/\tracev]\,\wedge \forall d>0.\,Q[\tracev^\chop \langle d, I_{\vec{x}_0}, \{ch!\}\rangle^\chop \langle ch!, e\rangle/\tracev]\,\wedge  Q[\tracev^\chop \langle \infty,I_{\vec{x}_0},\{ch!\}\rangle/\tracev]
	\end{array}
	\]	
	\[ \begin{array}{ll} \wp(ch?x, Q) =
		&\forall v.\, Q[v/x, \tracev^\chop \langle ch?, v\rangle/\tracev]\,\wedge \\
        &\forall d>0.\,\forall v.\, Q[v/x, \tracev^\chop \langle d,I_{\vec{x}_0},\{ch?\}\rangle ^\chop \langle ch?,v\rangle/\tracev]\,\wedge  Q[\tracev^\chop \langle \infty,I_{\vec{x}_0},\{ch?\}\rangle/\tracev]
	\end{array}
	\]
	\[ \wp(c1; c2, Q) = \wp(c1, \wp(c2, Q)) \]
	\[ \wp(\IFE{b}{c_1}{c_2}, Q) = \IFE{b}{\wp(c_1,Q)}{\wp(c_2,Q)} \]
	\[ \wp(c_1\sqcup c_2, Q) = \wp(c_1, Q) \wedge \wp(c_2, Q) \]
	\[
	\begin{array}{ll}
		\wp(\evo{x}{e}{B},Q) = &(\neg B \rightarrow Q)\,\wedge  \forall d>0.\,(\forall t\in[0,d).\,B[\vec{p}_{\vec{x}_0}(t)/\vec{x}])\wedge \\
        &\neg B[\vec{p}_{\vec{x}_0}(d)/\vec{x}] \to  Q[\vec{p}_{\vec{x}_0}(d)/\vec{x}, \tracev^\chop \langle d, \vec{p}_{\vec{x}_0}, \emptyset\rangle/\tracev]
	\end{array}
	\]
	\oomit{
	\[
	\begin{array}{ll}
		&\wp(\external{i\in I}{ch_i*}{c_i}, R) = \\
		&\qquad (\forall i\in I.\, \textrm{if } ch_i*=ch!e \textrm{ then} \\
		&\qquad\qquad\qquad \wp(c_i, R)[\tracev^\chop \langle ch!,e\rangle/\tracev]\,\wedge \\
		&\qquad\qquad\qquad \forall d>0.\, \wp(c_i, R)[\tracev^\chop \langle d,I_{\vec{x}_0},\{\cup_{i\in I} ch_i*\}\rangle^\chop \langle ch!,e\rangle/\tracev] \\
		&\qquad\qquad\quad\ \textrm{elif } ch_i*=ch?x \textrm{ then} \\
		&\qquad\qquad\qquad \forall v.\, \wp(c_i, R)[v/x, \tracev^\chop \langle ch?,v\rangle/\tracev]\,\wedge \\
		&\qquad\qquad\qquad \forall d>0.\, \forall v.\, \wp(c_i, R)[v/x, \tracev^\chop \langle d,I_{\vec{x}_0},\{\cup_{i\in I} ch_i*\}\rangle^\chop \langle ch?,v\rangle/\tracev])\,\wedge \\
		&\qquad R[\tracev^\chop\langle \infty,I_{\vec{x}_0},\{\cup_{i\in I} ch_i*\}\rangle/\tracev]
	\end{array}
	\]}
	\[
	\begin{array}{ll}
		&\wp(\exempt{\evo{s}{e}{B}}{i\in I}{ch_i*}{c_i}, R) = \\
		&\qquad (\forall i\in I.\, \textrm{if } ch_i*=ch!e \textrm{ then }   \wp(c_i, R)[\tracev^\chop \langle ch!,e\rangle/\tracev]\,\wedge \\
		&\qquad\qquad\qquad \forall d>0.\, (\forall t\in[0,d).\, B[\vec{p}_{\vec{x}_0}(t)/\vec{x}]) \to\\
        &\qquad\qquad\qquad\qquad
        \wp(c_i, R)[\vec{p}_{\vec{x}_0}(d)/\vec{x}, \tracev^\chop \langle d,\vec{p}_{\vec{x}_0},\{\cup_{i\in I} ch_i*\}\rangle^\chop \langle ch!,e[\vec{p}_{\vec{x}_0}(d)/\vec{x}]\rangle/\tracev] \\
		&\qquad\qquad\quad\ \textrm{elif } ch_i*=ch?x \textrm{ then }   \forall v.\, \wp(c_i, R)[v/x, \tracev^\chop \langle ch?,v\rangle /\tracev]\,\wedge \\
		&\qquad\qquad\qquad \forall d>0.\, (\forall t\in[0,d).\,
		B[\vec{p}_{\vec{x}_0}(t)/\vec{x}]) \to\\
        &\qquad\qquad\qquad\qquad\wp(c_i, R)[\vec{p}_{\vec{x}_0}(d)/\vec{x}, v/x, \tracev^\chop \langle d,\vec{p}_{\vec{x}_0},\{\cup_{i\in I} ch_i*\}\rangle^\chop \langle ch?,v\rangle/\tracev])\, \wedge \\
		&\qquad (\neg B \rightarrow R)\, \wedge   \forall d>0.\,(\forall t\in[0,d).\,B[\vec{p}_{\vec{x}_0}(t)/\vec{x}])\wedge \\
        &\qquad\neg B[\vec{p}_{\vec{x}_0}(d)/\vec{x}] \to  R[\vec{p}_{\vec{x}_0}(d)/\vec{x}, \tracev^\chop \langle d, \vec{p}_{\vec{x}_0}, \{\cup_{i\in I} ch_i*\}\rangle/\tracev]
	\end{array}
	\]
	\caption{The weakest liberal preconditions for HCSP processes}
	\label{fig:wp}} 
\end{figure*}
Regarding \sm{\wp}, we have the following result.
\begin{lemma}\label{lem:wp-complete}
	For any sequential process \sm{c},  \sm{\vdash \{\wp(c,Q)\}\ c\ \{Q\}}.
\end{lemma}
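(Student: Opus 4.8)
The plan is to proceed by structural induction on the sequential process $c$, showing in each case that the formula $\wp(c,Q)$ computed in Fig.~\ref{fig:wp} yields a derivable Hoare triple $\vdash \spec{\wp(c,Q)}{c}{Q}$. For the atomic and directly‑composable constructs the argument is immediate from the rules of Fig.~\ref{fig:hoarelogic}: for $\pskip$ and assignment, $\wp(\pskip,Q)=Q$ and $\wp(x:=e,Q)=Q[e/x]$ match the Skip and Assign axioms verbatim; for $ch!e$, $ch?x$ and $\evo{x}{e}{B}$, the defining formula of $\wp$ is exactly the precondition appearing in the Output, Input and Cont axioms, so the triple is a direct instance of the axiom. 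For sequential composition we have $\wp(c_1;c_2,Q)=\wp(c_1,\wp(c_2,Q))$; applying the induction hypothesis to $c_2$ with postcondition $Q$ and to $c_1$ with postcondition $\wp(c_2,Q)$, followed by the Seq rule, gives the result.

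The cases of conditional, internal choice, and interrupt follow the same pattern, combining the induction hypotheses on subprocesses with one application of the corresponding rule, preceded where necessary by Conseq (whose side conditions are valid $\FODT$ entailments, hence available under the relative completeness assumption). For $\IFE{b}{c_1}{c_2}$ we take $P=\wp(\IFE{b}{c_1}{c_2},Q)$; since $P\wedge b$ entails $\wp(c_1,Q)$ and $P\wedge\neg b$ entails $\wp(c_2,Q)$, Conseq applied to the induction hypotheses yields $\vdash\spec{P\wedge b}{c_1}{Q}$ and $\vdash\spec{P\wedge\neg b}{c_2}{Q}$, and the Cond rule concludes. Internal choice is direct from IChoice, as $\wp(c_1\sqcup c_2,Q)=\wp(c_1,Q)\wedge\wp(c_2,Q)$ and IChoice takes the conjunction of the two preconditions. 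For the interrupt construct we instantiate the family $Q_i$ in the Int rule by $Q_i=\wp(c_i,R)$; the induction hypotheses supply $\vdash\spec{\wp(c_i,R)}{c_i}{R}$, and by construction $\wp(\exempt{\evo{x}{e}{B}}{i\in I}{ch_i*}{c_i},R)$ is precisely the conjunction of the entailment obligations that the Int rule imposes on its precondition, so all premises are met.

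The main obstacle is the repetition case, where $\wp(c^*,Q)$ is defined only implicitly through the recursive equation $\wp(c^*,Q)=\wp(c;c^*,Q)$. The key observation is that this formula is its own loop invariant. Writing $W=\wp(c^*,Q)$ and unfolding the $\wp$‑equation for sequencing, we obtain $W=\wp(c;c^*,Q)=\wp(c,\wp(c^*,Q))=\wp(c,W)$; the induction hypothesis applied to the body $c$ with postcondition $W$ then gives $\vdash\spec{\wp(c,W)}{c}{W}$, i.e. $\vdash\spec{W}{c}{W}$, and the Rep rule yields $\vdash\spec{W}{c^*}{W}$. Finally, because the zero‑iteration execution $(c^*,s)\Rightarrow(s,\epsilon)$ is always available (rule RepB1), any pair satisfying $W$ must already satisfy $Q$, so $W\rightarrow Q$ is valid; one application of Conseq turns the postcondition $W$ into $Q$, giving $\vdash\spec{\wp(c^*,Q)}{c^*}{Q}$ as required. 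Note that this step uses only the first (invariant) half of the $\wp$ recursion together with the fixpoint identity $W=\wp(c,W)$, rather than any attempt to solve the recursion explicitly, which is exactly why the implicit definition of $\wp(c^*,Q)$ suffices here.
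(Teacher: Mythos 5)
Your proof is correct and follows essentially the same route as the paper's: structural induction with the atomic cases matching the axioms verbatim, the interrupt case instantiating the family $Q_i$ as $\wp(c_i,R)$, and the repetition case treating $W=\wp(c^*,Q)$ as its own loop invariant via the fixpoint identity $W=\wp(c,W)$ together with $W\rightarrow Q$, closed off by Rep and Conseq. Your explicit justification of $W\rightarrow Q$ through the zero-iteration rule RepB1 is in fact slightly more careful than the paper's terse appeal to "the equation satisfied by $\wp(c^*,Q)$," but it is the same argument in substance.
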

\begin{proof}[Proof for Lemma~\ref{lem:wp-complete}]
	The proof is by induction on the structure of program \sm{c}. For most statements, the result follows directly by comparing the $\wp$-rule with the corresponding Hoare rule. We explain the more interesting cases in detail.
	
	
	For the case of repetition, we wish to prove \sm{\vdash \{\wp(c^*,Q)\}\ c^*\ \{Q\}}, given the inductive assumption \sm{\vdash \{\wp(c,Q')\}\ c\ \{Q'\}} for any \sm{Q'}. For this, we make use of the following property of \sm{\wp(c^*, Q)}:
	 $ \wp(c^*, Q) \rightarrow Q$, 
	which follows from the equation satisfied by \sm{\wp(c^*,Q)}. This allows us to reduce the goal to proving \sm{\vdash \{\wp(c^*,Q)\}\ c^*\ \{\wp(c^*,Q)\}}, and using the repetition rule, to proving \sm{\vdash \{\wp(c^*,Q)\}\ c\ \{\wp(c^*,Q) \}}.
	
	By the inductive assumption, we have \sm{\vdash \{\wp(c,\wp(c^*,Q))\}\ c\ \{\wp(c^*,Q)\}}. Hence, it suffices to show
	\[ \wp(c^*,Q) \rightarrow \wp(c,\wp(c^*,Q)), \]
	which also follows from the equation satisfied by \sm{\wp(c^*,Q)}.
	
	Next, we consider the case of interrupt. We need to show
	\[ \begin{array}{ll}
	\vdash \{\wp(\exempt{\evo{x}{e}{B}}{i\in I}{ch_i*}{c_i}, R)\}  \\
	\qquad \exempt{\evo{x}{e}{B}}{i\in I}{ch_i*}{c_i}\\
	\quad \{R\} \end{array} \]
	given the inductive assumption that \sm{\vdash \{\wp(c_i,R)\}\ c_i\ \{R\}} for any index \sm{i}. Apply the rule Interrupt, with the indexed family of assertions \sm{Q_i} given by \sm{\wp(c_i,R)}. By the inductive hypothesis, each Hoare triple in the assumption is provable. Moreover, each entailment in the assumption holds by the definition of \sm{\wp(\exempt{\evo{x}{e}{B}}{i\in I}{ch_i*}{c_i}, R)}. This finishes the proof for the interrupt case.
\end{proof}
Next, we consider the case of parallel processes. Here we make use of strongest postconditions. Given a parallel process \sm{c} w.r.t. a given precondition \sm{P} on the global state, the strongest postcondition \sm{\sp(c,P)} is a predicate on state and trace, such that \sm{\sp(c,P)(s',\textit{tr}')} holds iff  there exists \sm{s} satisfying \sm{P}, such that it is possible to go from \sm{(c,s)} to \sm{(s',\textit{tr}')} under big-step semantics.

The strongest postcondition can be recursively computed for preconditions $P$ in the form of conjunctions of predicates on individual processes. For a single process, it is equivalent to the strongest postcondition for sequential processes. For the parallel composition of two processes, { 
 define $P_1 \uplus P_2$  by $(P_1\uplus P_2)(s_1\uplus s_2)=P_1(s_1)\wedge P_2(s_2)$,} then we have:
{ \[ 
\begin{array}{l}
	\sp(c_1\|_{cs} c_2, P_1 \uplus P_2)(s_1\uplus s_2, \textit{tr}) = \\
	(\exists tr_1\ tr_2.\, \sp(c_1,P_1)[tr_1/\textit{tr}] \wedge \sp(c_2,P_2)[tr_2/\textit{tr}] \wedge tr_1\|_{cs}tr_2 \Downarrow \textit{tr})
\end{array} \]}
From this, we get the following lemma:
\begin{lemma}\label{lem:sp-parallel-complete}
	For any parallel process \sm{c}, and precondition \sm{P} in the form of conjunction of predicates on individual processes of \sm{c}, we have \sm{\vdash \{P\}\ c\ \{\sp(c,P)\}}.
\end{lemma}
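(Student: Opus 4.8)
The plan is to induct on the tree structure of the parallel process $c$, whose leaves are sequential processes joined by $\|_{cs}$, mirroring exactly the recursive definition of $\sp$ stated just before the lemma. The two cases of the induction correspond precisely to the two halves of the proof system: a sequential leaf is discharged by the weakest-liberal-precondition calculus, whereas an internal $\|_{cs}$ node is discharged by the rule Par. The precondition $P = P_1 \uplus P_2$ decomposes along the same tree, so the recursion on $\sp$ and the induction on $c$ stay in lockstep.

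For the base case, where $c$ is a sequential process, $\sp(c,P)$ reduces to the ordinary sequential strongest postcondition. Since the proof system offers no direct strongest-postcondition rules for sequential constructs, I would bridge to $\wp$: instantiate Lemma~\ref{lem:wp-complete} with $Q := \sp(c,P)$ to obtain $\vdash \spec{\wp(c,\sp(c,P))}{c}{\sp(c,P)}$, and then apply Conseq using the valid entailment $P \rightarrow \wp(c,\sp(c,P))$. This entailment is the standard adjunction between $\sp$ and $\wp$: if $(s,tr)$ satisfies $P$ and $(c,s)\Rightarrow(s',tr')$, then by the very definition of $\sp$ the pair $(s', tr^\chop tr')$ is reachable from a $P$-state and hence satisfies $\sp(c,P)$, which unfolding the definition of $\wp$ is exactly what is required. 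The only delicate point is the bookkeeping of the prefix trace under concatenation, which matches because both definitions append the trace of the run to the initial trace.

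For the inductive case, where $c = c_1 \|_{cs} c_2$ and $P = P_1 \uplus P_2$, I would apply the inductive hypothesis to each component to get $\vdash \spec{P_1}{c_1}{\sp(c_1,P_1)}$ and $\vdash \spec{P_2}{c_2}{\sp(c_2,P_2)}$, then feed these into rule Par. The resulting derivation has postcondition
\[ \exists tr_1, tr_2.\; \sp(c_1,P_1)[tr_1/\tracev] \wedge \sp(c_2,P_2)[tr_2/\tracev] \wedge tr_1 \|_{cs} tr_2 \Downarrow \tracev, \]
which is literally the right-hand side of the recursive definition of $\sp(c_1\|_{cs}c_2, P_1\uplus P_2)$, and precondition $P_1[\epsilon/\tracev] \wedge P_2[\epsilon/\tracev]$, which coincides with $P$ once we use that each $P_i$ constrains only the state of $c_i$, so the substitution merely pins the initial traces to $\epsilon$. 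A final Conseq step, if needed, aligns the precondition.

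The hard part will not be the rule applications, which follow the design of the system almost mechanically, but the two matching checks: confirming that the postcondition emitted by Par is \emph{syntactically} the defining formula of $\sp$ for a parallel composition, and, in the base case, verifying the $\sp$/$\wp$ adjunction against the trace-extended semantics, where assertions are evaluated on concatenated traces $tr^\chop tr'$ rather than on states alone. In particular one must confirm that the initial-trace convention $\tracev = \epsilon$ baked into rule Par and into the recursive $\sp$ definition is consistent with the way $\sp$ is specified, so that the interface between a parent node and its children agrees at every $\|_{cs}$ node; getting this trace accounting exactly right is where I expect the real work to sit.
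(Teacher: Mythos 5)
Your proposal is correct and follows essentially the same route as the paper's own (much terser) proof: structural induction on the parallel composition tree, with the base case discharged via Lemma~\ref{lem:wp-complete} together with the $\sp$/$\wp$ adjunction and Conseq, and the inductive case by feeding the inductive hypotheses into rule Par, whose postcondition is syntactically the recursive definition of $\sp(c_1\|_{cs}c_2,P_1\uplus P_2)$. The trace-accounting concerns you flag (the $\tracev=\epsilon$ convention in Par and the prefix-concatenation bookkeeping in the adjunction) are real but resolve exactly as you anticipate, and the paper simply elides them.
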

\begin{proof}[Proof for Lemma~\ref{lem:sp-parallel-complete}]
	The proof is by induction on the structure of $c$. For the base case of sequential processes, this follows from Lemma~\ref{lem:wp-complete} and the definition of $\wp$. For the parallel composition of two processes, this follows from the rule (Par) and the computation of $\sp(c_1\|_{cs}c_2, P_1\uplus P_2)$.
\end{proof}

From Lemma~\ref{lem:wp-complete} and Lemma~\ref{lem:sp-parallel-complete}, we get the following theorem, under the assumption of expressibility of predicates and provability of entailments in the underlying logical system.
\begin{theorem}\label{thm:continuous-complete}
Every valid HHL goal \sm{\{P\}\ c\ \{Q\}} is provable in the above system given an oracle for \text{FOD}.
\end{theorem}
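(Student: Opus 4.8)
The plan is to follow the standard Cook-style relative completeness argument, reducing the provability of an arbitrary valid goal to a single entailment discharged by the FOD oracle, and then to establish that the auxiliary predicates $\wp$ and $\sp$ are genuinely expressible as {\FODT} formulas. I would split the argument exactly along the two lemmas already proved. For a \emph{sequential} process $c$, validity of $\spec{P}{c}{Q}$ together with the definition of $\wp(c,Q)$ gives that $P \rightarrow \wp(c,Q)$ is a valid entailment; the FOD oracle proves it, Lemma~\ref{lem:wp-complete} supplies $\vdash \spec{\wp(c,Q)}{c}{Q}$, and an application of rule Conseq yields $\vdash \spec{P}{c}{Q}$. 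For a \emph{parallel} process $c$ with $P$ a conjunction of per-component predicates, validity gives the entailment $\sp(c,P) \rightarrow Q$, which the oracle proves; Lemma~\ref{lem:sp-parallel-complete} supplies $\vdash \spec{P}{c}{\sp(c,P)}$, and again Conseq closes the goal.

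The remaining obligation, which is the technical heart of the theorem, is \emph{expressibility}: the predicates $\wp(c,Q)$ and $\sp(c,P)$ used above must themselves be formulas of {\FODT}, since Conseq and the lemmas only make sense inside the logic. For every construct except repetition this is immediate by reading Fig.~\ref{fig:wp} and the recursive equation for $\sp(c_1 \|_{cs} c_2, P_1 \uplus P_2)$ as definitions: each clause builds a {\FODT} formula out of {\FODT} formulas using only substitution, the trace constructors, quantification over reals, times and values, and the synchronization predicate $tr_1 \|_{cs} tr_2 \Downarrow \tracev$, all of which already belong to the assertion language. The difficulty is concentrated in the loop $\prepeat{c}$, whose $\wp$ satisfies only the recursive equation $\wp(\prepeat{c},Q) = \wp(c;\prepeat{c},Q)$ and is in general not a finite syntactic unfolding.

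For the loop I would resort to a G\"odel-style encoding of finite sequences, exactly as in Cook's completeness proof, but adapted so that the coded objects are \emph{pairs} of a state and an accumulated trace. Concretely, $\wp(\prepeat{c},Q)(s,\tracev)$ is expressed as: for every $n$ and every coded sequence $(s_0,\tracev_0),\dots,(s_n,\tracev_n)$ witnessing an $n$-fold execution of $c$ started from $(s,\tracev)$, the final pair satisfies $Q$. Since FOD contains first-order arithmetic, a pairing and sequence coding is available, and the one-step relation $(c,s_i)\Rightarrow(s_{i+1},\cdot)$ together with its trace contribution is expressible because $\wp(c,\cdot)$ is, by the inductive expressibility of the body $c$. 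The genuinely new ingredient over the classical case is that the coded trace fragments contain \emph{continuous events} $\langle d,\vec{p},\rdy\rangle$ whose $\vec{p}$ is a solution of an ODE; these are expressible precisely because FOD can characterise ODE solutions (the \textit{state\_traj} terms such as $\vec{p}_{\vec{x}_0,\vec{e}}$ are interpreted through FOD), so a single generalized event, and hence a finite trace, admits an FOD description.

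I expect the main obstacle to be this last point: making the encoding of traces robust enough that appending and synchronising trace fragments across loop iterations (and, in the parallel case, reconciling coded traces through $\Downarrow$) stays inside FOD. One must check that a trace produced by boundedly many steps has boundedly describable length, that continuous events demand no quantification beyond what FOD provides, and that the accumulation $\tracev^\chop h$ threaded through the big-step semantics is faithfully mirrored by the coded sequence. Once expressibility is secured, the reduction in the first paragraph finishes the proof, so I would concentrate essentially all the effort on the loop-plus-trace encoding and treat the Conseq reduction as routine.
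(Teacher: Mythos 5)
Your proposal is correct and follows essentially the same route as the paper: its proof of Theorem~\ref{thm:continuous-complete} is exactly your first paragraph (oracle-discharged entailments $P \rightarrow \wp(c,Q)$, resp.\ $\sp(c,P) \rightarrow Q$, combined with Lemma~\ref{lem:wp-complete}, resp.\ Lemma~\ref{lem:sp-parallel-complete}, and rule Conseq), while your expressibility programme --- $\mathbb{R}$-G\"odel coding of traces, FOD-definable trajectories drawn from the finitely many ODEs occurring in the process, an arithmetized synchronization relation, and an arithmetized treatment of $\wp(c^*,Q)$ --- is precisely the paper's subsequent section on expressing predicates in FOD. The only cosmetic divergence is the loop, where the paper quantifies over an iteration count $k$ applied to a recursively defined $\wp_k$ while you quantify universally over G\"odel-coded sequences of (state, trace) pairs; these are the same arithmetization technique, and your universal formulation is in fact the appropriate one for weakest \emph{liberal} preconditions.
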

\begin{proof}[Proof for Theorem~\ref{thm:continuous-complete}]
	For sequential case, if $\{P\}\ c\ \{Q\}$ is valid, $P\rightarrow \wp(c,P)$ holds, and by assumption provable. Combining with Lemma~\ref{lem:wp-complete}, we have $\{P\}\ c\ \{Q\}$ is provable.  For parallel case, if $\{P\}\ c\ \{Q\}$ is valid, then $\sp(c,P)\rightarrow Q$ holds, and by assumption provable. Combining with Lemma \ref{lem:sp-parallel-complete}, we have $\{P\}\ c\ \{Q\}$ is provable.
\end{proof}

\subsection{Expressing Predicates in FOD}
By Theorem~\ref{thm:continuous-complete}, in order to prove the continuous relative completeness, the only remaining step is to  show expressibility of traces and trace
assertions in FOD. We follow Platzer's approach in~\cite{Platzer12a} to encode  trace-based assertions in FOD using the standard Gödelisation technique of Cook. 
For  simplicity, we use  \sm{\vec{x}_0 \langle \vec{\dot{x}}=\vec{e} \rangle \vec{x}_1 } to mean  
that starting from vector \sm{\vec{x}_0}, following the differential equation \sm{\vec{\dot{x}}=\vec{e}}, \sm{\vec{x}_1} can be reached.

\subsubsection{Encoding traces}

First, we discuss how to encode traces that appear in the previous sections in FOD. Using the $\mathbb{R}$-G\"odel encoding in \cite{Platzer12a}, it is possible to encode any sequence of real numbers of fixed length as a single real number. The basic idea (for two real numbers) is as follows. Suppose real numbers $a$ and $b$ are written as $a_0.a_1a_2\dots$ and $b_0.b_1b_2\dots$ in binary form, then the pair $(a,b)$ can be represented as $a_0b_0.a_1b_1a_2b_2\dots$. In fact, we can extend this encoding to a sequence \sm{\vec{y}} of real numbers with arbitrary length, by encoding \sm{\vec{y}} as the $\mathbb{R}$-encoding of the pair \sm{(l,y)}, where \sm{l} stores the length of \sm{\vec{y}}, and \sm{y} is the $\mathbb{R}$-G\"odel encoding of \sm{\vec{y}}. From now on, we will make implicit use of this encoding, allowing us to quantify over sequences of real numbers of arbitrary length, and adding to the language the function \sm{\len(x)} for the length of sequence \sm{x}, and \sm{x_i} (with \sm{1\le i\le \len(x)}) for the \sm{i^{th}} component of \sm{x}.

Given an HCSP process, we can fix a mapping from the channel names appearing in the process to natural numbers. Hence a communication event  of the form \sm{\langle ch\triangleright,v\rangle} can be encoded as a real number. Encoding a continuous event 
of the form \sm{\langle d,\vec{p},\rdy\rangle} requires more care. The ready set can be encoded as a natural number since the total number of channels is finite. The main problem is how to encode the state trajectory  \sm{\vec{p}} in the continuous event.
We make the restriction that any state trajectory 
\sm{\vec{p}} appearing in a continuous event of the trace must be either constant or a solution of an ODE appearing in the HCSP process. This restriction is reasonable since any other state trajectory 
cannot possibly appear in the behavior of the process. We number the ODEs appearing in the process as \sm{\langle \vec{\dot{x}}=\vec{e_1}\rangle,\dots,\langle\vec{\dot{x}}=\vec{e_k}\rangle}, and let  \sm{\langle\vec{\dot{x}}=\vec{e_0}\rangle} be the ODE \sm{\langle \vec{\dot{x}}=0\rangle} (for the case of constant state trajectories).

First, we consider the sequential case. Then a state trajectory 
\sm{\vec{p}} in a continuous event  can be encoded as a triple \sm{(d,\vec{p}_0,i)}, where \sm{d} is its duration, \sm{\vec{p}_0} is the initial state, and \sm{0\le i\le k} is the index of the differential equation satisfied by \sm{\vec{p}}. We can then define \sm{\vec{p}_{\varsigma}} for $\varsigma<d$, the state of the state trajectory 
at time $\varsigma$, as the unique state satisfying the FOD formula 
\[ (\vec{x}=\vec{p}_0 \wedge t=0)\langle \vec{\dot{x}}=\vec{e}_i, t'=1 \rangle (\vec{x}=\vec{p}_{\varsigma} \wedge t=\varsigma).  \]
For the parallel case, the state is eventually divided into component states for sequential processes, so that each component state follows one of the ODEs \sm{\vec{\dot{x}}=\vec{e}_i}. Hence, it can be encoded as a binary tree where each leaf node contains a tuple of the form \sm{(d,\vec{p}_0,i)}. For example, if we have a parallel of two sequential processes, with a path starting from state \sm{\vec{p}_0} and following ODE \sm{i} on the left, and starting from state \sm{\vec{q}_0} and following ODE \sm{j} on the right, then this state trajectory  is encoded as \sm{((d,\vec{p}_0,i),(d,\vec{q}_0,j))}.

With this encoding, it is clear that given representations of state trajectories  \sm{\vec{p}_1} and \sm{\vec{p}_2}, the state trajectory  \sm{\vec{p}_1\uplus \vec{p}_2} can be represented. For the purpose of encoding synchronization below, we also need to encode \sm{\vec{p}(\cdot + d')}. With \sm{\vec{p}} given as \sm{(d,\vec{p}_0,i)}, this is simply \sm{(d-d',\vec{p}_{d'},i)}.

Hence, we can encode any general event  as a single real number. Then a trace can also be encoded as a single real number. So, we can also encode operations on traces such as the join operation.

\subsubsection{Encoding synchronization} A key relation that needs to be encoded is the synchronization relation \sm{tr_1\|_{cs} tr_2 \Downarrow \textit{tr}}. We note that each of the synchronization rules (except SyncEmpty) produces an extra general event  in \sm{\textit{tr}}. Hence, the derivation of \sm{tr_1\|_{cs}tr_2 \Downarrow \textit{tr}} consists of exactly \sm{\len(\textit{tr})} steps plus a final SyncEmpty step. We encode the relation by encoding the entire derivation using two sequences of traces \sm{T_1} and \sm{T_2}, intended to represent intermediate traces for \sm{tr_1} and \sm{tr_2}:
{ \[ 
\begin{array}{ll}
	tr_1\|_{cs}tr_2\Downarrow \textit{tr} \equiv \exists T_1,\,T_2.
	 \len(T_1) = \len(\textit{tr}_1) + 1 \wedge \,  \\ \  \len(T_2) = \len(\textit{tr}_2) + 1
	\wedge T_{11} = tr_1 \wedge T_{21} = tr_2 \wedge T_{1,k+1} = T_{2,k+1}=\epsilon\, \\
	\qquad  \wedge\forall 1\le i\le k.\,
	\textit{Step}_{cs}(T_{1i},T_{1,i+1},T_{2i},T_{2,i+1},tr[i])
\end{array}
\] }
where \sm{	\textit{Step}_{cs}(tr_1,tr_1',tr_2,tr_2',e)} means one step in the derivation of the synchronization relation, reducing $tr_1$ to $tr_1'$ and $tr_2$ to $tr_2'$, and producing event $e$. It is obtained by encoding the definition of synchronization in Fig.~\ref{fig:rule-synchronization}.

\subsubsection{Encoding the predicates}

We now show that using the above encoding, each of the weakest liberal precondition formulas can be rewritten in the language of FOD. For the sequential case, the only tricky case is encoding the continuous evolution using the method above. As an example, consider the second conjunct of the weakest liberal precondition for ODEs:
{ \[ 
\begin{array}{ll}
\forall d>0.\, (\forall t\in[0,d).\, B[\vec{p}_{\vec{x}_0}(t)/\vec{x}] \wedge \neg B[\vec{p}_{\vec{x}_0}(d)/\vec{x}]) \to \\
\qquad
Q[\vec{p}_{\vec{x}_0}(d)/\vec{x}, \tracev^\chop\langle d,\vec{p}_{\vec{x}_0},\emptyset\rangle/\tracev].
\end{array}
\] }
Suppose the differential equation \sm{\vec{\dot{x}}=\vec{e}} is numbered as \sm{\vec{\dot{x}}=\vec{e}_i}, then the condition can be written equivalently as follows, unfolding the encoding of \sm{\vec{p}_{\vec{x}_0}} as \sm{(d,\vec{x}_0,i)}:
\[ 
\begin{array}{l}
\forall d>0.\, (\forall t\in[0,d).\, B[(d,\vec{x}_0,i)_t/\vec{x}] \wedge \neg B[(d,\vec{x}_0,i)_d/\vec{x}])\to \, \\ \qquad
Q[(d,\vec{x}_0,i)_d/\vec{x}, \tracev^\chop\langle d, (d,\vec{x}_0,i),\emptyset\rangle/\tracev].
\end{array}\]
For repetition, define by induction on natural numbers:
{  \[\begin{array}{rl}
	\wp_0(c^*,Q) = true, & 
	\wp_{k+1}(c^*,Q) = \wp_k(c; c^*, Q).
\end{array}\] }
Then we can write:
$ \wp(c^*,Q) = (\exists k >0.\, \wp_k(c^*, Q)) $.

This concludes the case of sequential programs. For the case of parallel programs, the only extra relation is the synchronization relation, whose expressibility is shown in the previous subsection.

\section{Discrete relative completeness}
\label{sec:discretecomplete}
In this section, we prove the discrete relative completeness of our proof system in the sense that all continuous evolutions can be approximated by discrete actions with arbitrary precision, which is similar to the discrete relative completeness for \dL\ in \cite{Platzer12a}.

 We say two generalized events  are within distance $\epsilon$ if 
\begin{itemize}
	\item  they are both wait events, of the form \sm{\langle d_1,\vec{p}_1,\rdy_1\rangle} and \sm{\langle d_2,\vec{p}_2,\rdy_2\rangle} respectively, and \sm{|d_1-d_2|<\epsilon}, \sm{\|\vec{p}_1(t)-\vec{p}_2(t)\|<\epsilon} for all \sm{0<t<\min(d_1,d_2)}, and \sm{\rdy_1=\rdy_2}; or
	\item  they are both communication events, of the form \sm{\langle ch_1\triangleright_1,v_1\rangle} and \sm{\langle ch_2\triangleright_2,v_2\rangle}, and \sm{ch_1\triangleright_1=ch_2\triangleright_2} and \sm{|v_1-v_2|<\epsilon}.
\end{itemize}
Two traces \sm{tr_1} and \sm{tr_2} are within distance $\epsilon$, denoted by \sm{d(tr_1, tr_2) < \epsilon}, if they contain the same number of generalized events, and each pair of corresponding generalized events are within $\epsilon$. Two states \sm{s_1} and \sm{s_2} are within distance $\epsilon$, denoted by \sm{d(s_1, s_2) < \epsilon},  if \sm{|s_1(x) - s_2(x)| < \epsilon} for all variables \sm{x}. The \sm{\epsilon$-neighborhood $\mathcal{U}_{\epsilon}(s, \textit{tr})} of a pair \sm{(s, \textit{tr})} of state and trace is defined as:
{  \[\mathcal{U}_{\epsilon}(s, \textit{tr}) \define \{(s', \textit{tr}') ~|~ d(s, s') < \epsilon \wedge  d(\textit{tr},\textit{tr}') < \epsilon\}. \]}
Similarly, we define the $\epsilon$-neighborhood of a state \sm{s}. The $-\epsilon$ neighborhood of a predicate \sm{Q} on pairs of state and trace is the set of pairs for which their $\epsilon$-neighborhoods satisfy \sm{Q}. That is,
{  \[ \mathcal{U}_{-\epsilon}(Q) \define \{(s, \textit{tr}) ~|~ \forall (s', \textit{tr}') \in \mathcal{U}_{\epsilon}(s, \textit{tr}).\, Q(s', \textit{tr}')\}. \]}
Similarly, we define  \sm{\mathcal{U}_{-\epsilon}(B)} for a predicate \sm{B} on states only. A predicate \sm{Q} is \emph{open} if for any state \sm{s} and trace \sm{\textit{tr}} satisfying \sm{Q}, there exists $\epsilon>0$ such that \sm{Q(s', \textit{tr}')} holds for all \sm{(s', \textit{tr}') \in \mathcal{U}_{\epsilon}(s, \textit{tr})}.  For discrete relative completeness, we  consider the case for \emph{open} pre- and post-conditions first, then show  the general case in  Theorem~\ref{thm:dicreteRC}.

Next, we define the Euler approximation to the solution \sm{\vec{p}_{\vec{x}_0}} of an ODE \sm{\vec{\dot{x}}=\vec{e}} w.r.t. an initial vector \sm{\vec{x}_0}. Given a step size \sm{h>0}, a \emph{discrete solution} starting at \sm{\vec{x}_0} is a sequence \sm{(\vec{x}_0,\dots,\vec{x}_n,\dots)} with   \sm{\vec{x}_{i+1} = \vec{x}_i + h\cdot \vec{e}(\vec{x}_i)}, for \sm{i=0,1,\ldots}. We define the continuous approximation by joining the discrete points with straight lines. Define a function \sm{\vec{f}_{\vec{x}_0,h}:\mathbb{R}^+\to S} by \sm{\vec{f}_{\vec{x}_0,h}((i+t)h) = \vec{x}_i + (\vec{x}_{i+1}-\vec{x}_i)t} for any integer \sm{i\ge 0} and fractional part \sm{0\le t<1}.

\subsection{Discretization Rule for Continuous Evolution}
We first  define assertions equivalent to those appearing in the rule for ODEs
with the assumption that  \sm{B} is \emph{open}. This assumption will be dropped later when proving the final theorem for discrete completeness. 
The original precondition on the continuous solution is
{   \[ 
\begin{array}{l}
	\forall d>0.\, (\forall t\in[0,d).\, B[\vec{p}_{\vec{x}_0}(t)/\vec{x}]) \wedge \neg B[\vec{p}_{\vec{x}_0}(d)/\vec{x}] \to \\
	 \qquad 
	Q[\vec{p}_{\vec{x}_0}(d)/\vec{x}, \tracev^\chop \langle d, \vec{p}_{\vec{x}_0}, \emptyset\rangle/\tracev]
\end{array} 
\tag{CP}\] }
Note that  there is at most one value of \sm{d} for which the precondition holds, which is the maximal duration that  the ODE can continuously evolve subject to \sm{B}. \sm{d} may not exist in case the ODE has an infinite-time-horizon trajectory on which \sm{B} always holds. 

The discrete version of the assertion, without mentioning solutions to ODEs, is as follows:
{  \[ 
\begin{array}{ll}
	\forall T>0.\, (\forall 0\le t<T.\, \exists \epsilon_0>0.\, \forall 0<\epsilon<\epsilon_0.\,  \\ \qquad \quad
	\exists h_0>0.\, \forall 0<h<h_0.\, \vec{f}_{\vec{x}_0,h}(t)\in \mathcal{U}_{-\epsilon}(B)) \to \\
 	 (\exists \epsilon_0>0.\, \forall 0<\epsilon<\epsilon_0.\, \exists h_0>0.\, \forall 0<h<h_0.\, 
	  
	\\ \hspace*{-7mm} \vec{f}_{\vec{x}_0,h}(T) \in \neg \mathcal{U}_{-\epsilon}(B) \to
	(\vec{f}_{\vec{x}_0,h}(T), \tracev^\chop \langle T, \vec{f}_{\vec{x}_0,h}, \emptyset \rangle) \in \mathcal{U}_{-\epsilon}(Q))
	\tag{DP}	
\end{array} \] }
where \sm{\vec{f}_{\vec{x}_0,h}} is the continuous approximation defined above. 
{ 
As stated by (DP), the continuous approximation  of the solution at time $t$, i.e. $\vec{f}_{\vec{x}_0,h}(t)$,  is within $B$ and moreover the distance between it  and the boundary of $B$ must be at least $\epsilon$, and if  the continuous approximation  of the solution at exiting  time $T$ is no longer within the $-\epsilon$ neighborhood of $B$, then  the corresponding state and trace pair at time $T$ must be within $Q$ and the distance between it and the boundary of postcondition $Q$ must be at least  $\epsilon$.}

To justify the equivalence between the predicates CP and DP, we first need to estimate the global error of Euler approximations. According to~\cite{Platzer12a}, we have the following theorem.

\begin{theorem} \label{thm:globalerror}
	Let \sm{\vec{p}(t)} be a solution of the initial value problem \sm{\vec{\dot{x}} = \vec{e}}, \sm{\vec{p}(0)=\vec{x}_0} on the time interval \sm{[0, T]}.
	Let \sm{L} be the Lipschitz constant of the ODE \sm{\vec{\dot{x}}=\vec{e}}, that is, for any compact set \sm{S} of $\mathbb{R}^n$, \sm{\|\vec{e}(\vec{y}_1) - \vec{e}(\vec{y}_2)\| \leq L\|\vec{y}_1 - \vec{y}_2\|} for all \sm{\vec{y}_1, \vec{y}_2 \in S}. Then there exists a step size \sm{h_e > 0} such that for all \sm{0<h\leq h_e} and all \sm{n} with \sm{nh \leq T}, the global discretization error between \sm{\vec{p}(nh)}  and the discrete solution
	\sm{\vec{x}_{n} = \vec{x}_{n-1} + h \cdot \vec{e}(\vec{x}_{n-1})} satisfies 
{  \[
		\|\vec{p}(nh) - \vec{x}_{n}\| \leq  \frac{h}{2}\max_{\theta \in [0, T]} \left\| \frac{d^2 \vec{p}}{dt^2}(\theta) \right\|\fracN{e^{LT} - 1}{L}.
  \] }
\end{theorem}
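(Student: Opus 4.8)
The plan is to establish the classical forward-Euler global error bound by combining a one-step (local truncation) estimate from Taylor's theorem with a discrete Grönwall argument. First I would observe that since $\vec{e}$ is a polynomial (hence $C^\infty$) vector field, the exact solution $\vec{p}$ is itself $C^\infty$ on $[0,T]$; in particular $\vec{p}'' = D\vec{e}(\vec{p})\,\vec{e}(\vec{p})$ is continuous, so $M := \max_{\theta\in[0,T]}\|\vec{p}''(\theta)\|$ is finite and attained on the compact interval. I would then fix a compact tube $S$ around the trajectory $\{\vec{p}(t) : t\in[0,T]\}$ and take $L$ to be the Lipschitz constant of $\vec{e}$ on $S$, as the hypothesis provides.

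The heart of the argument is the one-step analysis. For each $n$ with $(n+1)h \leq T$, Taylor's theorem with Lagrange remainder gives $\vec{p}((n+1)h) = \vec{p}(nh) + h\,\vec{p}'(nh) + \tfrac{h^2}{2}\vec{p}''(\xi_n)$ for some $\xi_n \in [nh,(n+1)h]$, and since $\vec{p}'(nh) = \vec{e}(\vec{p}(nh))$, subtracting the Euler update $\vec{x}_{n+1} = \vec{x}_n + h\,\vec{e}(\vec{x}_n)$ yields, for the global error $\vec{\varepsilon}_n := \vec{p}(nh) - \vec{x}_n$,
\[
\vec{\varepsilon}_{n+1} = \vec{\varepsilon}_n + h\bigl(\vec{e}(\vec{p}(nh)) - \vec{e}(\vec{x}_n)\bigr) + \tfrac{h^2}{2}\vec{p}''(\xi_n).
\]
Provided $\vec{x}_n \in S$, the Lipschitz bound gives $\|\vec{e}(\vec{p}(nh)) - \vec{e}(\vec{x}_n)\| \leq L\|\vec{\varepsilon}_n\|$, so $\|\vec{\varepsilon}_{n+1}\| \leq (1+hL)\|\vec{\varepsilon}_n\| + \tfrac{h^2 M}{2}$. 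Since $\vec{\varepsilon}_0 = 0$, unrolling this linear recursion and summing the resulting geometric series gives
\[
\|\vec{\varepsilon}_n\| \leq \tfrac{h^2 M}{2}\sum_{k=0}^{n-1}(1+hL)^k = \tfrac{hM}{2L}\bigl((1+hL)^n - 1\bigr),
\]
and the elementary estimate $(1+hL)^n \leq e^{nhL} \leq e^{LT}$ (valid because $nh \leq T$) delivers exactly the claimed bound $\|\vec{\varepsilon}_n\| \leq \tfrac{hM}{2L}(e^{LT}-1)$.

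The one gap that makes the statement nontrivial --- and the reason it only asserts the bound for sufficiently small $h \leq h_e$ --- is justifying that the Euler iterates $\vec{x}_n$ actually remain in the compact tube $S$, which is required to apply the Lipschitz estimate at each step. I expect this to be the main obstacle. I would resolve it by a simultaneous strong induction: choose $h_e$ small enough that $\tfrac{h_e M}{2L}(e^{LT}-1)$ is strictly less than the radius of $S$ about the trajectory; then, assuming $\vec{x}_0,\dots,\vec{x}_n \in S$, the error bound above forces $\vec{x}_{n+1}$ to lie within distance $\tfrac{h M}{2L}(e^{LT}-1)$ of $\vec{p}((n+1)h)$, hence inside $S$, closing the induction. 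This bootstrapping reconciles the apparent circularity (the error bound needs the iterates in $S$, and staying in $S$ needs the error bound) and pins down the admissible step size $h_e$.
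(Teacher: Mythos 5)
Your proof is correct, but note that the paper itself offers no proof of Theorem~\ref{thm:globalerror} at all: it imports the statement wholesale from \cite{Platzer12a} (``According to~\cite{Platzer12a}, we have the following theorem''), where it is in turn the classical forward-Euler global error bound from numerical analysis. So there is no in-paper argument to compare against; what you have done is reconstruct the standard textbook derivation, and you have done so faithfully. Your one-step Taylor estimate, the recursion $\|\vec{\varepsilon}_{n+1}\| \leq (1+hL)\|\vec{\varepsilon}_n\| + \tfrac{h^2 M}{2}$, the geometric-series unrolling to $\tfrac{hM}{2L}\bigl((1+hL)^n - 1\bigr)$, and the estimate $(1+hL)^n \leq e^{nhL} \leq e^{LT}$ reproduce the claimed bound exactly. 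You also correctly identified the one genuinely nontrivial point, which the theorem statement silently encodes in the phrase ``there exists a step size $h_e > 0$'': the Lipschitz constant is only available on a compact set $S$, so one must show the Euler iterates stay in a tube around the trajectory, and your simultaneous strong induction (choose $h_e$ so the error bound is below the tube radius, then bootstrap) is the standard and correct resolution of that apparent circularity --- a point often glossed over in informal treatments and plausibly the only reason $h_e$ appears in the statement at all.

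One small technical repair: Taylor's theorem with a \emph{Lagrange} remainder does not hold for vector-valued functions with a single intermediate point $\xi_n$ shared by all components. Use instead the integral form of the remainder, $\vec{p}((n+1)h) - \vec{p}(nh) - h\,\vec{p}'(nh) = \int_{nh}^{(n+1)h} \bigl((n+1)h - \tau\bigr)\,\vec{p}''(\tau)\,d\tau$, whose norm is bounded by $\tfrac{h^2}{2}\max_{\theta\in[0,T]}\|\vec{p}''(\theta)\|$, or apply the Lagrange form componentwise and pass to the norm. This changes nothing downstream, since your recursion only uses the bound $\tfrac{h^2 M}{2}$ on the local truncation error, but as written the intermediate identity with a single $\xi_n$ is false in general for $n$-dimensional $\vec{p}$.
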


Using Theorem~\ref{thm:globalerror}, we can always find a sufficiently small time step $h$ such that the error between the exact solution and the discrete approximations is arbitrarily small. 
Next, we state a version of Theorem~\ref{thm:globalerror} that is better suited for the following proof.
\begin{lemma}
	Let $\vec{p}(t)$ be a solution to the initial value problem $\vec{\dot{x}}=\vec{e}$, $\vec{p}(0)=\vec{x}_0$ on the time interval $[0,T]$, and let $L$ be the Lipshitz constant as before. Given any $\epsilon>0$, there exists $h_0>0$ such that for all $0<h<h_0$, the difference between the actual solution $\vec{p}$ and its continuous approximation $\vec{f}_h$ is at most $\epsilon$ on the interval $[0,T]$.
	\label{lem:globalerror}
\end{lemma}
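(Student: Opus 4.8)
The plan is to reduce this to the grid-point error bound already established in Theorem~\ref{thm:globalerror}, and then separately control the interpolation error incurred between grid points. Before invoking that theorem I would first fix a compact \emph{tube} around the trajectory: since $\vec{p}([0,T])$ is compact and $\vec{e}$ is continuous (indeed locally Lipschitz), there is a closed bounded neighbourhood $S$ of $\vec{p}([0,T])$ on which $\vec{e}$ is bounded, say $\|\vec{e}\|\le M$ on $S$, with Lipschitz constant $L$. I would also set $M' \define \max_{t\in[0,T]}\|\vec{e}(\vec{p}(t))\|$, which is finite by continuity and bounds $\|\dot{\vec{p}}\|$ on $[0,T]$. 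By Theorem~\ref{thm:globalerror}, after shrinking $h$ so that the discrete points $\vec{x}_i$ remain inside $S$, there is a constant $C$ (depending only on $T$, $L$, and the second derivative of $\vec{p}$) such that the grid-point error satisfies $\|\vec{p}(ih)-\vec{x}_i\|\le Ch$ for every $i$ with $ih\le T$.

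Next I would bound the error at an arbitrary time. Given $t\in[0,T]$, write $t = ih + r$ with $0\le r < h$, so that $\vec{f}_h(t) = \vec{x}_i + (\vec{x}_{i+1}-\vec{x}_i)\,r/h$ by the definition of the continuous approximation. A triangle inequality split gives
\[
\|\vec{p}(t)-\vec{f}_h(t)\| \le \|\vec{p}(t)-\vec{p}(ih)\| + \|\vec{p}(ih)-\vec{x}_i\| + \|\vec{x}_i - \vec{f}_h(t)\|.
\]
The first term is at most $M' r \le M' h$ since $\vec{p}$ has derivative $\vec{e}(\vec{p})$ bounded by $M'$; the second is at most $Ch$ by the grid-point bound; and the third equals $(r/h)\|\vec{x}_{i+1}-\vec{x}_i\| \le \|h\,\vec{e}(\vec{x}_i)\| \le Mh$, using that the Euler step is $\vec{x}_{i+1}-\vec{x}_i = h\,\vec{e}(\vec{x}_i)$ and $\vec{x}_i\in S$. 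Summing, $\|\vec{p}(t)-\vec{f}_h(t)\| \le (M'+C+M)\,h$ uniformly in $t$. To close the estimate I would take $h_0 = \min\{h_e, h_S, \epsilon/(M'+C+M)\}$, where $h_e$ is the step size furnished by Theorem~\ref{thm:globalerror} and $h_S$ is the threshold ensuring the iterates stay within $S$; then for every $0<h<h_0$ and every $t\in[0,T]$ we get $\|\vec{p}(t)-\vec{f}_h(t)\| < \epsilon$, as required.

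The main obstacle is the tube argument in the first paragraph: Theorem~\ref{thm:globalerror} tacitly presupposes that the Euler iterates lie in the compact set on which $L$ (and here also $M$) is valid, so one must verify that for sufficiently small $h$ the discrete solution does not escape $S$. This is the only genuinely delicate point, and I would handle it in the standard way — choose $S$ to be a closed $\rho$-neighbourhood of $\vec{p}([0,T])$ and use the $O(h)$ grid-point bound inductively to show the iterates stay within distance $\rho$ of the trajectory once $h$ is small enough, so that the boundedness and Lipschitz constants used above indeed apply. Everything else is a routine application of the triangle inequality together with the explicit form of the piecewise-linear interpolant $\vec{f}_h$.
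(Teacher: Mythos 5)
Your proof is correct and follows essentially the same route as the paper's: both reduce to the grid-point bound of Theorem~\ref{thm:globalerror} and control the between-grid-point error with the same three-term triangle-inequality split ($\|\vec{p}(t)-\vec{p}(ih)\|$, the grid error, and the interpolant's deviation from $\vec{x}_i$), your uniform $(M'+C+M)h$ bound simply replacing the paper's $\epsilon/3$-splitting. If anything you are more careful than the paper, which tacitly assumes the Euler iterates stay in the region where the Lipschitz and derivative bounds apply and (slightly imprecisely) writes the interpolant's slope as $\frac{d\vec{p}}{dt}(nh)$ when it is really $\vec{e}(\vec{x}_n)$ --- both points your compact-tube argument handles explicitly.
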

\begin{proof}
	First, from Theorem~\ref{thm:globalerror}, take $h_1$ such that for all $0<h<h_1$ and all $n$ with $nh\le T$, we get
	\begin{equation*}
		\|\vec{p}(nh) - \vec{x}_{n}\| \leq \frac{h}{2}\max_{\theta \in [0, T]} \left\| \frac{d^2 \vec{p}}{dt^2}(\theta) \right\|\fracN{e^{LT} - 1}{L} < \frac{\epsilon}{3}.
	\end{equation*}
	This bounds the difference between the actual solution and the discrete approximation. For the continuous approximation, we further need to consider the intermediate points between $nh$ and $(n+1)h$. Hence, take $t\in(nh, (n+1)h)$, by the mean value theorem, there is a $\theta\in(nh,t)$ such that
	\[ \| \vec{p}(t) - \vec{p}(nh)\| = (t-nh)\cdot \left\|\frac{d \vec{p}}{d t}(\theta)\right\|\]
	Since $\|\frac{d \vec{p}}{d t}(\theta)\|$ is bounded along the path $\vec{p}$, we can take $h_2$ sufficiently small so that for any $0<h<h_2$, $\|\vec{p}(t)-\vec{p}(nh)\|$ is bounded above by $\frac{\epsilon}{3}$. Likewise, since
	\[ \|\vec{f}_h(t) - \vec{x}_n \| = \|\vec{f}_h(t) - \vec{f}_h(nh) \| = (t-nh)\cdot \left\|\frac{d\vec{p}}{d t}(nh)\right\|, \]
    we have any $0<h<h_2$, also $\|\vec{f}_h(t)-\vec{x}_n \|$ is bounded above by $\frac{\epsilon}{3}$. Take $h_0=\min(h_1,h_2)$ and combining, we have for any $0<h<h_0$,
	\[ 
	\begin{array}{lll}
		 &\|\vec{p}(t) - \vec{f}_h(t)\| \\
		&\qquad\le\|\vec{p}(t) - \vec{p}(nh)\| + \|\vec{p}(nh) - \vec{x}_n\| +
		\|\vec{f}_h(t)-\vec{x}_n\| \\
		 &\qquad < \epsilon, 
	\end{array}\]
	as desired.
\end{proof}
\begin{theorem} \label{thm:CP-DP} 
\sm{\models \text{CP} \leftrightarrow \text{DP}}.    
\end{theorem}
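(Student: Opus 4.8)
The plan is to prove the two implications $\text{CP}\to\text{DP}$ and $\text{DP}\to\text{CP}$ separately, with the single analytic engine being Lemma~\ref{lem:globalerror}: on any finite horizon the continuous Euler approximation $\vec{f}_{\vec{x}_0,h}$ converges uniformly to the exact trajectory $\vec{p}_{\vec{x}_0}$ as $h\to 0$. The only other ingredients are elementary facts about the erosion operator $\mathcal{U}_{-\epsilon}$: for every predicate $R$ we have $\mathcal{U}_{-\epsilon}(R)\subseteq R$, and, working with open balls, each $\mathcal{U}_{-\epsilon}(R)$ is \emph{closed}; moreover when $R$ is open one has $\bigcup_{\epsilon>0}\mathcal{U}_{-\epsilon}(R)=R$, so membership in $R$ is always witnessed with some positive ``slack'' $\delta$, i.e.\ $\vec{p}(t)\in\mathcal{U}_{-\delta}(R)$.

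First I would isolate the bridging claim that, for each fixed $t$, the inner block of the antecedent of DP,
\[ \exists\epsilon_0>0.\,\forall 0<\epsilon<\epsilon_0.\,\exists h_0>0.\,\forall 0<h<h_0.\ \vec{f}_{\vec{x}_0,h}(t)\in\mathcal{U}_{-\epsilon}(B), \]
is equivalent to the continuous condition $B[\vec{p}_{\vec{x}_0}(t)/\vec{x}]$. The $(\Leftarrow)$ direction uses openness of $B$: from $\vec{p}(t)\in\mathcal{U}_{-2\delta}(B)$ and $\|\vec{f}(t)-\vec{p}(t)\|<\delta$ for small $h$ one obtains $\vec{f}(t)\in\mathcal{U}_{-\delta}(B)\subseteq\mathcal{U}_{-\epsilon}(B)$. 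The $(\Rightarrow)$ direction fixes one $\epsilon<\epsilon_0$, lets $h\to 0$, and uses closedness of $\mathcal{U}_{-\epsilon}(B)$ together with $\vec{f}(t)\to\vec{p}(t)$ to conclude $\vec{p}(t)\in\mathcal{U}_{-\epsilon}(B)\subseteq B$. Quantifying over $t\in[0,T)$, the whole DP-antecedent collapses to the CP-antecedent clause $\forall t\in[0,T).\,B[\vec{p}(t)/\vec{x}]$.

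For $\text{CP}\to\text{DP}$, fix $T>0$ and assume the DP-antecedent, i.e.\ $\vec{p}(t)\in B$ for all $t<T$. I would split on $\vec{p}(T)$. If $\vec{p}(T)\in B$, then by openness and convergence the inner hypothesis $\vec{f}(T)\in\neg\mathcal{U}_{-\epsilon}(B)$ is \emph{false} for small $\epsilon,h$, so the inner implication of DP is vacuously true. If $\vec{p}(T)\notin B$, then $T$ is exactly the escape instant $d=T$ of CP, so CP yields $(\vec{p}(T),\tracev^\chop\langle T,\vec{p},\emptyset\rangle)\in Q$; using openness of $Q$ to extract slack $2\delta$ and the uniform convergence of both the endpoint $\vec{f}(T)\to\vec{p}(T)$ and the trace event $\langle T,\vec{f},\emptyset\rangle\to\langle T,\vec{p},\emptyset\rangle$ (equal durations and ready sets, trajectory gap $\to 0$), the approximate pair lands in $\mathcal{U}_{-\delta}(Q)\subseteq\mathcal{U}_{-\epsilon}(Q)$, establishing the DP-consequent with $\epsilon_0=\delta$.

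For $\text{DP}\to\text{CP}$, take the unique escape instant $d$ with $\vec{p}(t)\in B$ $(t<d)$ and $\neg B[\vec{p}(d)/\vec{x}]$, and instantiate DP at $T=d$; its antecedent holds by the bridging claim. Since $\vec{p}(d)\notin B$, for small $h$ the point $\vec{p}(d)$ itself witnesses $\vec{f}(d)\in\neg\mathcal{U}_{-\epsilon}(B)$, so the inner implication fires and $(\vec{f}(d),\tracev^\chop\langle d,\vec{f},\emptyset\rangle)\in\mathcal{U}_{-\epsilon}(Q)$ for all small $h$. Fixing $\epsilon$ and letting $h\to 0$, these pairs converge to $(\vec{p}(d),\tracev^\chop\langle d,\vec{p},\emptyset\rangle)$, and closedness of $\mathcal{U}_{-\epsilon}(Q)$ places the limit in $\mathcal{U}_{-\epsilon}(Q)\subseteq Q$, which is the CP-conclusion. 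I expect the main obstacle to be precisely this matching of the escape condition --- which sits in the \emph{antecedent} of CP but appears as the hypothesis $\vec{f}(T)\in\neg\mathcal{U}_{-\epsilon}(B)$ inside the \emph{consequent} of DP --- together with the careful ordering of the alternating quantifiers $\exists\epsilon_0\,\forall\epsilon\,\exists h_0\,\forall h$, so that the topological slack $\delta$ coming from openness is chosen independently of $\epsilon$ while $h_0$ is chosen after $\epsilon$ via Lemma~\ref{lem:globalerror}.
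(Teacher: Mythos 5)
Your proposal is correct and takes essentially the same route as the paper: both directions rest on Lemma~\ref{lem:globalerror} together with the openness of $B$ and $Q$ to trade topological slack against the uniform Euler error, with the same case split on whether $T$ falls before, at, or after the escape instant. Your ``bridging claim'' merely packages as one lemma what the paper proves inline (its case where $B$ fails before $T$ in CP$\to$DP, and the antecedent-verification step of DP$\to$CP), and your closed-erosion limit arguments are equivalent to the paper's single-small-$h$ triangle estimates of the form $\epsilon+\epsilon_0<\epsilon_1$.
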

\begin{proof}[Proof for Theorem~\ref{thm:CP-DP}]
(CP) $\rightarrow$ (DP): Assume (CP) is true in state $s$ and trace $\textit{tr}$. This means for any $d>0$, suppose $\forall t\in[0,d).\,  B[\vec{p}_{\vec{x}_0}(t)/\vec{x}]$ and $\neg B[\vec{p}_{\vec{x}_0}(d)/\vec{x}]$ hold, where  $\vec{p}_{\vec{x}_0}$ is the unique solution of $\vec{\dot{s}}=\vec{e}$ with the initial value $\vec{x}_0 = s(\vec{x})$, then\\
$\nseman{Q}{s[\vec{x} \mapsto \vec{p}_{\vec{x}_0}(d)]}{\textit{tr}^\chop \langle d, \vec{p}_{\vec{x}_0}, \emptyset\rangle}$ also holds. For ease of presentation, we will abbreviate the latter to  $Q(\vec{p}_{\vec{x}_0}(d),\textit{tr}^\chop \langle d, \vec{p}_{\vec{x}_0}, \emptyset\rangle)$ below.

Fix $T>0$ in (DP), there are three cases depending on whether the assumptions $\neg B[\vec{p}_{\vec{x}_0}(t)/\vec{x}]$ and $\forall t\in[0,T).\, B[\vec{p}_{\vec{x}_0}(t)/\vec{x}]$ hold or not.

If both assumptions hold, then we get $Q(\vec{p}_{\vec{x}_0}(d),\textit{tr}^\chop \langle d, \vec{p}_{\vec{x}_0}, \emptyset\rangle)$ holds.  From the assumption that $Q$ is open, we can take $\epsilon_1>0$ such that $Q(s',\textit{tr}')$ for all $(s',\textit{tr}')\in \mathcal{U}_{\epsilon_1}(\vec{p}_{\vec{x}_0}(d), \textit{tr}^\chop \langle d,\vec{p}_{\vec{x}_0},\emptyset\rangle)$.

Take $\epsilon_0=\frac{\epsilon_1}{2}$. Then, by Lemma~\ref{lem:globalerror}, there exists $h_0>0$ such that for all $0<h<h_0$, the distance between $\vec{p}_{\vec{x}_0}(t)$ and $\vec{f}_{\vec{x}_0,h}(t)$ is bounded above by $\epsilon_0$ along the interval $t\in[0,T]$. With this choice of $\epsilon_0$ and $h_0$ in the conclusion of (DP), we get for any $\epsilon<\epsilon_0$ and $h<h_0$, the distance between $(\vec{f}_{\vec{x}_0,h}(T),\textit{tr}^\chop\langle T,\vec{f}_{\vec{x}_0,h},\emptyset\rangle)$ and $(\vec{p}_{\vec{x}_0}(T), \textit{tr}^\chop\langle d,\vec{p}_{\vec{x}_0},\emptyset\rangle)$ is at most $\epsilon_0$. Then, any pair within distance $\epsilon$ of $(\vec{f}_{\vec{x}_0,h}(T),\textit{tr}^\chop\langle T,\vec{f}_{\vec{x}_0,h},\emptyset\rangle)$ is within distance $\epsilon+\epsilon_0<\epsilon_1$ of $(\vec{p}_{\vec{x}_0}(T), \textit{tr}^\chop\langle d,\vec{p}_{\vec{x}_0},\emptyset\rangle)$, and hence satisfy $Q$. This shows
\[(\vec{f}_{\vec{x}_0,h}(T),\textit{tr}^\chop\langle T,\vec{f}_{\vec{x}_0,h},\emptyset\rangle)\in\mathcal{U}_{-\epsilon}(Q)\] as desired.

Now, we consider the case where $\forall t\in[0,T).\, B[\vec{p}_{\vec{x}_0}(t)/\vec{x}]$ does not hold. Intuitively, this corresponds to the case where $T$ is greater than the time length of execution. Choose $t\in[0,T)$ such that $\neg B[\vec{p}_{\vec{x}_0}(t)/\vec{x}]$. We claim that the first assumption in (DP) fails for this value of $t$. That is,
\[ 
\begin{array}{l}
	\neg (\exists \epsilon_0>0.\, \forall 0<\epsilon<\epsilon_0.\, \exists h_0>0.\, \forall 0<h<h_0.\, \vec{f}_{\vec{x}_0,h}(t)\in \mathcal{U}_{-\epsilon}(B))
\end{array} \]
Suppose otherwise, then take $\epsilon_0$, $\epsilon<\epsilon_0$ and $h_0$ so that for all $0<h<h_0$ the condition $\vec{f}_{\vec{x}_0,h}(t)\in\mathcal{U}_{-\epsilon}(B)$ holds. Take $h$ sufficiently small such that the difference between $\vec{f}_{\vec{x}_0,h}(t)$ and $\vec{p}_{\vec{x}_0}(t)$ is bounded above by $\epsilon$ for all $t\in[0,T)$. Then $\vec{p}_{\vec{x}_0}(t)\in \mathcal{U}_{\epsilon}(\vec{f}_{\vec{x}_0,h}(t))$, so that $\neg B[\vec{p}_{\vec{x}_0}(t)/\vec{x}]$ and $\vec{f}_{\vec{x}_0,h}(t)\in\mathcal{U}_{-\epsilon}(B)$ together gives a deadlock.

Finally, we consider the case where $\neg B[\vec{p}_{\vec{x}_0}(t)/\vec{x}]$ does not hold, in other words $\vec{p}_{\vec{x}_0}(T)$ satisfies $B$. Intuitively, this corresponds to the case where $T$ is less than the time length of execution. We claim that for sufficiently small $\epsilon$ and $h$, the condition $\vec{f}_{\vec{x}_0,h}(T)\in \neg\mathcal{U}_{-\epsilon}(B)$ is false, that is $\vec{f}_{\vec{x}_0,h}(T)\in\mathcal{U}_{-\epsilon}(B)$, hence the implication is vacuously true. Since $\vec{p}_{\vec{x}_0}(T)$ satisfies $B$ and $B$ is open, we can take $\epsilon_1>0$ so that $\mathcal{U}_{\epsilon_1}(\vec{p}_{\vec{x}_0}(T))\subseteq B$. Take $\epsilon_0=\frac{\epsilon_1}{2}$, and take $h_0$ so that for any $0<h<h_0$, we have the distance between $\vec{p}_{\vec{x}_0}(T)$ and $\vec{f}_{\vec{x}_0,h}(T)$ is less than $\epsilon_0$. With this choice of $\epsilon_0$ and $h_0$, for any $\epsilon<\epsilon_0$ and $h<h_0$, we know that any state within $\epsilon$ of $\vec{f}_{\vec{x}_0,h}(T)$ is within $\epsilon+\epsilon_0<\epsilon_1$ of $\vec{p}_{\vec{x}_0}(T)$, and hence satisfy $B$. This shows $\vec{f}_{\vec{x}_0,h}(T)\in\mathcal{U}_{-\epsilon}(B)$ as required.

We have now examined all three cases of $T>0$, and so have derived (DP) from (CP).

(DP) $\rightarrow$ (CP): Assume (DP) is true in state $s$ and trace $\textit{tr}$. We need to show that (CP) holds in state $s$ and trace $\textit{tr}$. That is, given $d>0$ and a solution $\vec{p}_{\vec{x}_0}$ satisfying $\forall t\in[0,d).\, B[\vec{p}_{\vec{x}_0}(t)/\vec{x}]$ and $\neg B[\vec{p}_{\vec{x}_0}(d)/\vec{x}]$, we need to prove $Q(\vec{p}_{\vec{x}_0}(d), \textit{tr}^\chop \langle d, \vec{p}_{\vec{x}_0}, \emptyset\rangle)$.

From (DP), take $T=d$. First, we show that the first assumption in (DP) holds, that is:
\[ 
\begin{array}{l}
	\forall 0\le t<T.\, \exists \epsilon_0>0.\, \forall 0<\epsilon<\epsilon_0.\, \exists h_0>0.\, \\
	\qquad\forall 0<h<h_0.\, \vec{f}_{\vec{x}_0,h}(t)\in \mathcal{U}_{-\epsilon}(B).
\end{array} \]
Fix some $t\in[0,T)$. Then we have $B[\vec{p}_{\vec{x}_0}(t)/\vec{x}]$ holds, and since $B$ is open, we can take $\epsilon_1$ such that $\mathcal{U}_{\epsilon_1}(\vec{p}_{\vec{x}_0}(t))\subseteq B$. Then take $\epsilon_0=\frac{\epsilon_1}{2}$, and take $h_0$ such that for all $0<h<h_0$, the distance between $\vec{p}_{\vec{x}_0}(t)$ and $\vec{f}_{\vec{x}_0,h}(t)$ is bounded by $\epsilon_0$. Then for any $\epsilon<\epsilon_0$, we know that any state within $\epsilon$ of $\vec{f}_{\vec{x}_0,h}(t)$ is within $\epsilon+\epsilon_0<\epsilon_1$ of $\vec{p}_{\vec{x}_0}(t)$, and hence in $B$. This shows $\vec{f}_{\vec{x}_0,h}(t)\in\mathcal{U}_{-\epsilon}(B)$, as desired. This proves the first assumption in (DP) holds.

Next, we show that for any sufficiently small $\epsilon$, there exists $h_0$ such that for any $0<h<h_0$, the condition $\vec{f}_{\vec{x}_0,h}(T)\in\neg\mathcal{U}_{-\epsilon}(B)$ holds. Given $\epsilon>0$, take $h_0$ such that for any $0<h<h_0$, the distance between $\vec{p}_{\vec{x}_0}(T)$ and $\vec{f}_{\vec{x}_0,h}(T)$ is bounded by $\epsilon$. But from $\neg B[\vec{p}_{\vec{x}_0}(t)/\vec{x}]$ this implies $\vec{f}_{\vec{x}_0,h}(T)\in\neg\mathcal{U}_{-\epsilon}(B)$ as desired.

From this, we have shown that for any sufficiently small $\epsilon$, there exists $h_0>0$ such that for any $0<h<h_0$, the condition $(\vec{f}_{\vec{x}_0,h}(T),$ $\textit{tr}^\chop \langle T, \vec{f}_{\vec{x}_0,h}, \emptyset \rangle) \in \mathcal{U}_{-\epsilon}(Q)$ holds. Take such $\epsilon_0$ and $\epsilon<\epsilon_0$. Then take $h_1$ such that the distance between $\vec{p}_{\vec{x}_0}(t)$ and $\vec{f}_{\vec{x}_0,h}(t)$ is bounded above by $\epsilon$ along the interval $t\in[0,T]$. Then for $h<\min(h_0,h_1)$, we get that $(\vec{p}_{\vec{x}_0}(T), \textit{tr}^\chop\langle T,\vec{f}_{\vec{x}_0,h},\emptyset\rangle)$ is within distance $\epsilon$ of $(\vec{f}_{\vec{x}_0,h}(T), $ $\textit{tr}^\chop \langle T, \vec{f}_{\vec{x}_0,h}, \emptyset \rangle)$, and the latter belongs to $\mathcal{U}_{-\epsilon}(Q)$. This implies $(\vec{p}_{\vec{x}_0}(T), \textit{tr}^\chop\langle T,\vec{f}_{\vec{x}_0,h},\emptyset\rangle)$ satisfies $Q$, as desired. This finishes the proof of (CP) from (DP). 
\end{proof}

\subsection{Discretization Rule for Continuous Interrupt}
An approximation to the rule 
 for continuous interrupt can be presented similarly. There are three occurrences  of solutions to ODEs in the rule. The third occurrence deals with the case where the ODE exits without performing a communication, which 
has the same form as the rule for continuous evolution, so it can be handled in the same way. The other two occurrences deal with
the cases where the ODE is interrupted by an output and input communication, respectively. They can be  handled similarly, so we only show the output case. The corresponding assertion in the weakest precondition is
{  \[ 
\begin{array}{l}
	\forall d>0.\, (\forall t\in[0,d).\, B[\vec{p}_{\vec{x}_0}(t)/\vec{x}]) \to 
	\ Q[\vec{p}_{\vec{x}_0}(d)/\vec{x},  \\
	 \quad \quad 
	 \tracev^\chop \langle d,\vec{p}_{\vec{x}_0},\{\cup_{i\in I} ch_i*\}\rangle^\chop \langle ch!,e[\vec{p}_{\vec{x}_0}(d)/\vec{x}] \rangle/\tracev]
\end{array} \tag{CI}
\] }
The discrete version of the assertion is:
{  \[ 
\begin{array}{l} 
	\forall T>0.\, (\forall 0\le t<T.\, \exists \epsilon_0>0.\, \forall 0<\epsilon<\epsilon_0.\,  \\
	 \qquad 
	\exists h_0>0.\, \forall 0<h<h_0.\, \vec{f}_{\vec{x}_0,h}(t)\in\mathcal{U}_{-\epsilon}(B) ) \to \\
 \exists \epsilon_0>0.\, \forall 0<\epsilon<\epsilon_0.\,
	\exists h_0>0.\, \forall 0<h<h_0.\,  
  (\vec{f}_{\vec{x}_0,h}(T), 
   \\   
 \hspace*{-6mm} \tracev^\chop \langle T,\vec{f}_{\vec{x}_0,h},\{\cup_{i\in I} ch_i*\}\rangle^\chop \langle ch!,e[\vec{f}_{\vec{x}_0,h}(T)/\vec{x}] \rangle)  
  \in \mathcal{U}_{-\epsilon}(Q) \tag{DI} 
\end{array} \] }

The following theorem is proved.
\begin{theorem} \label{thm:CI-DI} 
\sm{\models \text{CI} \leftrightarrow \text{DI}}.
\end{theorem}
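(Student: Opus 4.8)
The plan is to follow the proof of Theorem~\ref{thm:CP-DP} almost verbatim, exploiting the fact that the interrupt assertion is structurally simpler than the exit assertion. Because a communication may preempt the ODE at \emph{any} instant $d>0$ at which $B$ has held throughout $[0,d)$, the premise of (CI) carries no boundary condition $\neg B[\vec{p}_{\vec{x}_0}(d)/\vec{x}]$, and correspondingly the conclusion of (DI) has no $\neg\mathcal{U}_{-\epsilon}(B)$ guard. Consequently the three-way case analysis of Theorem~\ref{thm:CP-DP} collapses to a two-way split, and the delicate ``$T$ smaller than the execution time'' case disappears entirely. The global-error estimate of Lemma~\ref{lem:globalerror} is reused unchanged to control the gap between the exact solution $\vec{p}_{\vec{x}_0}$ and its Euler approximation $\vec{f}_{\vec{x}_0,h}$ on $[0,T]$.

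For the direction (CI) $\to$ (DI), fix $T>0$ and split on whether $\forall t\in[0,T).\, B[\vec{p}_{\vec{x}_0}(t)/\vec{x}]$ holds. If it does, I would instantiate (CI) at $d=T$ to obtain that $Q$ holds of the pair $(\vec{p}_{\vec{x}_0}(T),\, \tracev^\chop\langle T,\vec{p}_{\vec{x}_0},\{\cup_{i\in I}ch_i*\}\rangle^\chop\langle ch!,e[\vec{p}_{\vec{x}_0}(T)/\vec{x}]\rangle)$; openness of $Q$ yields an $\epsilon_1$-ball inside $Q$, and setting $\epsilon_0=\epsilon_1/2$ and invoking Lemma~\ref{lem:globalerror} gives an $h_0$ forcing $\vec{f}_{\vec{x}_0,h}$ within $\epsilon_0$ of $\vec{p}_{\vec{x}_0}$ on $[0,T]$, exactly as in Theorem~\ref{thm:CP-DP}. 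If the $B$-condition fails, I would pick $t\in[0,T)$ with $\neg B[\vec{p}_{\vec{x}_0}(t)/\vec{x}]$ and argue, verbatim as in that proof, that the premise of (DI) already fails at this $t$ (otherwise convergence of the approximation would place $\vec{p}_{\vec{x}_0}(t)$ inside $B$), so (DI) holds vacuously. The converse (DI) $\to$ (CI) is the mirror image: given $d>0$ with $B$ on $[0,d)$, take $T=d$, verify the premise of (DI) pointwise using openness of $B$ together with Euler convergence, and then transport the resulting discrete membership in $\mathcal{U}_{-\epsilon}(Q)$ back to the exact state-and-trace pair.

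The one genuinely new ingredient, and the main obstacle, is that the trace now ends with a communication event whose transmitted value differs between the two versions, namely $e[\vec{p}_{\vec{x}_0}(d)/\vec{x}]$ against $e[\vec{f}_{\vec{x}_0,h}(T)/\vec{x}]$. The distance metric on traces (Sect.~\ref{sec:discretecomplete}) demands the two traces share the same number of generalized events with each corresponding pair within $\epsilon$; here event-for-event alignment is automatic, since both appended segments have the same shape (a wait block of identical duration $d=T$ and identical ready set $\{\cup_{i\in I}ch_i*\}$, followed by one output event on the same channel), so the only nontrivial comparison beyond the wait-block trajectory is the gap $|e[\vec{p}_{\vec{x}_0}(d)/\vec{x}] - e[\vec{f}_{\vec{x}_0,h}(T)/\vec{x}]|$ between the two communicated reals. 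Since $e$ is a continuous (indeed polynomial or analytic) function of the state, this gap tends to $0$ as $\vec{f}_{\vec{x}_0,h}(T)\to\vec{p}_{\vec{x}_0}(d)$; I would therefore shrink $h_0$ so that, by continuity of $e$ at $\vec{p}_{\vec{x}_0}(d)$, both the trajectory distance of the wait block and the value distance of the output event drop below $\epsilon_0$ simultaneously. With this strengthened choice the discrete pair lies within $\epsilon_0$ of the continuous one in the product metric on state and trace, and the neighbourhood bookkeeping $\epsilon+\epsilon_0<\epsilon_1$ carries through unchanged. Finally, the input-interrupt assertion is handled identically, indeed more easily, since its communicated value $v$ is universally quantified and independent of the trajectory, and the third ``exit without communication'' occurrence of a solution in the interrupt rule is literally the content of Theorem~\ref{thm:CP-DP}.
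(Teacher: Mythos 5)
Your proposal is correct and takes essentially the same approach as the paper: the same two-case split on whether $\forall t\in[0,T).\,B[\vec{p}_{\vec{x}_0}(t)/\vec{x}]$ holds (the $\neg B$ boundary condition and the third case of Theorem~\ref{thm:CP-DP} being absent), the same reuse of Lemma~\ref{lem:globalerror}, the same vacuous-truth argument when $B$ fails, and the same identification of the communicated value $e$ as the one new ingredient, with input interrupts and the no-communication exit handled exactly as the paper does. The only (immaterial) divergence is that the paper controls $|e[\vec{p}_{\vec{x}_0}(d)/\vec{x}]-e[\vec{f}_{\vec{x}_0,h}(T)/\vec{x}]|$ via uniform continuity of $e$ on a compact neighborhood of the path $\vec{p}_{\vec{x}_0}([0,T])$, whereas you use pointwise continuity at the endpoint $\vec{p}_{\vec{x}_0}(d)$, which suffices equally well since only the final event's value enters the trace metric.
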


\begin{proof}[Proof for Theorem~\ref{thm:CI-DI}]
 (CI) $\rightarrow$ (DI): Assume (CI) is true in state $s$ and trace $\textit{tr}$. Fix $T>0$ in (DI), then there are two cases, depending on whether $\forall t\in[0,T).\, B[\vec{p}_{\vec{x}_0}(t)/\vec{x}]$ holds or not.

If it holds, then from (CI), with $d=T$, we get

{ 
\[ Q(\vec{p}_{\vec{x}_0}(d), \textit{tr}^\chop \langle d,\vec{p}_{\vec{x}_0},\{\cup_{i\in I} ch_i*\}\rangle^\chop \langle ch!,s[\vec{x} \mapsto \vec{p}_{\vec{x}_0}(d)](e)\rangle)\]}

\noindent holds. From the assumption that $Q$ is open, we take $\epsilon_1>0$ such that $Q(s',\textit{tr}')$ for all
$(s',tr')$ in the set
{ 
\[ \mathcal{U}_{\epsilon_1}(\vec{p}_{\vec{x}_0}(d), \textit{tr}^\chop\langle d,\vec{p}_{\vec{x}_0},\{\cup_{i\in I} ch_i*\}\rangle^\chop \langle ch!,s[\vec{x} \mapsto \vec{p}_{\vec{x}_0}(d)](e)\rangle).\]}

\noindent Since $e$ is expressed in terms of arithmetic operations, it is a continuous function of its argument. Moreover, since the path $\vec{p}_{\vec{x}_0}$ on the interval $[0,T]$ is compact, we get that $e$ is uniformly continuous on a closed neighborhood of the path. Hence, we can take $\epsilon_0>0$ such that $\epsilon_0<\frac{\epsilon_1}{2}$, and $\|\vec{y}_0-\vec{y}_1\|_\infty<\epsilon_0$ implies $|e(\vec{y_0})-e(\vec{y_1})|<\frac{\epsilon_1}{2}$ on the $\epsilon_0$-neighborhood of the path $\vec{p}_{\vec{x}_0}([0,T])$.

For this choice of $\epsilon_0$, there exists $h_0>0$ such that for all $0<h<h_0$, the distance between $\vec{p}_{\vec{x}_0}(t)$ and $\vec{f}_{\vec{x}_0,h}(t)$ is bounded above by $\epsilon_0$ along the interval $t\in[0,T]$. With this choice of $\epsilon_0$ and $h_0$ in the conclusion of (DI), we get for any $\epsilon<\epsilon_0$ and $h<h_0$, the distance between
{ 
\[ (\vec{f}_{\vec{x}_0,h}(T), \textit{tr}^\chop\langle T,\vec{f}_{\vec{x}_0,h},\{\cup_{i\in I} ch_i*\}\rangle^\chop \langle ch!,s[\vec{x} \mapsto \vec{f}_{\vec{x}_0,h}(T)](e)\rangle) \]}
\noindent and
{ 
\[ (\vec{p}_{\vec{x}_0}(T), \textit{tr}^\chop\langle T,\vec{p}_{\vec{x}_0},\{\cup_{i\in I} ch_i*\}\rangle^\chop \langle ch!,s[\vec{x} \mapsto \vec{p}_{\vec{x}_0}(T)](e)\rangle) \]}

\noindent is at most $\frac{\epsilon_1}{2}$. This shows
{ 
\[ 
\begin{array}{ll}
     (\vec{f}_{\vec{x}_0,h}(T), \textit{tr}^\chop\langle T,\vec{f}_{\vec{x}_0,h},\{\cup_{i\in I} ch_i*\}\rangle^\chop \langle ch!,s[\vec{x} \mapsto \vec{f}_{\vec{x}_0,h}(T)](e)\rangle)
     \in \mathcal{U}_{-\epsilon}(Q) 
\end{array}\]
}
by the same argument as in the continuous evolution case.

Now suppose the condition $\forall t\in[0,T).\, B[\vec{p}_{\vec{x}_0}(t)/\vec{x}]$ does not hold. Intuitively, this corresponds to the case where $T$ is greater than the maximum possible length of execution of the interrupt command. Choose $t\in [0,T)$ such that $\neg B[\vec{p}_{\vec{x}_0}(t)/\vec{x}]$. Then the first assumption of (DI) fails for this value of $t$, as shown in the proof of the continuous evolution case. This means (DI) holds vacuously.

We have now considered both cases of $T>0$, and so have derived (DI) from (CI).

(DI) $\rightarrow$ (CI): Assume (DI) is true in state $s$ and trace $\textit{tr}$. We need to show that (CI) also holds. That is, given $d>0$ and a solution $\vec{p}_{\vec{x}_0}$ satisfying $\forall t\in[0,d).\, B[\vec{p}_{\vec{x}_0}(t)/\vec{x}]$, we need to prove 
\sm{Q(\vec{p}_{\vec{x}_0}(d), \textit{tr}^\chop \langle d,\vec{p}_{\vec{x}_0},\{\cup_{i\in I} ch_i*\}\rangle^\chop \langle ch!,s[\vec{x} \mapsto \vec{p}_{\vec{x}_0}(d)](e)\rangle)}.

From (DI), take $T=d$. The assumption in (DI) holds by the same argument as in the continuous evolution case. Therefore, for any sufficiently small $\epsilon$, there exists $h_0>0$ such that for any $0<h<h_0$, the condition

{ 
\[ \begin{array}{l}
(\vec{f}_{\vec{x}_0,h}(T), \textit{tr}^\chop\langle T,\vec{f}_{\vec{x}_0,h},\{\cup_{i\in I} ch_i*\}\rangle^\chop \langle ch!,s[\vec{x} \mapsto \vec{f}_{\vec{x}_0,h}(T)](e)\rangle) \in \mathcal{U}_{-\epsilon}(Q) 
\end{array} \]
}

\noindent holds. Take such $\epsilon_0$ and $\epsilon<\epsilon_0$. Next, take $\epsilon'$ sufficiently small so that $\|\vec{y_0}-\vec{y_1}\|_\infty < \epsilon'$ implies $|e(\vec{y_0})-e(\vec{y_1})|<\epsilon$ on the $\epsilon'$-neighborhood of the path $\vec{p}_x([0,T])$, and take $h_1$ such that the distance between $\vec{p}_{\vec{x}_0}(t)$ and $\vec{f}_{\vec{x}_0,h}(t)$ is bounded above by $\epsilon'$ along the interval $t\in[0,T]$. Then for $h<\min(h_0,h_1)$, we get that \sm{(\vec{p}_{\vec{x}_0}(d), \textit{tr}^\chop \langle d,\vec{p}_{\vec{x}_0},\{\cup_{i\in I} ch_i*\}\rangle^\chop \langle ch!,s[\vec{x} \mapsto \vec{p}_{\vec{x}_0}(d)](e)\rangle)} is within distance $\epsilon$ of 
$(\vec{f}_{\vec{x}_0,h}(T), \textit{tr}^\chop\langle T,\vec{f}_{\vec{x}_0,h}, \{\cup_{i\in I} ch_i*\}\rangle^\chop \langle ch!,$
$s[\vec{x} \mapsto \vec{f}_{\vec{x}_0,h}(T)](e) \rangle)$. Since the latter belongs to $\mathcal{U}_{-\epsilon}(Q)$, this implies the former satisfies $Q$, as desired. This finishes the proof of (CI) from (DI).   
\end{proof}

Now we present the discrete relative completeness theorem.
\begin{theorem}[Discrete Relative Completeness] \label{thm:dicreteRC}
	The proof system  presented in Sect.~\ref{sec:hoare}, plus \sm{\text{CP} \leftrightarrow \text{DP}} and \sm{\text{CI} \leftrightarrow \text{DI}}, are complete relative to the discrete fragment, without referring to solutions of differential equations. 	
\end{theorem}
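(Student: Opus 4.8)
The plan is to obtain Theorem~\ref{thm:dicreteRC} by combining the two discretisation equivalences, Theorem~\ref{thm:CP-DP} and Theorem~\ref{thm:CI-DI}, with the continuous relative completeness already established in Theorem~\ref{thm:continuous-complete}. That theorem shows that every valid triple \sm{\spec{P}{c}{Q}} is derivable using the weakest liberal precondition \sm{\wp} of Lemma~\ref{lem:wp-complete} in the sequential case and the strongest postcondition \sm{\sp} of Lemma~\ref{lem:sp-parallel-complete} in the parallel case, assuming an oracle for FOD. Scanning the \sm{\wp}-clauses of Fig.~\ref{fig:wp}, the sole appearances of a genuine ODE solution \sm{\vec{p}_{\vec{x}_0}} are the assertion CP inside \sm{\wp(\evo{x}{e}{B},Q)} and the assertions CI (with their symmetric input form) inside the interrupt clause. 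First I would rewrite every such occurrence by the added axioms \sm{\text{CP}\leftrightarrow\text{DP}} and \sm{\text{CI}\leftrightarrow\text{DI}}, turning each \sm{\vec{p}_{\vec{x}_0}} into the Euler approximation \sm{\vec{f}_{\vec{x}_0,h}}. The entailment \sm{P\rightarrow\wp(c,Q)} then lies wholly inside the discrete fragment and is discharged by the discrete oracle; composing with Lemma~\ref{lem:wp-complete} and Lemma~\ref{lem:sp-parallel-complete} produces a derivation of \sm{\spec{P}{c}{Q}} free of any reference to ODE solutions.

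When \sm{B} and the pre- and post-conditions are open this is immediate, because Theorem~\ref{thm:CP-DP} and Theorem~\ref{thm:CI-DI} are proved under precisely that hypothesis, so the rewriting is a chain of provable biconditionals. The real content of Theorem~\ref{thm:dicreteRC} is to discard the openness assumptions on \sm{B} and on the pre- and post-conditions. My strategy is to reduce the general situation to the open one. For the domain \sm{B} I would pass to its interior and recover the closure at the exit point through \sm{\closeb(B)}, as in the differential invariant rule. For a non-open postcondition \sm{Q}, the plan is to normalise \sm{Q} into a Boolean combination of atoms and to express the truth of \sm{Q} at the exit state purely through Euler approximations: a strict atom is handled by the interior operator \sm{\mathcal{U}_{-\epsilon}(\cdot)} exactly as in DP and DI, whereas a non-strict or equational atom \sm{g\unrhd 0} is rendered as the limit condition that, for every \sm{\epsilon>0} and all sufficiently small step sizes, the value of \sm{g} along \sm{\vec{f}_{\vec{x}_0,h}} stays within \sm{\epsilon} of the admissible region. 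Both forms are formulas of the discrete fragment, and by the uniform convergence of Euler approximations they are equivalent to the corresponding continuous atom on the exact trajectory.

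The main obstacle is exactly this boundary behaviour: when the exact exit state \sm{\vec{p}_{\vec{x}_0}(d)} lies on the boundary of a non-open \sm{Q}, every Euler approximation \sm{\vec{f}_{\vec{x}_0,h}(d)} may sit outside every interior \sm{\mathcal{U}_{-\epsilon}(Q)}, so the naive interior-based assertion of DP fails there. I expect to overcome this by leaning on two facts already in place. First, the exit time \sm{d} is unique, as remarked immediately after assertion CP, so only a single final state/trace must be analysed rather than a continuum. Second, Theorem~\ref{thm:globalerror} and Lemma~\ref{lem:globalerror}, which rest on the local Lipschitz condition, give uniform convergence of \sm{\vec{f}_{\vec{x}_0,h}} to \sm{\vec{p}_{\vec{x}_0}} on \sm{[0,d]}. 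Together these let me sandwich the true exit state/trace between \sm{\mathcal{U}_{-\epsilon}(Q)} and \sm{\mathcal{U}_{\epsilon}(Q)} uniformly in \sm{h} and then let \sm{\epsilon\to 0}, so that the interior approximations recover the interior of \sm{Q} while the limit-with-slack formulation captures the boundary. Finally, I would reassemble these pieces by induction on the structure of \sm{c}, keeping the parallel rule and Lemma~\ref{lem:sp-parallel-complete} intact, to conclude that the proof system augmented with \sm{\text{CP}\leftrightarrow\text{DP}} and \sm{\text{CI}\leftrightarrow\text{DI}} is complete relative to the discrete fragment without invoking solutions of differential equations.
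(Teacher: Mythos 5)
Your proposal is correct and follows essentially the same route as the paper: both derive the result from continuous relative completeness plus the equivalences $\text{CP}\leftrightarrow\text{DP}$ and $\text{CI}\leftrightarrow\text{DI}$, and both remove the openness restriction by representing the non-strict (closed) part of a predicate as the intersection $\forall \epsilon>0.\,\mathcal{U}_{\epsilon}(\cdot)$ of its open $\epsilon$-neighborhoods, so that each instance is open and the discretisation equivalences apply. The paper performs this fattening after normalising the predicate into the form $\bigwedge_i(\bigvee_j p_{ij}>0 \vee \bigvee_k q_{ik}\geq 0)$, i.e.\ at the level of each conjunct $D_i \vee \mathcal{U}_{\epsilon_i}(C_i)$, whereas you do it atom by atom; this is only a presentational difference, since convergence of the Euler approximations to the unique exit state makes the atom-wise limit condition equivalent to the conjunct-wise one.
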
 

\begin{proof}[Proof for Theorem~\ref{thm:dicreteRC}]
 The hybrid Hoare logic presented in Sect.\ref{sec:hoare} inherits the continuous completeness relative to the first-order theory of differential equations (i.e. FOD). All
that remains to be shown is that we can then prove all those
valid FOD formulas from valid formulas of discrete fragment plus the added formulas $CP \leftrightarrow DP$ and $CI \leftrightarrow DI$. The only question that remains to consider is that,  the restriction that we put when proving $CP \leftrightarrow DP$ and $CI \leftrightarrow DI$: the predicates occurring in the precondition of ODEs (i.e. $Q_1$ and $Q_2$ mentioned in the above proofs) are open, should be removed.

Without loss of generality, we assume all predicates are  first-order formulas of real arithmetic, \footnote{If a formula contains analytic terms, according to the theory of real analysis, these  analytic terms can be approximated by polynomials with respect to arbitrary precision \cite{Krantz2002}.} which can be reduced to the equivalent formula of the following form (denoted by $S$):
\[\bigwedge^m_{i=1}(\bigvee_{j=1}^{l} p_{i,j} > 0\vee \bigvee_{k=1}^{n} q_{i, k} \geq 0)\]

Clearly, $\vee_{j=1}^{l} p_{i,j} > 0$  corresponds to an open basic semi-algebraic set, say $O_i$,  and $\vee_{k=1}^{n} q_{i, k} \geq 0$  corresponds to a closed basic semi-algebraic set, say $C_i$. Being a closed set, $C_i$
is equivalent to $\forall \epsilon_i>0. \, \mathcal{U}_{\epsilon_i}(C_i)$
so $S$ can be reformulated by
\[\bigwedge_{i=1}^{m} \forall \epsilon_i>0.(D_i \vee  \mathcal{U}_{\epsilon_i}(C_i))\] 

For each $D_i \vee  \mathcal{U}_{\epsilon_i}(C_i)$, denoted by $\Phi_i$, it is an open set, or say open predicate. Using the two formulas $CP \leftrightarrow DP$ and $CI \leftrightarrow DI$, 
$\Phi_i$ can be equivalently represented as a discrete formula, denoted by $\mathcal{D}(\Phi_i)$. By $m$ uses of the two formulas, $S$ can be represented by an open predicate, and thus can be derived with respect to the discrete fragment of the logic.  
\end{proof} 

In the proof of Theorem~\ref{thm:dicreteRC}, we exploit the fact that any set can be represented as the intersection of an open set and a closed set, while a closed set can be represented as the intersection of a sequence of (possibly infinitely many) open sets.

Using the same method as for proving the continuous relative completeness of the full HHL, we can show that the proof system without rules of the continuous operations is relatively complete for discrete HCSP. The statement of the theorem is as follows.
\begin{theorem}[Relative Completeness of the Discrete Fragment]\label{thm:relativeRC-DF}
The counterpart of the proof system corresponding to the discrete HCSP, i.e., without rules $\textrm{SyncUnpairE3}$, $\textrm{SyncUnpairE4}$, $\textrm{SyncWaitE1-3}$, $\textrm{Cont}$ and $\textrm{Int}$ is relatively complete in Cook's sense. 
\end{theorem}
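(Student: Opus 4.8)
The plan is to adapt the continuous relative completeness argument of Section~\ref{sec:completeness} to the discrete fragment, exploiting the fact that once the rules \textrm{Cont}, \textrm{Int}, \textrm{SyncUnpairE3}, \textrm{SyncUnpairE4} and \textrm{SyncWaitE1-3} are removed, no continuous (wait) event can ever occur in a trace. Consequently the assertion logic collapses from \FODT\ over FOD to first-order arithmetic (in Cook's sense \cite{Cook78}, an oracle for the first-order theory of the reals, or equivalently Peano arithmetic under G\"odelisation) together with trace predicates over communication events only. Concretely, I would reprove the two pillars of Theorem~\ref{thm:continuous-complete} --- the weakest-liberal-precondition lemma (Lemma~\ref{lem:wp-complete}) for sequential discrete processes and the strongest-postcondition lemma (Lemma~\ref{lem:sp-parallel-complete}) for parallel discrete processes --- restricted to the discrete constructs, and then conclude by rule Conseq together with the oracle. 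Since the system uses \emph{liberal} preconditions, the target is partial correctness, consistent with Cook's setting.

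First I would establish derivability. For a sequential discrete process $c$ built from \pskip, assignment, input, output, sequential composition, conditional, internal choice and repetition, the same induction as in Lemma~\ref{lem:wp-complete} shows $\vdash \spec{\wp(c,Q)}{c}{Q}$, since every $\wp$-clause in Fig.~\ref{fig:wp} other than those for \textrm{Cont} and \textrm{Int} is retained verbatim; the repetition case again uses the fixed-point equation $\wp(c^*,Q)=\wp(c;c^*,Q)$ together with its consequences $\wp(c^*,Q)\rightarrow Q$ and $\wp(c^*,Q)\rightarrow \wp(c,\wp(c^*,Q))$. For parallel discrete processes, Lemma~\ref{lem:sp-parallel-complete} carries over through rule Par and the computation of $\sp(c_1\|_{cs}c_2,P_1\uplus P_2)$; the only trace-synchronization rules invoked here are \textrm{SyncPairE}, \textrm{SyncUnpairE1}, \textrm{SyncUnpairE2} and \textrm{SyncEmptyE1-3}, which are precisely those \emph{not} removed from the discrete fragment, so the Par reasoning remains sound and complete.

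The main obstacle is expressibility, i.e.\ showing that every $\wp(c,Q)$ and every $\sp(c,P)$, together with the synchronization relation $tr_1\|_{cs}tr_2\Downarrow \textit{tr}$, can be written in the discrete assertion language so that the oracle discharges the entailments $P\rightarrow\wp(c,Q)$ and $\sp(c,P)\rightarrow Q$. Here the discrete setting is in fact simpler than Section~\ref{sec:completeness}: since traces contain only communication events $\langle ch\triangleright,v\rangle$, the entire machinery for encoding state trajectories --- the triples $(d,\vec{x}_0,i)$ and the FOD characterization of $\vec{p}_{\vec{x}_0}$ --- is unnecessary, and each event is encoded directly as a single real (or natural) number via the $\mathbb{R}$-G\"odel encoding of Cook. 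The synchronization relation is encoded exactly as before through a $\textit{Step}_{cs}$ predicate, but now only the clauses for \textrm{SyncIO}, \textrm{NoSyncIO} and \textrm{SyncEmpty} are needed. Repetition is handled by the stratified family $\wp_k(c^*,Q)$ and the disjunction $\exists k>0.\,\wp_k(c^*,Q)$, expressible by quantifying over the (G\"odel-encoded) iteration count.

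Finally I would assemble these pieces exactly as in the proof of Theorem~\ref{thm:continuous-complete}: for a valid sequential triple $\models\spec{P}{c}{Q}$ we have $P\rightarrow\wp(c,Q)$, which is a valid formula of the discrete assertion logic and hence provable by the oracle, so Conseq yields $\vdash\spec{P}{c}{Q}$; dually for parallel triples via $\sp$. The point I expect to require the most care is verifying that the expressibility argument genuinely stays inside the Cook-admissible first-order theory and does not smuggle in any ODE reasoning; this is guaranteed precisely because the deleted rules are exactly those that could introduce wait events, so no continuous quantity ever enters the encoded predicates.
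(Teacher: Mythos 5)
Your overall strategy---reproving Lemma~\ref{lem:wp-complete} for the discrete sequential constructs and Lemma~\ref{lem:sp-parallel-complete} for parallel composition, then redoing the G\"odel-encoding expressibility argument and closing with rule Conseq and the oracle---is exactly the route the paper intends: the paper gives no separate proof, stating only that ``the same method'' as Theorem~\ref{thm:continuous-complete} applies, with FOD replaced by first-order real arithmetic (as its footnote specifies). However, your central simplifying premise is false: deleting the rules $\textrm{Cont}$, $\textrm{Int}$, $\textrm{SyncUnpairE3-4}$ and $\textrm{SyncWaitE1-3}$ does not make wait events disappear from discrete traces, because proof rules do not change the semantics. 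Wait events are generated not only by ODEs but by \emph{blocking communications}: the big-step rules $\textrm{OutB2}$, $\textrm{OutB3}$, $\textrm{InB2}$, $\textrm{InB3}$ emit events of the form \sm{\langle d, I_s, \{ch!\}\rangle} with constant trajectories, and the Output/Input axioms retained in the discrete fragment quantify over precisely these blocks. Consequently the assertion language does \emph{not} collapse to trace predicates over communication events only, and your claim that the $\textit{Step}_{cs}$ encoding needs only the $\textrm{SyncIO}$, $\textrm{NoSyncIO}$ and $\textrm{SyncEmpty}$ clauses fails: for a program such as \sm{ch!1 \|_{\emptyset} dh?x}, both components may wait with compatible ready sets, so the synchronized trace is built using $\textrm{SyncWait1}$/$\textrm{SyncWait2}$ (and $\textrm{SyncEmpty2}$ when one side terminates earlier); without those clauses in the encoded synchronization relation, the $\sp$-based entailments for such parallel triples cannot be discharged, and your expressibility argument breaks exactly where it is needed.

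The gap is repairable, and the repair explains why the theorem still holds in the paper's sense. Wait events in the discrete fragment carry only \emph{constant} trajectories, which encode trivially: the paper's own scheme reserves ODE index $0$ for \sm{\vec{\dot{x}}=0}, so such a block is just a tuple \sm{(d,\vec{x}_0,0)} together with a ready set, and everything remains inside first-order real arithmetic---no FOD reasoning is smuggled in. Likewise, the elimination rules $\textrm{SyncWaitE1-3}$ are dispensable in the fragment not because wait events are absent, but because in a relative-completeness argument all entailments between encoded predicates, including synchronization facts about wait blocks, are discharged by the oracle over the encoded $\textit{Step}_{cs}$ relation (which must retain its wait clauses). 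With this correction, the rest of your proposal---the fixed-point treatment of repetition via the stratified $\wp_k$ family, the computation of $\sp$ through rule Par, and the final assembly via Conseq---matches the paper's method and goes through.
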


\section{Implementation and Case Studies}\label{sec:examples}

\paragraph{Implementation} The HHL logic is implemented in Isabelle/HOL. 
The whole implementation contains the syntax,
semantics and inference rules of HHL.
Especially, it formalizes the inference rules and allows to
perform proofs in HHL within Isabelle. 
In the formalization, we used shallow embedding to represent the ODEs 
of the logic and employed the corresponding library on ODE in Isabelle for defining semantics
and inference rules of ODE. Both kinds of inference rules of continuous evolution based on explicit ODE solutions and differential invariants are proved to be valid.  
The soundness of HHL is proved in  Isabelle, to make sure that the logic is correct and
thus it can be applied for  verification of hybrid systems.

Next, to show the use of HHL,  we apply it to verify two case studies: lunar lander control system, and a scheduler that controls the execution of parallel tasks. The former one involves ODE dynamics for modelling continuous plants, and is proved based on a differential invariant; while the latter one involves communications, interrupts, and complex control behavior with many if-else branches, and is proved with existence of many different execution cases and their parallel composition. 

Before the demonstration, we first introduce the following abbreviations used in our proofs:
\[ 
\begin{array}{rll}
	\textsf{in}(\vec{x}_0,ch,v,\rdy) &\triangleq& \tracev = \langle ch?,v \rangle \vee 
	 \exists d>0.\, \tracev = \langle d,I_{\vec{x}_0},\rdy \rangle \\
	 && ^\chop \langle ch?,v\rangle\vee   \tracev = \langle\infty,I_{\vec{x}_0},\rdy\rangle \\

	\textsf{out}(\vec{x}_0,ch,v,\rdy) &\triangleq& \tracev = \langle ch!,v \rangle \vee 
	 \exists d>0.\, \tracev = \langle d,I_{\vec{x}_0},\rdy \rangle \\
	 &&^\chop \langle ch!,v\rangle\vee \tracev = \langle\infty,I_{\vec{x}_0},\rdy\rangle \\
  \textsf{trout}(d,\vec{p},\textit{ch},v,\textit{rdy}) 
&\triangleq &\tracev = \langle d,\vec{p},\rdy\rangle^{\chop} \langle ch!,v \rangle\\
\textsf{IO}(ch,v) &\triangleq& \tracev = \langle ch,v\rangle\\
 \textsf{traj}(d,\vec{p}_{\vec{x}_0},\rdy) &\triangleq& \tracev = \langle d,\vec{p}_{\vec{x}_0},\rdy\rangle\\
 \textsf{emp} & \triangleq & \tracev = \epsilon \\
  P \joinop Q &\triangleq& \exists tr_1,\,tr_2.\, P[tr_1/\tracev]\wedge Q[tr_2/\tracev]\wedge {tr_1}^\chop tr_2 = \tracev
\end{array}
\]
Without losing information, we abbreviate \sm{\textsf{out}(\vec{x}_0,ch,v,\{ch!\})} as \sm{\textsf{out}(\vec{x}_0,ch,v)} and  \sm{\textsf{in}(\vec{x}_0,ch,v,\{ch?\})} as \sm{\textsf{in}(\vec{x}_0,ch,v)}.
\ommit{
\paragraph{Case 1: cruise control system}
 The first case is a simplified automatic cruise control system (CCS)~\cite{DBLP:journals/tcs/XuWZJTZ22}, which controls a  vehicle to run safely by avoiding collision with the obstacle.
In this example we consider an abstract model of the
CCS with two main components: a controller (Control) and a physical plant (Plant). The following process Plant models the motion of the vehicle. It repeatedly evolves
according to the ODE,  that can be interrupted by sending the velocity of the vehicle to the controller, followed by sending the position and then receiving the
updated acceleration from the controller. 
{  
\[
\textit{Plant} \define (\langle\dot{x}=v, \dot{v}=a \&\mathsf{true}\rangle \unrhd \talloblong[{\textit{chp}!v}\rightarrow {\textit{chp}!x; \textit{chc}?a}])^* 
\] }
The process Control computes a proper acceleration using function $f$ based
on received velocity and position for each fixed period $T$ and sends it to the vehicle to follow in next period: 
{ 
\[
\textit{Control} \define (\pwait \ T; \textit{chp}?v; \textit{chp}?x; a:=f(x,v);\textit{chc}!a)^*
\] }

The computation of the control acceleration is based on the concept of the Maximum Protection Curve (MPC). First, an upper limit of velocity $v_\mathit{lim}$ is computed according to current position as follows:
\[
v_\mathit{lim}(x)=
\begin{cases}
	v_\mathit{max}, & \text{if } x_\mathit{obs}-x\ge \frac{v_\mathit{max}^2}{-2a_\mathit{min}}\\
	\sqrt{-2a_\mathit{min}\cdot(x_\mathit{obs}-x)}, & \text{if } 0< x_\mathit{obs}-x<\frac{v_\mathit{max}^2}{-2a_\mathit{min}} \\
	0, & \text{otherwise}
\end{cases}
\]
where $v_\mathit{max}$, $a_\mathit{min}$ and $x_\mathit{obs}$ are constants respectively representing the maximum speed of the vehicle, the braking deceleration of the vehicle and the position of the obstacle.
Function $f$ is then defined based on the value of $v_\mathit{lim}$:
\[ 
f(x,v)=
\begin{cases}
a_\mathit{p}& \text{if } v+a_\mathit{p}\cdot T\le v_\mathit{lim}(x+v\cdot T+\frac{1}{2}\cdot a_\mathit{p}\cdot T^2)\\
0& \text{else if } v\le v_\mathit{lim}(x+v\cdot T)\\
a_\mathit{min}& \text{otherwise}
\end{cases}
\]
where $a_\mathit{p}$ is chosen as the acceleration  to satisfy the condition on the right  and we have $a_\mathit{p}>0$,  $a_\mathit{min}<0$, $v_\mathit{max}>0$ and $T>0$ in our settings. In this example, we want to verify the safety requirement that the vehicle
will never reach the obstacle and the velocity will never exceed the speed limit i.e. $\mathsf{Safe}(x,v):=x\leq x_\mathit{obs}\wedge v\leq v_\mathit{lim}(x)$.
We can prove the following two specifications:
{ 
\[
\begin{array}{c}
       \spec{v = v_0 \wedge x = x_0 \wedge a = a_0 \wedge \textsf{emp}} {\textit{Plant}} {\exists \ ps.\  \textsf{plant\_state}(v_0,x_0,a_0,ps) \wedge \textsf{plant\_block}(v_0,x_0,a_0,ps)}\\
      \spec{v = v_0 \wedge x = x_0 \wedge a = a_0 \wedge \mathsf{emp}} 
         {\textit{Control}} {\exists \ cs.\ \textsf{control\_block}(v_0,x_0,a_0,ps)\wedge \mathsf{control\_block}(v_0,x_0,a_0,cs)}
\end{array}
\]}
where $\textit{plant\_state}$ and $\textit{plant\_block}$ reflecting the final state and execution trace of Plant,  are defined as follows:
{ 
\begin{align*} 
\textsf{plant\_state}(v_0,x_0,a_0,[]) = & (v=v_0\wedge x=x_0\wedge a=a_0)\\
\textsf{plant\_state}(v_0,x_0,a_0,(d_1,a_1)\cdot ps') =  &\textsf{plant\_state}(\vec{p}_{[v_0,x_0,a_0]}(d_1)[v],\vec{p}_{[v_0,x_0,a_0]}(d_1)[x],a_1,ps')\\ 
\textsf{plant\_block}(v_0,x_0,a_0,[]) = & \textsf{emp}\\
\textsf{plant\_block}(v_0,x_0,a_0,(d_1,a_1)\cdot ps') =  & 
\textsf{trout}(d_1,\vec{p}_{[v_0,x_0,a_0]},\textit{chp},\vec{p}_{[s_0,v_0,a_0]}(d_1)[v],\{\textit{chp!}\})\joinop \\
& \hspace*{-8mm}  \textsf{out}(\vec{p}_{[v_0,x_0,a_0]}(d_1),\textit{chp},\vec{p}_{[v_0,x_0,a_0]}(d_1)[x]) \joinop\\
& \hspace*{-8mm}
\textsf{in}(\vec{p}_{[v_0,x_0,a_0]}(d_1),\textit{chc}, a_1) \joinop  \\
&\hspace*{-8mm}  \textsf{plant\_block}(\vec{p}_{[v_0,x_0,a_0]}(d_1)[v],\vec{p}_{[v_0,x_0,a_0]}(d_1)[x],a_1,ps')
\end{align*} }
where $\vec{p}_{[v_0,x_0,a_0]}(t)=[v_0+a_0\cdot t, x_0+v_0\cdot t+\frac{1}{2}\cdot a_0\cdot t^2,a_0]$ is the solution of the ODE starting from the initial vector of $[v_0,x_0,a_0]$. When it applies to a vector, it returns the value of the vector at the corresponding time. 
The $\textit{control\_state}$ and $\textit{control\_block}$, are defined as follows:
{ 
\begin{align*} 
\textsf{control\_state}(v_0,x_0,a_0,[]) = & (v=v_0\wedge x=x_0\wedge a=a_0)\\
\textsf{control\_state}(v_0,x_0,a_0,(v_1,x_1)\cdot cs') =  & \textsf{control\_state}(v_1,x_1,f(x_1,v_1),cs')\\
\textsf{control\_block}(v_0,x_0,a_0,[]) = & \textsf{emp}\\
\textsf{control\_block}(v_0,x_0,a_0,(v_1,x_1)\cdot cs') =  & 
\textsf{traj}(T,I_{[v_0,x_0,a_0]},\emptyset)\joinop \textsf{in}([v_0,x_0,a_0],\textit{chp},v_1) \joinop
\\
&\hspace*{-8mm} 
\textsf{in}([v_1,x_0,a_0],\textit{chp}, x_1) \joinop
\textsf{out}([v_1,x_1,f(x_1,v_1)],\textit{chc}, f(x_1,v_1)) \joinop  \\
&\hspace*{-8mm} 
\textsf{control\_block}(v_1,x_1,f(x_1,v_1),cs')
\end{align*} }
Combining two parallel processes under the assumption $\mathsf{Safe}(x_0,v_0)\wedge a_0=f(x_0,v_0)$, we can obtain:
{ 
\[
\begin{split}
        & \{ (v = v_0 \wedge x = x_0 \wedge a = a_0 ) \uplus
             (v = v_0' \wedge x = x_0' \wedge a = a_0')\} \\
        & \quad\mathit{Plant \parallel Control} \\
        & \{ \mathsf{Safe}(x,v) \wedge a=f(x,v) \wedge\exists n.\ \mathsf{tot\_block}(s_0,v_0,a_0,s_0',v_0',a_0',n) \}
    \end{split}
\]}
where $\textit{tot\_block}$ is defined as follows:
{ 
\begin{align*} 
\textsf{tot\_block}(v_0,x_0,a_0,v_0',x_0',a_0',0) = & \textsf{emp}\\
\textsf{tot\_block}(v_0,x_0,a_0,v_0',x_0',a_0',n+1) =  & 
\textsf{traj}(T, \vec{p}_{[s_0,v_0,a_0]}\uplus I_{[v_0',x_0',a_0']},\{chp!\}) \joinop 
\textsf{IO}(\textit{chp},\vec{p}_{[s_0,v_0,a_0]}(T)[v]) \joinop\\
&\hspace*{-8mm}
\textsf{IO}(\textit{chc},\vec{p}_{[s_0,v_0,a_0]}(T)[x]) \joinop
\textsf{IO}(\textit{chc},f(\vec{p}_{[s_0,v_0,a_0]}(T)[x],\vec{p}_{[s_0,v_0,a_0]}(T)[v])) \joinop \\
&\hspace*{-14mm}
\textsf{tot\_block}(\vec{p}_{[s_0,v_0,a_0]}(T)[v],\vec{p}_{[s_0,v_0,a_0]}(T)[x],f(\vec{p}_{[s_0,v_0,a_0]}(T)[v],\vec{p}_{[s_0,v_0,a_0]}(T)[x]),\\
&\hspace*{-4mm}
\vec{p}_{[s_0,v_0,a_0]}(T)[v],\vec{p}_{[s_0,v_0,a_0]}(T)[x],
f(\vec{p}_{[s_0,v_0,a_0]}(T)[v],\vec{p}_{[s_0,v_0,a_0]}(T)[x]),n)
\end{align*} }
The postcondition indicates that this system satisfies the safety requirement after each iteration. }
\paragraph{Case 1: Lunar lander}
First,  we demonstrate how the HHL prover is used to verify a simplified lunar lander example adapted from 
\cite{Zhao14}.  
In this example,  a lander descends to the surface under the descent guidance control with the goal to maintain a stable downward velocity of the lander. The continuous evolution is defined by:
$ \dot{v} =  \frac{F_{c}}{m}-g_{M},
     \dot{m} =  -\frac{F_{c}}{I}
 $, where $v$ represents the velocity of descending, $m$ is the mass of the lander, $F_{c}$ is the thrust imposed on the lander; $g_{M}$ and $I$ are constants of gravity acceleration and mass loss rate. 
The thrust is updated according to:
{  \[
F_{c}' = m\cdot (g_{M}-c_{1}(v-v_{s})-c_{2}(\frac{F_c}{m}-g_{M}))
\] }
where $v_{s}$ is the target velocity we want to maintain, i.e. -1.5m/s here. Since the origin ODE is non-polynomial, we replace $\frac{F_c}{m}$ by a new variable $w$. 
After the substitution,  
the whole process is modelled as a parallel composition of $plant$ and $ctrl$, where
{  
\[
\begin{array}{ll}
\textit{plant} \triangleq &(\langle\dot{v}=w-3.732, \dot{w}=\frac{w^{2}}{2500},\dot{t}=1 \&\mathsf{true} \rangle \unrhd \talloblong\\
&\quad({\textit{chv}!v}\rightarrow {\textit{chw}!w; chc?w;t:=0}))^* 
\end{array}
\] }
\noindent The process $plant$ models the continuous behavior of the lander, for which it will be 
 interrupted by a communication, sends the   value of $v$ and $w$ to the controller in sequence, and then receives a new $w$ from the controller and resets time $t$. The control part is given by: 
{ 
\[
\textit{ctrl} \triangleq (\pwait \ T; \textit{chv}?v; \textit{chw}?w; \textit{chc}!(f(v,w)))^*
\] }
\noindent The controller updates the thrust every period $T=0.128s$. After receiving the current velocity and thrust from plant, it updates the thrust according to  $f(v,w) \triangleq -(w - 3.732) * 0.01 + 3.732 - (v + 1.5) * 0.6$. 

We define the trace invariant of \textsf{plant} given the initial value of $v$, $w$, and a list of $w$ inputs, and then prove the following Hoare triple:
{ 
\begin{align*} 
\textsf{plant\_block}(v_0,w_0,[]) = & \textsf{emp} \\
\textsf{plant\_block}(v_0,w_0,(d_1,w_1)\cdot ws') =  & \exists\ \vec{p}_{[v_0,w_0,0],\vec{e}}.\,
\textsf{trout}(d_1,\vec{p}_{[v_0,w_0,0],\vec{e}},\textit{chv},\vec{p}_{[v_0,w_0,0],\vec{e}}(d_1)[v],\{\textit{chv!}\})\joinop \\
& \hspace*{-8mm}  \textsf{out}(\vec{p}_{[v_0,w_0,0],\vec{e}}(d_1),\textit{chw},\vec{p}_{[v_0,w_0,0],\vec{e}}(d_1)[w]) \joinop
\textsf{in}(\vec{p}_{[v_0,w_0,0],\vec{e}}(d_1),\textit{chc}, w_1) \joinop  \\
& \hspace*{-8mm}\textsf{plant\_block}(\vec{p}_{[v_0,w_0,0],\vec{e}}(d_1)[v],w_1,ws') \\
& \hspace*{-2.5cm} 	\{  v = v_0 \wedge w = w_0 \wedge t = 0 \wedge \textsf{emp} \}  
  \textit{plant}  
	\{   \exists ws.\, \textsf{plant\_block}(v_0,w_0,ws) \}  
\end{align*} }
where $\vec{e}$ represents the dynamics in $\mathsf{plant}$ and $\vec{p}_{[v_0,w_0,0],\vec{e}}$ is the solution of this $ode$ starting with the initial vector $[v_0,w_0,0]$.

Similar for \textsf{ctrl}, it follows 
{   
\begin{align*} 
\textsf{ctrl\_block}(v_0,w_0,[]) = &\textsf{emp} \\
\textsf{ctrl\_block}(v_0,w_0,(v_1,w_1)\cdot \textit{vws}') =  & \textsf{traj}(T,I_{[v_0,w_0]},\emptyset)\joinop \textsf{in}([v_0,w_0],\textit{chv},v_1) \joinop\\
&\hspace*{-8mm}
\textsf{in}([v_1,w_0],\textit{chw}, w_1) \joinop
\textsf{out}([v_1,w_1],\textit{chc}, f(v_1,w_1)) \joinop  \textsf{ctrl\_block}(v_1,w_1,\textit{vws}') \\ 
&\hspace*{-2.5cm}	\{   v = v_0 \wedge w = w_0 \wedge \textsf{emp} \}  \textit{ctrl}  
	\{   \exists vws.\, \textsf{ctrl\_block}(v_0,w_0,\textit{vws}) \}
\end{align*}
}
\noindent Synchronizing the two traces, it derives 
a system trace invariant of the whole system as 
{ 
\begin{align*}
\textsf{system}(v_0,w_0,v_0',w_0',0) = &\textsf{emp} \\
\textsf{system}(v_0,w_0,v_0',w_0',n+1) = 
&\exists \vec{p}_{[v_0,w_0,0],\vec{e}} . \, \textsf{traj}(T, \vec{p}_{[v_0,w_0,0],\vec{e}}\uplus I_{[v_0',w_0']},\{chv!\}) \joinop \\
&\hspace*{-8mm}
\textsf{IO}(\textit{chv},\vec{p}_{[v_0,w_0,0],\vec{e}}(T)[v]) \joinop
\textsf{IO}(\textit{chw},\vec{p}_{[v_0,w_0,0],\vec{e}}(T)[w]) \joinop
\\
&\hspace*{-8mm}
\textsf{IO}(\textit{chc},f(\vec{p}_{[v_0,w_0,0],\vec{e}}(T)[v],\vec{p}_{[v_0,w_0,0],\vec{e}}(T)[w])) \joinop \\
&\hspace*{-8mm}
\textsf{system}(\vec{p}_{[v_0,w_0,0],\vec{e}}(T)[v],f(\vec{p}_{[v_0,w_0,0],\vec{e}}(T)[v],\vec{p}_{[v_0,w_0,0],\vec{e}}(T)[w]),\\
&\hspace*{-8mm}
\vec{p}_{[v_0,w_0,0],\vec{e}}(T)[v],\vec{p}_{[v_0,w_0,0],\vec{e}}(T)[w],n)
\end{align*}
}
 However, the solution of the ODE, i.e. $\vec{p}_{[v_0,w_0,0],\vec{e}}$ occurring in \textsf{system}, does not have an explicit definition. Therefore, to prove the final goal of this example, i.e.,  the velocity of the lander keeps within a safe range $(-1.45,$ $-1.55)$,  we use the differential invariant rule (Dbarrier) proposed in Section~\ref{subsec:dinvariant} instead. Assume $Inv(t, v, w)\triangleq q(t,v,w)\le 0$ is an polynomial invariant on variables $t, v, w$, then we have:
{  \[
\begin{array}{ll}
 &
 0\le t \le T \rightarrow \textit{q} = 0 \rightarrow \mathcal{L}_{\vec{e}}(\textit{q})<0\\
  \wedge & \textit{q}(0,-1.5,5670/1519)\le0 \\
 \wedge & 
 \textit{q}(T,v,w)\le 0 \rightarrow \textit{q}(0,v,f(v,w))\le 0 \\
  \wedge &
 (0\le t \le T \wedge\textit{q}\le 0) \rightarrow (v+1.45)*(v+1.55)<0 
\end{array}
\] }
where the formulas of four lines represent: 
the hypothesis of  Rule (Dbarrier) to be the invariant of the ODE; 
the invariant holds for the initial values;
the invariant is preserved by the discrete update on the time and acceleration of each round; 
the invariant is strong enough to imply the final goal. These four constraints constitute a sufficient condition for $Inv(t, v, w)$ to be a global invariant of the whole system~\cite{PlatzerClarke08}.
With these four constraints, 
we invoke the ODE invariant generation tool~\cite{Zhao14} and obtain a differential invariant  $Inv(t, v, w)$.  
We finally prove the following specification for the whole  system:
{  \begin{align*}
	\{ & (v = -1.5 \wedge w = 5670/1519 \wedge t = 0) \uplus ( v = v1 \wedge w = w1)\wedge \textsf{emp} \} \\
	& \quad \quad  \mathit{plant}\,\|_{\{\textit{chv,chw,chc}\}}\,\mathit{ctrl} \\
 &\{  \exists n.\, \textsf{system}(-1.5,5670/1519,v1,w1,n) \}
\end{align*}}
where the invariant $\textsf{system}$ is  strengthened with $Inv(t, v, w)$, thus it  obviously   guarantees the goal   $-1.55\le v\le -1.45$.

Compared to the proof in \cite{Zhao14} based on Duration Calculus, the above proof for  
the parallel composition of  $\mathit{plant}$ and $\mathit{ctrl}$ 
 is compositional, and furthermore more rigorous, as the derived differential invariant rules are all formalised and proved to be valid in the Isabelle implementation of current HHL prover, while proved manually for~\cite{Zhao14}.  

\paragraph{Case 2: Scheduler with two tasks}
Next, we apply HHL to verify a scheduler process controlling two task modules executed in parallel. 
We focus on the interactions of the modules with the scheduler, and prove the correctness of the scheduler,  i.e.,  at any time, the running module is always with the highest priority among the modules that are in  ready state. 

We consider the case where two \textit{module} processes are in parallel with a \textit{scheduler}. In \textit{module}, \textit{state} stands for the state of a module ranging over $\{\textit{WAIT}, \textit{\textit{\textit{READY}}}, \textit{RUN}\}$, $\textit{prior}$ for the module priority,  $\textit{period}$ for the period, and  
$\textit{cost}$ for the module execution time, $T$ for the system 
time, and $\textit{ent}$ indicates  
whether the module starts to  execute
in this period.
{  \[  
\begin{array}{ll}
     &\mathit{module} := (\\
     &\textrm{if}\ \textit{state}=\textit{WAIT} \ \textrm{then}\   \langle \dot{T}=1\& T< \textit{period}\rangle;T:=0;
     \textit{ent}:=0;\textit{state}:= \textit{\textit{READY}}\\
     &\textrm{else if}\ \textit{state}=\textit{\textit{READY}} \  \textrm{then} \  \textit{\textit{req\_ch}}!\textit{prior};\\
     & \qquad\langle \dot{T}=1\& T< \textit{period}\rangle \unrhd \talloblong(\textit{\textit{run\_ch}}?x \rightarrow \textit{state} := \textit{\textit{RUN}});\\
     & \qquad \textrm{if}\ \textit{state} \neq \textit{\textit{READY}}\ \textrm{then}\ \pskip\ \\
     & \qquad \textrm{else} (\textit{\textit{run\_ch}}?x;\textit{state} := \textit{RUN}\sqcup \textit{\textit{exit\_ch}}!0;\textit{state}:=\textit{WAIT})\\
     &\textrm{else}\ 
     (\IFE{\textit{ent}=0}{C:=0;\textit{ent}:=1}{\pskip});\\
     & \qquad\langle \dot{T}=1,\dot{C}=1\& T< \textit{period}\land C< \textit{cost} \rangle \unrhd \talloblong 
     (\textit{\textit{pr\_ch}}?x \rightarrow \textit{state} := \textit{\textit{READY}});\\
     & \qquad \textrm{if}\ \textit{state} \neq \textit{RUN}\ \textrm{then}\ \pskip\ \\
     & \qquad \textrm{else}(\textit{\textit{pr\_ch}}?x;\textit{state} := \textit{WAIT}\sqcup \textit{\textit{free\_ch}}!0;\textit{state}:=\textit{WAIT}))^*
\end{array}
\] }
\noindent In each round, depending on the value of $\textit{state}$, the following execution occurs: if it is  $\textit{WAIT}$, the module waits for $\textit{period}$ time, then resets $T$ and $\textit{ent}$, turns to \textit{READY}; if it is  $\textit{\textit{READY}}$, the module first sends a  request to the scheduler with its priority  $\textit{\textit{req\_ch}}!\textit{prior}$, then waits for a running command $\textit{\textit{run\_ch}}?x$ until the end of this period. Once a running command is received,  performs skip and turns to  $\textit{RUN}$; otherwise, checks again if the running command is enabled and turns to $\textit{RUN}$ if it can occur immediately, or sends an exit command and turns to $\textit{WAIT}$; 
if it is  $\textit{RUN}$, $C$ will be reset if this module is not executed. Then the module starts running and can be preempted at any time by $\textit{\textit{pr\_ch}}$ and turns to $\textit{READY}$
until $T$ reaches $\textit{period}$ or $C$ reaches $\textit{cost}$.
Once it turns to $\textit{READY}$, it skips; otherwise, checks again if the preemption command $\textit{\textit{pr\_ch}}$ is enabled and turns to $\textit{WAIT}$ if it can occur immediately, or sends a free command and turns to $\textit{WAIT}$. 
\oomit{
For a module, we proved that it will produce traces:
{  \[  
\begin{array}{ll}
&\textsf{m}(0,\textit{state},\textit{ent},T,C) = \textsf{emp} \\
&\textsf{m}(k+1,\textit{WAIT},\textit{ent},T,C) =  \textsf{traj}(\textit{period}-T,p_{T})\joinop \\
& \qquad\textsf{m}(k,\textit{\textit{READY}},0,0,C)\\
&\textsf{m}(k+1,\textit{\textit{READY}},\textit{ent},T,C) =  \\
& \qquad\textsf{out}(req\_ch,\textit{prior})\joinop\\
& \qquad\exists v\ (d<\textit{period}-T).\textsf{trin}(d,p_{T},run\_ch,v)\joinop \\
& \qquad \textsf{m}(k,\textit{RUN},\textit{ent},T+d,C) \bigvee \\
& \qquad \textsf{out}(req\_ch,\textit{prior})\joinop \textsf{tr}(\textit{period}-T,p_{T})\joinop \\
& \qquad \exists v.\textsf{in}(run\_ch,v)\joinop\textsf{m}(k,\textit{RUN},\textit{ent},\textit{period},C) \bigvee \\
& \qquad \textsf{out}(req\_ch,\textit{prior})\joinop \textsf{tr}(\textit{period}-T,p_{T})\joinop \\
& \qquad \textsf{out}(exit\_ch,0)\joinop\textsf{m}(k,\textit{RUN},\textit{ent},\textit{period},C)\\
&\textsf{m}(k+1,\textit{RUN},0,T,C) =  \textsf{m}(k+1,\textit{RUN},1,T,0)\\
&\textsf{m}(k+1,\textit{RUN},1,T,C) =\\
& \qquad\exists v\ (d<\min(\textit{period}-T,\textit{cost}-C )).\textsf{trin}(d,p_{T,C},pr\_ch,v)\joinop\\
& \qquad \textsf{m}(k,\textit{\textit{READY}},1,T+d,C+d) \bigvee \\
& \qquad \textsf{tr}(\min(\textit{period}-T,\textit{cost}-C),p_{T})\joinop\exists v.\textsf{in}(pr\_ch,v)\joinop\\
&\qquad \textsf{m}(k,\textit{WAIT},1,T+min,C+min(\textit{period}-T,\textit{cost}-C))\bigvee \\
&\qquad \textsf{tr}(\min(\textit{period}-T,\textit{cost}-C),p_{T})\joinop\textsf{out}(free\_ch,0)\joinop\\
&\qquad \textsf{m}(k,\textit{WAIT},1,T+min,C+\min(\textit{period}-T,\textit{cost}-C))
\end{array} \] }
}

Below defines process $\mathit{scheduler}$ for the scheduler: 
{  
\[ \hspace*{-4mm}
\begin{array}{ll}
& \mathit{scheduler} := ((\textit{req\_ch}_i?p; \\
& \IFE{\textit{rp}\ge p}{L:=L\sharp(i,p)}{} (\IFE{\textit{ri}\neq-1}{pr\_ch_{ri}!0}{\pskip};\\
& \qquad \qquad \textit{run\_ch}_{i}!0;\textit{ri}:=i;\textit{rp}:=p)) \sqcup (\textit{free\_ch}_i?g;\\
& \qquad \textrm{if}\ \textrm{len}(L)>0\  \textrm{then}\ (\textit{ri},\textit{rp}):=\textrm{max}(L);L:=\textrm{del}(L,\textit{ri});\textit{run\_ch}_{\textit{ri}}!0\\
& \qquad \qquad \quad \quad  \quad  \textrm{else}\ (\textit{ri},\textit{rp}):=(-1,-1))\sqcup
 (\textit{exit\_ch}_i?g;L:=\textrm{del}(L,i)))^*
\end{array} \] }
The scheduler uses a list variable $L$ to record the list of modules waiting for a running command, and  $\textit{ri}$ and $\textit{rp}$ to record the index and priority of the running module (we use $-1$ to represent that no module is running).
 If the scheduler receives a priority $p$  from module $i$, it will compare $p$ with the priority ($\textit{rp}$) of the running module ($\textit{ri}$). If $\textit{rp}$ is greater than or equal to $p$, then the pair $(i,p)$ will be added into the list $L$, otherwise scheduler will send a preempt command to $\textit{ri}$ and a running command to $i$; if the scheduler receives a free command, it will remove the pair with the maximum priority from $L$ and set it as new $\textit{ri}$ and $\textit{rp}$, sending the running command; if the scheduler receives an exit command from module $i$, it will remove the pair with index $i$ from  $L$.

\oomit{
For the scheduler, we proved that it will produce traces:
{   \[  
\begin{array}{ll}
&\textsf{s}(0,L,ri,rp) = \textsf{emp} \\
&\textsf{s}(k+1,L,ri,rp) =\\
&\quad \exists p. \textsf{in}(req\_ch_i,p,(\emptyset,\{req\_ch_{i\in I},exit\_ch_{i\in I},free\_ch_{i\in I}\}))\\
&\quad\joinop(\textrm{if}\ rp\ge p\  \textrm{then}\ \textsf{s}(k,L\sharp(i,p),ri,rp) \ \textrm{else}\\
&\qquad \quad\textrm{if}\ rn=-1\ \textrm{then}\ \textsf{out}(run\_ch_{i},0)\joinop\textsf{s}(k,L,i,p)\ \textrm{else}\\
&\qquad \quad \quad\textsf{out}(pr\_ch_{ri},0)\joinop\textsf{out}(run\_ch_{i},0)\joinop\textsf{s}(k,L,i,p))\bigvee\\
&\quad \exists g. \textsf{in}(free\_ch_i,g,(\emptyset,\{req\_ch_{i\in I},exit\_ch_{i\in I},free\_ch_{i\in I}\}))\\
&\quad\joinop(\textrm{if}\ len(L)>0\  \textrm{then}\ \textsf{out}(run\_ch_{max(L)_i},0)\joinop\\
&\qquad\qquad\qquad\qquad\quad\textsf{s}(k,L,max(L)_i,max(L)_p) \\
&\qquad\qquad\qquad\qquad\textrm{else}\ \textsf{s}(k,[],-1,-1))\\
&\quad \exists g. \textsf{in}(free\_ch_i,g,(\emptyset,\{req\_ch_{i\in I},exit\_ch_{i\in I},free\_ch_{i\in I}\}))\\
&\quad\joinop \textsf{s}(k,del(L,i),ri,rp))
\end{array} \] }
}

We prove trace assertions of each module and scheduler independently. Then synchronize these assertions and 
obtain an assertion $f$ parameterized by states of the modules, that specifies the behavior of the whole process. Part of the synchronization  is shown below:
{  \[\hspace*{-5mm} 
\begin{array}{ll}
&f(\textit{\textit{READY}}_1,\textit{RUN}_2) \\
&\rightarrow \textsf{IO}(\textit{req\_ch}_1,p1)\joinop \textsf{IO}(\textit{pr\_ch}_2,0) 
\joinop\textsf{IO}(\textit{run\_ch}_1,0)\joinop 
f(\textit{RUN}_1,\textit{\textit{READY}}_2)\\
&\rightarrow (\textsf{IO}(\textit{req\_ch}_1,p1)\joinop 
\textsf{IO}(\textit{pr\_ch}_2,0)
\joinop\textsf{IO}(\textit{run\_ch}_1,0)\joinop 
f(\textit{RUN}_1,\textit{WAIT}_2))\\
&\qquad \vee (\textsf{IO}(\textit{free\_ch}_2,0)\joinop 
f(\textit{\textit{READY}}_1,\textit{WAIT}_2))\\
& \qquad\textrm{if}\ \min(\textit{period}_2-T_2,\textit{cost}_2-\textit{ent}_2*C_2)=0\\
&f(\textit{RUN}_1,\textit{\textit{READY}}_2) \\
&\rightarrow f(\textit{WAIT}_1,\textit{\textit{READY}}_2)
\qquad\textrm{if}\ \min(\textit{period}_1-T_1,\textit{cost}_1-\textit{ent}_1*C_1)=0\\
&\rightarrow \textsf{IO}(\textit{req\_ch}_2,p_2)\joinop 
\textsf{traj}(\min(\textit{period}_1-T_1,\textit{cost}_1-\textit{ent}_1*C_1),\vec{p},rdy)\\
&\qquad\joinop\textsf{IO}(\textit{free\_ch}_1,0)\joinop\textsf{IO}(run\_ch_2,0)\joinop
f(\textit{WAIT}_1,\textit{RUN}_2)\\
& \qquad\textrm{if}\ \min(\textit{period}_1-T_1,\textit{cost}_1-\textit{ent}_1*C_1)\le \textit{period}_2-T_2\\
&\rightarrow \textsf{IO}(\textit{req\_ch}_2,p_2)\joinop 
\textsf{traj}(\textit{period}_2-T_2,\vec{p},rdy)
\joinop\textsf{IO}(\textit{exit\_ch}_1,0)\joinop
f(\textit{RUN}_1,\textit{WAIT}_2)\\
& \qquad\textrm{if}\ \min(\textit{period}_1-T_1,\textit{cost}_1-\textit{ent}_1*C_1)\ge \textit{period}_2-T_2\\
&f(\textit{RUN}_1,\textit{RUN}_2)  \rightarrow \textsf{FALSE}
\end{array} \] }
where $\rdy = \{run\_ch_2?, pr\_ch_1?, req\_ch_1?, req\_ch_2?, free\_ch_1?, free\_ch_2?, exit\_ch_1?, exit\_ch_2?\}$.
The first case states that when $\textit{module1}$ is in $\textit{\textit{READY}}$  and $\textit{module2}$ is in $\textit{RUN}$, the operation of $\textit{module2}$ is immediately interrupted by $\textit{module1}$.
The second case states that when $\textit{module1}$ is in $\textit{RUN}$ and $\textit{module2}$ is in $\textit{\textit{READY}}$, $\textit{module2}$ cannot enter $\textit{RUN}$ until the former is finished. The last case states that two modules can never run at the same time, as is required for the scheduler.
The overall theorem is the following Hoare triple proved in Isabelle (where $A_i=\{\textit{req\_ch}_i,\textit{free\_ch}_i,\textit{exit\_ch}_i,\textit{run\_ch}_i,\textit{pr\_ch}_i\}$):
{  \begin{align*} 
	\{ & (L=[] \wedge \textit{ri} = -1 \wedge \textit{rp} = -1)\uplus \textit{state1} = \textit{WAIT} \uplus \textit{state2} = \textit{WAIT} \wedge \textsf{emp}\} 
	\\
	& \quad 
	\quad 
 (\mathit{scheduler} \|_{A_1} \mathit{module1}) \|_{A_2} \mathit{module2}
 \\
 & \{  f(\textit{WAIT}_1,\textit{WAIT}_2) \}
\end{align*} }
The postcondition $f(\textit{WAIT}_1,\textit{WAIT}_2)$ records the execution trace of the whole system, that starts from an initial state that both modules are in wait states. From the above transition rules held by assertion $f$, it satisfies the safety requirement of the system that  at any time the running module is always with the
highest priority and it is not allowed to have more than one modules in running state simultaneously. 

The proof needs to consider each combination of states produced during execution for the scheduler and modules (54 cases in total), due to the complex control logics. For instance, during synchronization when both processes are waiting, we need to consider three cases depending on the comparison of waiting time on the two sides.  Some global invariants need to be shown during the proof: the values of $T$ and $C$ in the module will not exceed $\textit{period}$ and $\textit{cost}$ respectively, $\textit{ri}$ indeed represents which module is running,
and so on\oomit{$\textit{prior}$ of module 1 does not enter $L$, 
and $\textit{prior}$ of module 2 can only enter $L$ in the $\textit{\textit{READY}}$ state}. Our proof system is able to deal with these complexities, and completes the entire proof in around 11,000 lines of code.

This case study can hardly be proved using the DC-based HHL~\cite{LLQZ10}, as it is not compositional with respect to parallel composition and thus needs to define specific inference rule for each case of parallel composition. Other DC-based logics~\cite{WZG12,GWZZ13} are compositional but too complicated to have any implementation support. 


\ommit{
{  \[  
\begin{array}{ll}
&f(\textit{WAIT}_1,\textit{WAIT}_2) \\
&\rightarrow \textsf{traj}(\textit{period}_1-T_1) \joinop f(\textit{\textit{READY}}_1,\textit{WAIT}_2)\\
& \qquad\textrm{if}\ \textit{period}_1-T_1<\textit{period}_2-T_2\\
&\rightarrow \textsf{traj}(\textit{period}_2-T_2) \joinop f(\textit{WAIT}_1,\textit{\textit{READY}}_2)\\
& \qquad\textrm{if}\ \textit{period}_1-T_1>\textit{period}_2-T_2\\
&\rightarrow \textsf{traj}(\textit{period}_2-T_2) \joinop f(\textit{\textit{READY}}_1,\textit{\textit{READY}}_2)\\
& \qquad\textrm{if}\ \textit{period}_1-T_1=\textit{period}_2-T_2\\
&f(\textit{WAIT}_1,\textit{\textit{READY}}_2) \\
&\rightarrow f(\textit{\textit{READY}}_1,\textit{\textit{READY}}_2)\\
& \qquad\textrm{if}\ \textit{period}_1-T_1=0\\
&\rightarrow \textsf{IO}(req\_ch_2,p2)\joinop \textsf{IO}(run\_ch_2,0)\joinop f(\textit{WAIT}_1,\textit{RUN}_2)\\
& \qquad\textrm{if}\ \textit{period}_1-T_1>0\\
&f(\textit{WAIT}_1,\textit{RUN}_2) \\
&\rightarrow f(\textit{\textit{READY}}_1,\textit{RUN}_2)\\
& \qquad\textrm{if}\ \min(\textit{period}_2-T_2,cost_2-ent_2*C_2)\ge \textit{period}_1-T_1\\
&\rightarrow \textsf{IO}(free\_ch_2,0)\joinop f(\textit{WAIT}_1,\textit{WAIT}_2)\\
& \qquad\textrm{if}\ \min(\textit{period}_2-T_2,cost_2-ent_2*C_2)< \textit{period}_1-T_1
\end{array} \] }

We proved that module1 and module2 never run together at the same time during the transition.

{  \[  
\begin{array}{ll}
&f(\textit{\textit{READY}}_1,\textit{WAIT}_2) \\
&\rightarrow f(\textit{\textit{READY}}_1,\textit{\textit{READY}}_2)\\
& \qquad\textrm{if}\ \textit{period}_2-T_2=0\\
&\rightarrow \textsf{IO}(req\_ch_1,p1)\joinop \textsf{IO}(run\_ch_1,0)\joinop f(\textit{RUN}_1,\textit{WAIT}_2)\\
& \qquad\textrm{if}\ \textit{period}_2-T_2>0\\
&f(\textit{\textit{READY}}_1,\textit{\textit{READY}}_2) \\
&\rightarrow \textsf{IO}(req\_ch_1,p1)\joinop \textsf{IO}(run\_ch_1,0)\joinop f(\textit{RUN}_1,\textit{\textit{READY}}_2)\\
&\rightarrow \textsf{IO}(req\_ch_2,p2)\joinop \textsf{IO}(run\_ch_2,0)\joinop f(\textit{\textit{READY}}_1,\textit{RUN}_2)\\
&f(\textit{\textit{READY}}_1,\textit{RUN}_2) \\
&\rightarrow \textsf{IO}(req\_ch_1,p1)\joinop 
\textsf{IO}(pr\_ch_2,0)\\
&\qquad \joinop\textsf{IO}(run\_ch_1,0)\joinop 
f(\textit{RUN}_1,\textit{\textit{READY}}_2)\\
&\rightarrow \textsf{IO}(req\_ch_1,p1)\joinop 
\textsf{IO}(pr\_ch_2,0)\\
&\qquad \joinop\textsf{IO}(run\_ch_1,0)\joinop 
f(\textit{RUN}_1,\textit{WAIT}_2)\\
& \qquad\textrm{if}\ \min(\textit{period}_2-T_2,cost_2-ent_2*C_2)=0\\
&\rightarrow \textsf{IO}(free\_ch_2,0)\joinop 
f(\textit{\textit{READY}}_1,\textit{WAIT}_2)\\
& \qquad\textrm{if}\ \min(\textit{period}_2-T_2,cost_2-ent_2*C_2)=0
\end{array} \] }

{  \[  
\begin{array}{ll}
&f(\textit{RUN}_1,\textit{WAIT}_2) \\
&\rightarrow f(\textit{RUN}_1,\textit{\textit{READY}}_2)\\
& \qquad\textrm{if}\ \min(\textit{period}_1-T_1,cost_1-ent_1*C_1)\ge \textit{period}_2-T_2\\
&\rightarrow \textsf{IO}(free\_ch_1,0)\joinop f(\textit{WAIT}_1,\textit{WAIT}_2)\\
& \qquad\textrm{if}\ \min(\textit{period}_1-T_1,cost_1-ent_1*C_1)< \textit{period}_2-T_2\\
&f(\textit{RUN}_1,\textit{\textit{READY}}_2) \\
&\rightarrow f(\textit{WAIT}_1,\textit{\textit{READY}}_2)\\
& \qquad\textrm{if}\ \min(\textit{period}_1-T_1,cost_1-ent_1*C_1)=0\\
&\rightarrow \textsf{IO}(req\_ch_2,p_2)\joinop 
\textsf{traj}(\min(\textit{period}_1-T_1,cost_1-ent_1*C_1))\\
&\qquad\joinop\textsf{IO}(free\_ch_1,0)\joinop\textsf{IO}(run\_ch_2,0)\joinop
f(\textit{WAIT}_1,\textit{RUN}_2)\\
& \qquad\textrm{if}\ \min(\textit{period}_1-T_1,cost_1-ent_1*C_1)\le \textit{period}_2-T_2\\
&\rightarrow \textsf{IO}(req\_ch_2,p_2)\joinop 
\textsf{traj}(\textit{period}_2-T_2)\\
&\qquad\joinop\textsf{IO}(exit\_ch_1,0)\joinop
f(\textit{RUN}_1,\textit{WAIT}_2)\\
& \qquad\textrm{if}\ \min(\textit{period}_1-T_1,cost_1-ent_1*C_1)\ge \textit{period}_2-T_2\\
&f(\textit{RUN}_1,\textit{RUN}_2) \\
&\rightarrow \textsf{FALSE}
\end{array} \] }
}
\section{Conclusion} \label{sec:Conclusion}
In this paper, we present a hybrid Hoare logic for reasoning about HCSP processes, which generalizes and simplifies the existing DC-based hybrid Hoare logics,  and prove its soundness, and continuous and discrete relative completeness. 
Finally, we provide an implementation of this logic in Isabelle/HOL and verify
two case studies to illustrate the power and scalability of our logic. 

For future work, we will consider to specify and verify more properties including livelock and total correctness in HHL. 

\bibliographystyle{ACM-Reference-Format}
\bibliography{arxiv}

\end{document}